\theoremstyle{definition}
\newtheorem{prob}{}
\theoremstyle{definition}
\newtheorem{eg}{Example}[section]
\theoremstyle{plain}
\newtheorem{thm}{Theorem}[section]
\newtheorem{lemma}[thm]{Lemma}
\theoremstyle{remark}
\newtheorem*{rem}{Remark}
\newcommand{\R}{\mathbb{R}}
\newcommand{\Exp}{\mathbb{E}}
\newcommand{\Prob}{\mathbb{P}}
\newcommand{\abs}[1]{\left| #1 \right|}
\newcommand{\vecnorm}[1]{\left|\left|#1\right|\right|}
\newcommand{\rd}{\mathrm{d}}
\newcommand{\pd}{\partial}
\numberwithin{equation}{section}
\title{{\LARGE \bfseries Biomembranes report} \\ {\small MASDOC, RSG} \\ }
\author{\itshape{Don Praveen Amarasinghe} \\ \itshape{Andrew Aylwin} \\ \itshape{Pravin Madhavan} \\ \itshape{Chris Pettitt}}
\date{May 13th 2011}
\begin{document}

	\maketitle
	\tableofcontents

	\newpage

	\section{Introduction} \label{sec:Intro}
		Chemotaxis --- the response of cells to chemical changes in their surroundings --- plays a crucial role in the study of organisms. An example of chemotaxis in action is depicted in the remarkable video of a neutrophil (white blood cell) chasing a \emph{Staphylococcus aureus} bacterium (available at \url{http://www.biochemweb.org/neutrophil.shtml}). The goal of this project is to develop a model and simulation for the movement of the neutrophil in this situation.

		The model we consider is a so-called `signal centred' approach. Receptors studded throughout the neutrophil's cell membrane detect the levels of chemoattractants in the surroundings. This includes chemicals secreted by the bacterium nearby. The resulting chemoattractant concentration gradient in the neutrophil's surroundings stimulate the cell's movement. Signalling pathways induce protrusions in the membrane in the direction of increasing chemoattractant levels; hence towards the bacterium. The rest of the cell then follows in that direction through the resulting tension in the cell membrane. It should be noted that the neutrophil may respond to the presence of other cells (such as red blood cells) as obstructions. However, we shall ignore this factor for the purposes of this model.

		The chemotaxis model of Neilson, Mackenzie, Insall and Webb in \cite{Neil10} acts as our starting point. The reaction-diffusion equations for this setup are reproduced below.
		\begin{subequations} \label{eq:Neilson}
			\begin{align}
				\label{eq:Neilson-a} \pd^{\bullet}_{t} a + a \nabla_{\Gamma(t)} \cdot v &= D_{a} \Delta_{\Gamma(t)} a + \frac{s(\frac{a^{2}}{b} + b_{a})}{(s_{c} + c)(1 + a^{2}s_{a})} - r_{a}a.
				\\
				\label{eq:Neilson-b} \pd^{\bullet}_{t} b + b \nabla_{\Gamma(t)} \cdot v &= D_{b} \Delta_{\Gamma(t)} b - r_{b}b + r_{b} \fint_{\Gamma(t)} a\, \rd x.
				\\
				\label{eq:Neilson-c} \pd^{\bullet}_{t} c + c \nabla_{\Gamma(t)} \cdot v &= D_{c} \Delta_{\Gamma(t)} c - r_{c}c + b_{c}a.
			\end{align}
		\end{subequations}
		where
		\begin{itemize}
			\item The cell boundary, $\Gamma(t)$, is a compact, smooth, connected and oriented curve in $\R^{2}$ for each $t \in [0,T]$, and moves with velocity $v = V_{f} \nu$, where $\nu$ is the outward normal to $\Gamma(t)$.
			\item $a$ denotes a local autocatalytic activator (or attractant), with associated decay rate $r_{a}$ and diffusion coefficient $D_{a}$
			\item $b$ denotes a rapidly distributed global inhibitor, with associated decay rate $r_{b}$ and diffusion coefficient $D_{b}$
			\item $c$ denotes a local inhibitor, with associated decay rate $r_{c}$ and diffusion coefficient $D_{c}$
			\item In equation \eqref{eq:Neilson-a}, $s_{a}$ is a saturation coefficient, $s_{c}$ the Michaelis-Menten constant and $b_{a}$ a basal production rate of the activator
			\item $b_{c}$ in equation \eqref{eq:Neilson-c} determines the growth of the local inhibitor $c$ in the presence of the activator $a$
			\item $s$ incorporates the effect of the external signal and random fluctuations. In \cite{Neil10}, $s$ is defined as
			\begin{align} \label{eq:StochasticTerm}
				s(x, t) = r_{a}[ (1 + dr \mathrm{RND}) + R_{0}( 1 + dr \mathrm{RND})],
			\end{align}
			where $dr$ is a positive parameter, $\mathrm{RND} \in (0,1)$ is a uniformly distributed random variable, and $R_{0}$ is a non-constant number defined in \cite{Neil10} related to the local concentration of chemoattractant.
		\end{itemize}
		In addition, the paper couples this system of reaction-diffusion equations to a PDE controlling the movement of the cell membrane. This PDE is given by
		\begin{align} \label{eq:MembraneMoveOriginal}
			u_{t} \cdot \nu = V_{f} - \lambda \kappa
		\end{align}
		where $u$ is the velocity field, $\nu$ the outward unit normal to the cell surface $\Gamma(t)$, $V_{f} = K_{\mathrm{prot}} a$ with $K_{\mathrm{prot}}$ a positive parameter, $\kappa$ the cortical tension term, and $\lambda$ a term to ensure that the area of the cell is controlled: in particular, it is a solution to the non-linear ODE
		\begin{align} \label{eq:MembraneMoveLambdaODE}
			\frac{\rd \lambda}{\rd t} = \frac{\lambda_{0} \lambda \left( A - A_{0} + \frac{\rd A}{\rd t} \right)}{A_{0} \left( \lambda + \lambda_{0} \right)} - \beta \lambda
		\end{align}
		with $\lambda_{0}$ and $\beta$ positive constants, $A(t)$ the area of the cell and $A_{0}$ the initial cell area.
		
		This project proposes a modified version of the above model by incorporating a new model for the neutrophil membrane movement. In Sections~\ref{sec:Model} and~\ref{sec:Numerics}, we formulate the mathematical details of the normalised chemotaxis model and describe numerical methods to simulate the neutrophil movement. Section~\ref{sec:Prob} will employ the use of SDEs in establishing escape probabilities of a bacterium being chased by a neutrophil. Finally in Section~\ref{sec:ModelComparison}, we consider an empirical model by Li \emph{et al.} in \cite{Li2008} to compare the predictions of our model against experimental observations and consider how the neutrophil's search strategy in the absence of a chemoattractant signal improves its chances of finding a bacterium.

	\section{Model Formulation} \label{sec:Model}
		\subsection{Normalisation and Reduction of the Chemotaxis Model} \label{subsec:NormRed}
			In Figure 3 of \cite{Neil10}, simulations of the chemotaxis model suggest that, whilst the concentration of local activator $a$ varies along the neutrophil cell's membrane, the concentration of global inhibitor $b$ remains constant. Furthermore, it appears that the concentration of local inhibitor $c$ follows a similar pattern to that of $a$, but is always a fraction of the value. We want to investigate if the chemotaxis model gives rise to this behaviour. Furthermore, these observations suggest that equation \eqref{eq:Neilson-a} alone determines the dynamics of the model, and we would like to ascertain whether or not this is the case. Our first task is to rescale the reaction-diffusion equations \eqref{eq:Neilson} of the Neilson \emph{et al.} model. By doing this, we will be able to reduce the system down to just one key equation. We will then compare our reduced system with the data obtained through simulations (as described in \cite{Neil10}). We shall use normalised variables $x'$, $t'$, $a'$, $b'$ and $c'$, where
			\begin{align*}
				x = Lx', \quad t = Tt', \quad a = Aa', \quad b = Ab', \quad \mbox{and} \quad c = Ac'
			\end{align*}
			and we denote $\Gamma'$ be a reparametrisation of $\Gamma$ under $t'$ and $x'$.The normalised form of \eqref{eq:Neilson} is then given by
			\begin{subequations} \label{eq:NormNeilson}
				\begin{align}
					\label{eq:NormNeilson-a} \pd^{\bullet}_{t'} a' + a' \nabla_{\Gamma '(t')} \cdot v'
					&= T \left( \frac{D_{a}}{L^{2}} \Delta_{\Gamma '(t')} a' + \frac{s(\frac{a'^{2}}{b'} + \frac{b_{a}}{A})}{(s_{c} + Ac')(1 + A^{2}(a')^{2}s_{a})} - r_{a}a' \right),
					\\
					\label{eq:NormNeilson-b} \pd^{\bullet}_{t'} b' + b' \nabla_{\Gamma '(t')} \cdot v'
					&= T \left( \frac{D_{b}}{L^{2}} \Delta_{\Gamma '(t')} b' - r_{b} \left( b' - \fint_{\Gamma '(t')} a'\, \rd x' \right) \right),
					\\
					\label{eq:NormNeilson-c} \pd^{\bullet}_{t'} c' + c' \nabla_{\Gamma '(t')} \cdot v'
					&= T \left( \frac{D_{c}}{L^{2}} \Delta_{\Gamma '(t')} c' - r_{c}c' + b_{c}a' \right)
				\end{align}
			\end{subequations}
			The constants in these equations have specific values in \cite{Neil10} --- these are reproduced in Table~\ref{tab:EqnConstants}.
			We would like to choose suitable values for $A$, $L$ and $T$ to simplify this system so that the dynamics for $a'$, $b'$ and $c'$ can be summarised in one, non-linear PDE. Since \eqref{eq:NormNeilson-a} appears to be the most complicated, involving both a non-linear part and a random term (the value $s$), we shall try to analyse the other two equations in order to see if we can express $b'$ and $c'$ as functions of $a'$. We can choose:
			\begin{itemize}
				\item $A$ to be the maximum observed value of $a$, which leaves $a'$ roughly of order 1. According to the data of Figure 3 in \cite{Neil10}, we would choose $A = 25$.
				\item $L = \abs{\Gamma(0)} = \int_{\Gamma(0)} \rd x$, so that $\abs{\Gamma '(0)} = 1$, and then we can compare cell lengths on a scale basis. It is not clear from \cite{Neil10} what $L$ should be since the data provided gives details of the cell shape in the later stages of the simulation rather than the initial phase. We guess that a value of $L = 0.5$ will do.
				\item $T$ to scale with the effect of the Laplace-Beltrami operator in \eqref{eq:NormNeilson-a}. Specifically, we would like $\frac{D_{a}T}{L^{2}} = 1$, so $T = \frac{L^{2}}{D_{a}} \approx 6.25 \times 10^{5}$.
			\end{itemize}
			We would like to show the following:
			\begin{thm}[Reduced Reaction-Diffusion System] \label{thm:ReducedSys}
				With the above choices of $A$, $L$ and $T$, the system of equations \eqref{eq:NormNeilson} can be approximately reduced down to
				\begin{subequations} \label{eq:NormNeilsonReduced}
					\begin{align}
						\label{eq:NormNeilsonReduced-a} \pd^{\bullet}_{t'} a' + a' \nabla_{\Gamma '(t')} \cdot v'
						&= \Delta_{\Gamma '(t')} a' + \frac{Ts(\frac{a'^{2}}{b'} + \frac{b_{a}}{A})}{(s_{c} + Ac')(1 + A^{2}(a')^{2}s_{a})} - Tr_{a}a'
						\\
						\label{eq:NormNeilsonReduced-b} b'
						&= \fint_{\Gamma '(t')} a'\, \rd x',
						\\
						\label{eq:NormNeilsonReduced-c} c'
						&= \frac{\widetilde{b_{c}}}{\widetilde{r_{c}}} a' \approx 0.385 a'.
					\end{align}
				\end{subequations}
			\end{thm}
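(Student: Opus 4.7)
The plan is to handle the three components of \eqref{eq:NormNeilsonReduced} separately: equation \eqref{eq:NormNeilsonReduced-a} falls out directly from the rescaling, while \eqref{eq:NormNeilsonReduced-b} and \eqref{eq:NormNeilsonReduced-c} should follow from quasi-steady-state approximations combined with the spatial observations in \cite{Neil10}.

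First I would verify \eqref{eq:NormNeilsonReduced-a}. Since $T$ was chosen precisely so that $\frac{D_{a} T}{L^{2}} = 1$, substituting this identity into \eqref{eq:NormNeilson-a} turns the Laplacian coefficient into unity while leaving the reaction terms multiplied by $T$, which is exactly the assertion.

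For \eqref{eq:NormNeilsonReduced-b}, the strategy is to divide \eqref{eq:NormNeilson-b} through by $T r_{b}$ to obtain
\begin{align*}
\frac{1}{T r_{b}} \left( \pd^{\bullet}_{t'} b' + b' \nabla_{\Gamma'(t')} \cdot v' \right) = \frac{D_{b}}{L^{2} r_{b}} \Delta_{\Gamma'(t')} b' - b' + \fint_{\Gamma'(t')} a' \, \rd x'.
\end{align*}
Using the parameter values in Table~\ref{tab:EqnConstants}, I expect $1/(T r_{b})$ to be negligible (since $T \approx 6.25 \times 10^{5}$), so that the material-derivative and divergence terms on the left drop out to leading order. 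For the diffusion contribution, either the group $D_{b}/(L^{2} r_{b})$ is itself small, or — invoking the numerical observation from Figure~3 of \cite{Neil10} that $b$ is essentially spatially constant along $\Gamma(t)$ — the Laplacian $\Delta_{\Gamma'(t')} b'$ is already approximately zero. Either route collapses the equation to $b' \approx \fint_{\Gamma'(t')} a' \, \rd x'$. Equation \eqref{eq:NormNeilsonReduced-c} then follows by an entirely analogous quasi-steady-state reduction of \eqref{eq:NormNeilson-c}: dividing by $T r_{c}$ annihilates the material derivative, the divergence term, and the Laplacian contribution to leading order, leaving the algebraic balance $c' \approx (b_{c}/r_{c}) a'$. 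Writing $\widetilde{b_{c}} = T b_{c}$ and $\widetilde{r_{c}} = T r_{c}$ preserves the ratio, and inserting the numerical values from Table~\ref{tab:EqnConstants} should yield the claimed factor $\approx 0.385$.

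The main obstacle will be justifying these quasi-steady-state approximations quantitatively, i.e.\ verifying that the dimensionless groups $1/(T r_{b})$, $1/(T r_{c})$, $D_{b}/(L^{2} r_{b})$ and $D_{c}/(L^{2} r_{c})$ really are small for the constants of \cite{Neil10}, so that the leading-order balance captured in \eqref{eq:NormNeilsonReduced-b}--\eqref{eq:NormNeilsonReduced-c} is self-consistent. Since the theorem only claims an \emph{approximate} reduction, part of the justification is necessarily an a posteriori appeal to the simulation evidence in \cite{Neil10} rather than a rigorous asymptotic estimate; making the approximation sharper would require a perturbation expansion in the small parameters above, which I would avoid for this report.
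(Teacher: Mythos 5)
Your handling of \eqref{eq:NormNeilsonReduced-a} and \eqref{eq:NormNeilsonReduced-c} matches the paper: (a) is immediate from the choice $D_{a}T/L^{2}=1$, and for $c'$ the paper likewise divides \eqref{eq:NormNeilson-c} by $\widetilde{r_{c}}=Tr_{c}\approx 8125$ and discards the left-hand side together with the diffusion term, which is legitimate there because $\widetilde{D_{c}}/\widetilde{r_{c}}=D_{c}/(L^{2}r_{c})\approx 8.6\times 10^{-4}$ genuinely is small. The gap is in your argument for \eqref{eq:NormNeilsonReduced-b}. With the constants of Table~\ref{tab:EqnConstants}, the group $D_{b}/(L^{2}r_{b})=\widetilde{D_{b}}/\widetilde{r_{b}}$ is of order $10^{2}$--$10^{3}$, so the first branch of your dichotomy (``the group $D_{b}/(L^{2}r_{b})$ is itself small'') fails: the Laplacian is the \emph{dominant} term in that equation and cannot be dropped by a smallness argument. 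Your fallback --- setting $\Delta_{\Gamma'(t')}b'\approx 0$ because Figure~3 of \cite{Neil10} shows $b$ spatially constant --- is circular for the purposes of this theorem, whose stated aim in Section~\ref{subsec:NormRed} is precisely to show that the model \emph{produces} that constancy rather than to assume it from the simulations.

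The paper normalises the other way: it divides \eqref{eq:NormNeilson-b} by the large coefficient $\widetilde{D_{b}}=D_{b}T/L^{2}$, so that only the material-derivative and divergence terms acquire the small prefactor $\varepsilon=1/\widetilde{D_{b}}$, while the full elliptic equation $-\Delta_{\Gamma'(t')}b'+\frac{\widetilde{r_{b}}}{\widetilde{D_{b}}}\bigl(b'-\fint_{\Gamma'(t')}a'\,\rd x'\bigr)=0$ is retained. Lemma~\ref{lemma:bEqnApprox} (an energy estimate combined with Gr\"onwall's inequality) is exactly the quantitative justification of this quasi-steady-state step that you flag as ``the main obstacle'' but do not supply. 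The identification $b'=\fint_{\Gamma'(t')}a'\,\rd x'$ is then obtained not from an algebraic balance but by writing the elliptic equation in variational form, invoking Lax--Milgram for uniqueness, and observing that the constant $\fint a'$ is a solution, hence the unique one. If you wish to salvage your route for $b'$, the correct mechanism is that the \emph{large} diffusion coefficient forces $b'$ towards a spatial constant, after which integrating the equation over the closed curve (which annihilates the Laplacian) gives $\fint b'=\fint a'$; but that is the paper's argument in disguise, not a quasi-steady-state reduction in which diffusion is negligible.
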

			We require a few results to prove this. The first of these is the Transport Identity \cite{Elli05}.
			\begin{align} \label{eq:Transport}
				\frac{\rd}{\rd t}\int_{\Gamma(t)} f = \int_{\Gamma(t)} \pd^{\bullet}_{t} f + f \nabla_{\Gamma(t)} \cdot v
			\end{align}
			The second is an approximation result.
			\begin{lemma} \label{lemma:bEqnApprox}
				For $\varepsilon > 0$, let $b(\cdot,t) , b_{\epsilon}(\cdot,t) : \Gamma(t) \rightarrow \R$ be $L^{2}$ functions satisfying
				\begin{subequations} \label{eq:NormNeilson-bDeriv1}
					\begin{align}
						\label{eq:NormNeilson-bDeriv1a} -\Delta_{\Gamma(t)} b_{\varepsilon} + C_{b} (b_{\varepsilon} - \overline{a}) 
						&= -\varepsilon \left( \pd^{\bullet}_{t} b_{\varepsilon} + b_{\varepsilon} \nabla_{\Gamma(t)} \cdot v \right)
						\\
						\label{eq:NormNeilson-bDeriv1b} -\Delta_{\Gamma(t)} b + C_{b} (b - \overline{a})
						&= 0
					\end{align}
				\end{subequations}
				Assume that there exist constants $C_{v}, C_{d}, C_{l} < \infty$ such that
				\begin{itemize}
					\item $\vecnorm{\nabla_{\Gamma} \cdot v}_{L^{\infty}} \leq C_{V}$
					\item $\vecnorm{\pd^{\bullet}_{t} b + b \nabla_{\Gamma(t)} \cdot v}_{L^{\infty}} \leq C_{D}$
					\item $\abs{\Gamma(t)} \leq C_{L}^{2}, \, \forall t \geq 0$
				\end{itemize}
				Then, for small enough $\varepsilon$, $b_{\varepsilon}$ tends to $b$ in $L^{2}$ as $\varepsilon \rightarrow 0$.
			\end{lemma}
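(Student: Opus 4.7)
The plan is to set $w_{\varepsilon} := b_{\varepsilon} - b$ and to derive an energy estimate showing $\vecnorm{w_\varepsilon}_{L^2} \to 0$ as $\varepsilon \to 0$. Subtracting \eqref{eq:NormNeilson-bDeriv1b} from \eqref{eq:NormNeilson-bDeriv1a} yields
\begin{align*}
-\Delta_{\Gamma(t)} w_{\varepsilon} + C_{b} w_{\varepsilon} = -\varepsilon \left( \pd^{\bullet}_{t} b_{\varepsilon} + b_{\varepsilon} \nabla_{\Gamma(t)} \cdot v \right).
\end{align*}
The right-hand side mixes the unknown $b_\varepsilon$ with $b$; writing $b_\varepsilon = b + w_\varepsilon$ and moving the $w_\varepsilon$ contribution to the left turns this into an equation of parabolic type for $w_\varepsilon$ with forcing $-\varepsilon(\pd^{\bullet}_{t} b + b \nabla_{\Gamma(t)} \cdot v)$, which is uniformly bounded in $L^\infty(\Gamma(t))$ by $\varepsilon C_D$ by assumption.

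Next, I would test this equation against $w_\varepsilon$ and integrate over $\Gamma(t)$. Integration by parts on the closed curve $\Gamma(t)$ delivers $\vecnorm{\nabla_{\Gamma(t)} w_\varepsilon}_{L^2}^2$ from the Laplace--Beltrami term, while the Transport Identity \eqref{eq:Transport} applied to $f = w_\varepsilon^2 / 2$ gives
\begin{align*}
\int_{\Gamma(t)} w_\varepsilon \, \pd^{\bullet}_{t} w_\varepsilon \, \rd x = \frac{1}{2} \frac{\rd}{\rd t} \vecnorm{w_\varepsilon}_{L^2}^2 - \frac{1}{2} \int_{\Gamma(t)} w_\varepsilon^2 \, \nabla_{\Gamma(t)} \cdot v \, \rd x.
\end{align*}
Combining these identities with $\vecnorm{\nabla_{\Gamma(t)} \cdot v}_{L^\infty} \leq C_V$, Cauchy--Schwarz on the forcing term (using $\abs{\Gamma(t)} \leq C_L^2$), and Young's inequality to absorb a fraction of $C_b \vecnorm{w_\varepsilon}_{L^2}^2$ on the left, one arrives at a differential inequality of the form
\begin{align*}
\frac{\varepsilon}{2} \frac{\rd}{\rd t} \vecnorm{w_\varepsilon}_{L^2}^2 + \left( C_b - \frac{\varepsilon C_V}{2} - \delta \right) \vecnorm{w_\varepsilon}_{L^2}^2 \leq \frac{\varepsilon^2 C_D^2 C_L^2}{4 \delta},
\end{align*}
for any chosen $\delta > 0$.

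Fixing $\delta$ small and then restricting to $\varepsilon$ small enough that $C_b - \frac{\varepsilon C_V}{2} - \delta \geq C_b/4$, Gronwall's inequality delivers a bound of the form $\vecnorm{w_\varepsilon(\cdot, t)}_{L^2}^2 \lesssim \varepsilon + e^{-C_b t / (2\varepsilon)} \vecnorm{w_\varepsilon(\cdot, 0)}_{L^2}^2$, which vanishes as $\varepsilon \to 0$ at each fixed $t > 0$. The main obstacle is arranging the signs in the surface-evolution contributions so that the $\varepsilon C_V$ term is dominated by the reaction constant $C_b$; this is precisely what forces the smallness restriction on $\varepsilon$. A secondary subtlety is the absence of a prescribed initial condition for $b_\varepsilon$: one must either impose a compatibility condition such as $b_\varepsilon(\cdot, 0) = b(\cdot, 0)$, or accept that convergence holds only outside a $O(\varepsilon)$-thick initial layer handled separately via the exponential decay of the transient.
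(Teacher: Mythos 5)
Your proposal is correct and follows essentially the same route as the paper: subtract the two equations, test against the difference $d = b_{\varepsilon} - b$, split the right-hand side by adding and subtracting the $b$-terms, apply the Transport Identity to $d^{2}/2$, and close with Gr\"onwall under a smallness condition on $\varepsilon$. The only divergence is minor --- you control the forcing term with Young's inequality, leaving an $O(\varepsilon^{2})$ inhomogeneity, where the paper absorbs it into the decay coefficient via H\"older, and your remark about the missing initial condition for $b_{\varepsilon}$ (the $O(\varepsilon)$ initial layer) makes explicit a point the paper passes over silently when it lets $\varepsilon \rightarrow 0$ in its final Gr\"onwall bound.
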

			\begin{proof}
				We start by subtracting \eqref{eq:NormNeilson-bDeriv1b} away from \eqref{eq:NormNeilson-bDeriv1a} to get
				\begin{align} \label{eq:NormNeilson-bDeriv2}
					-\Delta_{\Gamma(t)} \left( b_{\varepsilon} - b \right) + C_{b} (b_{\varepsilon} - b) = -\varepsilon \left( \pd^{\bullet}_{t} b_{\varepsilon} + b_{\varepsilon} \nabla_{\Gamma(t)} \cdot v \right).
				\end{align}
				Now let $d := b_{\varepsilon} - b$, multiply \eqref{eq:NormNeilson-bDeriv2} by $d$ and integrate w.r.t. $x$ over $\Gamma(t)$ to obtain
				\begin{align} \label{eq:NormNeilson-bDeriv3}
					\int_{\Gamma(t)} \left( \left| \nabla_{\Gamma(t)} d \right|^{2} + C_{b} d^{2} + \varepsilon d \left( \pd^{\bullet}_{t} b_{\varepsilon} + b_{\varepsilon} \nabla_{\Gamma(t)} \cdot v \right) \right) \rd x = 0.
				\end{align}
				Adding and subtracting $\varepsilon d \left( \pd^{\bullet}_{t} b + b \nabla_{\Gamma(t)} \cdot v \right)$ gives
				\begin{align} \label{eq:NormNeilson-bDeriv4}
					\int_{\Gamma(t)} \left( \left| \nabla_{\Gamma(t)} d \right|^{2} + C_{b} d^{2} + \varepsilon d \left( \pd^{\bullet}_{t} d + d \nabla_{\Gamma(t)} \cdot v + \pd^{\bullet}_{t} b + b \nabla_{\Gamma(t)} \cdot v \right) \right) \rd x = 0.
				\end{align}
				We then use the Transport Identity \eqref{eq:Transport} with $f = \frac{1}{2} d^{2}$ to get
				\begin{align} \label{eq:NormNeilson-bDeriv5}
					\frac{\rd}{\rd t} \int_{\Gamma(t)} \frac{\varepsilon d^{2}}{2} \,\rd x + \int_{\Gamma(t)} \left( \left| \nabla_{\Gamma(t)} d \right|^{2} + C_{b} d^{2} + \frac{\varepsilon d^{2}}{2} \nabla_{\Gamma(t)} \cdot v + \varepsilon d \left( \pd^{\bullet}_{t} b + b \nabla_{\Gamma(t)} \cdot v \right) \right) \rd x = 0.
				\end{align}
				Thus,
				\begin{align*}
					\frac{\rd}{\rd t} \int_{\Gamma(t)} \frac{\varepsilon d^{2}}{2} \,\rd x
					&\leq \frac{\rd}{\rd t} \int_{\Gamma(t)} \frac{\varepsilon d^{2}}{2} \,\rd x + \int_{\Gamma(t)} \left| \nabla_{\Gamma(t)} d \right|^{2}
					\\
					&= -\int_{\Gamma(t)} \left( C_{b} d^{2} + \frac{\varepsilon d^{2}}{2} \nabla_{\Gamma(t)} \cdot v + \varepsilon d \left( \pd^{\bullet}_{t} b + b \nabla_{\Gamma(t)} \cdot v \right) \right) \rd x
					\\
					&\leq \left( -C_{b} + \frac{\varepsilon}{2} \vecnorm{\nabla_{\Gamma} \cdot v}_{L^{\infty}} \right) \int_{\Gamma} d^{2} \rd x + \varepsilon \vecnorm{\pd^{\bullet}_{t} b + b \nabla_{\Gamma(t)} \cdot v}_{L^{\infty}} \int_{\Gamma} \abs{d} \rd x
					\\
					&\leq \left( -C_{b} + \frac{\varepsilon}{2} C_{V} \right) \int_{\Gamma} d^{2} \rd x + \varepsilon C_{D} \int_{\Gamma} \abs{d} \rd x
					\\
					&\leq \left( -C_{b} + \frac{\varepsilon}{2} C_{V} \right) \int_{\Gamma} d^{2} \rd x + \varepsilon C_{D} \sqrt{\abs{\Gamma}} \int_{\Gamma} \abs{d}^{2} \rd x
					\\
					&\leq \left( -C_{b} + \frac{\varepsilon}{2} \left( C_{V} + 2 C_{D} C_{L} \right) \right) \int_{\Gamma} d^{2} \rd x
				\end{align*}
				where we have used H\"older's inequality to derive $\int_{\Gamma} \abs{d} \rd x \leq \sqrt{\abs{\Gamma}} \int_{\Gamma} \abs{d}^{2} \rd x$. Now choose $\varepsilon \leq \frac{C_{b}}{C_{V} + 2 C_{D} C_{L}}$. We then obtain
				\begin{align} \label{eq:NormNeilson-bDeriv6}
					\frac{\rd}{\rd t} \int_{\Gamma(t)} \frac{\varepsilon \abs{d}^{2}}{2} \,\rd x \leq \frac{-C_{b}}{\varepsilon} \int_{\Gamma(t)} \frac{\varepsilon \abs{d}^{2}}{2} \rd x
				\end{align}
				By Gr\"onwall's inequality \cite[\S 1.3]{Verhulst90}, we get
				\begin{align} \label{eq:NormNeilson-bDeriv7}
					\int_{\Gamma(t)} \frac{\varepsilon d^{2}}{2} \,\rd x \leq \left( \int_{\Gamma(0)} \left. \frac{\varepsilon d^{2}}{2} \right\vert_{t = 0} \,\rd x \right) \exp \left( \frac{-C_{b} t}{\varepsilon} \right)
				\end{align}
				Cancel the $\frac{\varepsilon}{2}$ factor in \eqref{eq:NormNeilson-bDeriv7} and then let $\varepsilon \rightarrow 0$ to obtain the result.
			\end{proof}
			We are now ready to show that the reaction-diffusion system can be reduced.
			\begin{proof}[Proof of theorem~\ref{thm:ReducedSys}]
				Rewrite \eqref{eq:NormNeilson-b} into the following form
				\begin{align} \label{eq:NormNeilson-b1}
					\pd^{\bullet}_{t'} b' + b' \nabla_{\Gamma '(t')} \cdot v'
					&= \widetilde{D_{b}} \Delta_{\Gamma '(t')}b' - \widetilde{r_{b}} \left( b' - \fint_{\Gamma '(t')} a'\, \rd x' \right)
				\end{align}
				where $\widetilde{D_{b}} = \frac{D_{b}T}{L^{2}} \approx 5 \times 10^{6}$ and $\widetilde{r_{b}} = T r_{b} \approx 1.875 \times 10^{4}$. Upon dividing by $\widetilde{D_{b}}$, we see that the LHS of these equations becomes very small in comparison to the rest of the terms, and can thus be taken to be zero. We then obtain
				\begin{align} \label{eq:NormNeilson-b2}
					\Delta_{\Gamma '(t')} b' - \frac{\widetilde{r_{b}}}{\widetilde{D_{b}}}\left( b' - \fint_{\Gamma '(t')} a'\, \rd x' \right) = 0.
				\end{align}
				The reasoning behind this is given by Lemma~\ref{lemma:bEqnApprox} --- equations \eqref{eq:NormNeilson-b1} and \eqref{eq:NormNeilson-b2} can be rewritten in the form of equations \eqref{eq:NormNeilson-bDeriv1a} and \eqref{eq:NormNeilson-bDeriv1b} where $b'$ in \eqref{eq:NormNeilson-b1} is taken to be $b_{\varepsilon}$ in \eqref{eq:NormNeilson-bDeriv1a}, $C_{b} = \frac{\widetilde{r_{b}}}{\widetilde{D_{b}}}$, $\varepsilon = \frac{1}{\widetilde{D_{b}}}$ and $\overline{a} = \fint_{\Gamma (t)} a\, \rd x$.

				Now rewrite \eqref{eq:NormNeilson-b2} in variational form to get
				\begin{align} \label{eq:NormNeilson-bVar}
					\int_{\Gamma '(t')} \nabla_{\Gamma '(t')} b' \nabla_{\Gamma '(t')} v + \frac{\widetilde{r_{b}}}{\widetilde{D_{b}}} \int_{\Gamma '(t')} b' v = \frac{\widetilde{r_{b}}}{\widetilde{D_{b}}} \left( \fint_{\Gamma '(t')} a'\, \rd x' \right) \int_{\Gamma '(t')} v
				\end{align}
				with test function $v \in H^{1} (\Gamma ' (t')) =: V$. The LHS of \eqref{eq:NormNeilson-bVar} is a bounded and coercive bilinear form on $V$ and the RHS defines a bounded linear functional on $V$. We can now apply the Lax-Milgram theorem to see that it has a unique solution. Upon choosing
				\begin{align} \label{eq:NormNeilson-bApprox}
					b' = \fint_{\Gamma(t')} a'\, \rd x',
				\end{align}
				we see that it is the unique solution to \eqref{eq:NormNeilson-bVar}. Thus it is an approximation of the value of $b'$ under the assumptions of Lemma~\ref{lemma:bEqnApprox}.

				The approach for \eqref{eq:NormNeilson-c} is similar. Rewrite the equation as
				\begin{align} \label{eq:NormNeilson-c1}
					\pd^{\bullet}_{t'} c' + c' \nabla_{\Gamma '(t')} \cdot v' -  \widetilde{D_{c}} \Delta_{\Gamma '(t')}c' = - \widetilde{r_{c}} c' + \widetilde{b_{c}} a'
				\end{align}
				where $\widetilde{D_{c}} = \frac{D_{c}T}{L^{2}} \approx 7$, $\widetilde{r_{c}} = T r_{c} \approx 8125$ and $\widetilde{b_{c}} = T b_{c} \approx 3125$. Upon dividing by $\widetilde{r_{c}}$, we see that the LHS of these equations becomes very small in comparison to the rest of the terms, and can thus be taken to be zero --- although we do not state an explicit result, we hope that a modified form of Lemma~\ref{lemma:bEqnApprox} to include a bound on the diffusion term $\Delta_{\Gamma '(t')}c'$ can justify this approach. We now have
				\begin{align} \label{eq:NormNeilson-cApprox}
					c' = \frac{\widetilde{b_{c}}}{\widetilde{r_{c}}} a' \approx 0.385 a'.
				\end{align}
			\end{proof}
			We have therefore shown that the model does give rise to the behaviour observed in the simulation results of \cite{Neil10}, and that the system dynamics are contained in the dynamics of $a$.

			\subsubsection*{Variational Formulation} \label{subsubsec:VarForm}
				We now derive a weak formulation of \eqref{eq:NormNeilsonReduced-a}. For notational simplicity, we lose the $'$ and rewrite \eqref{eq:NormNeilsonReduced-a} into the more compact form
				\begin{align} \label{eq:NonVarForm}
					\pd^{\bullet}_{t'} a + a \nabla_{\Gamma(t)} \cdot v = D \Delta_{\Gamma(t)} a + f(a)
				\end{align}
				where $f(a) = T \left( \frac{s(\frac{a^{2}}{b} + \frac{b_{a}}{A})}{(s_{c} + Ac)(1 + A^{2}(a)^{2}s_{a})} - r_{a}a \right)$. Using the Transport Identity \eqref{eq:Transport}, we obtain the following variational formulation:
				\renewcommand{\theprob}{($\mathbf{P}^{\mathrm{a}}_{\mathrm{wk}}$)}
				\begin{prob} \label{prob:1}
					Find $a(\cdot,t) \in V = H^{1}(\mathcal{G}_{T})$ such that for almost every $t \in (0,T)$,
					\begin{align} \label{eq:VarForm1}
						\frac{d}{dt}\int_{\Gamma(t)}a \phi + D \int_{\Gamma(t)} \nabla_{\Gamma(t)} a \nabla_{\Gamma(t)} \phi = \int_{\Gamma(t)}a \dot{\phi} + \int_{\Gamma(t)}f(a) \phi,
					\end{align}
					for every $\phi(\cdot,t) \in V$, where $\mathcal{G}_{T} = \cup_{t \in [0, T]} (\Gamma(t) \times \{ t \})$ and $f(a) = T \left( \frac{s(\frac{a^{2}}{b} + \frac{b_{a}}{A})}{(s_{c} + Ac)(1 + A^{2}(a)^{2}s_{a})} - r_{a}a \right)$.
				\end{prob}

		\subsection{Modelling the Neutrophil Membrane Movement} \label{subsec:ModelMembraneMove}
			Recall the neutrophil membrane movement model of \cite{Neil10} described in Section~\ref{sec:Intro}. One of the problems with this model is finding values of $\lambda$ which have to satisfy \eqref{eq:MembraneMoveLambdaODE}, a non-linear ODE. Therefore, we propose an alternative model that eliminates the $\lambda \kappa$ term and replaces the formula for $V_{f}$ with a mean curvature flow model, given by
			\begin{align} \label{eq:MembraneMoveNew}
				V_{f}(x) = -\varepsilon H(x) + \delta a(x) + \bar{\lambda}.
			\end{align}
			where $\epsilon, \delta$ are small, positive constants, $\bar{\lambda}$ is a Lagrange multiplier which constrains the area of the cell to remain constant and $H(x)$ is the mean curvature at point $x$, where
			\begin{align*}
				H = \nabla_{\Gamma(t)} \cdot \nu(x), \; x \in \Gamma(t)
			\end{align*}
			with the outward normal $\nu$ oriented as shown in Figure~\ref{fig:Orientation}.
			\begin{figure}[h!]
				\centering
				\includegraphics[scale=0.75]{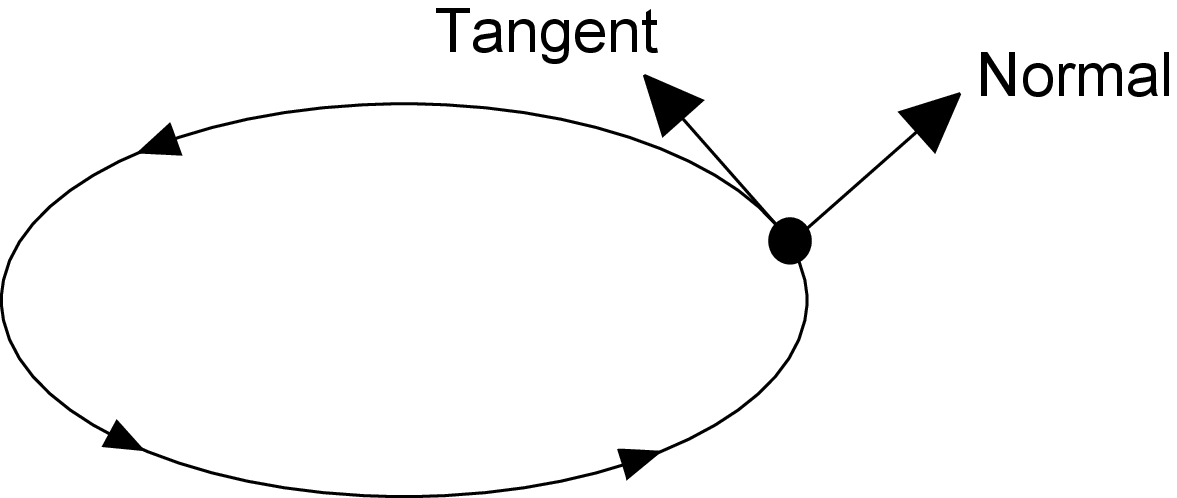}
				\caption{Orientation of curve parametrisation and normal vector.}
				\label{fig:Orientation}
			\end{figure}
			To ensure that the results obtained using this model can be compared to those of \cite{Neil10}, we choose $\delta$ to be of order $K_{prot} \frac{TA}{L}$, in keeping with the normalisation of the reaction-diffusion system. The $\varepsilon H$ term also replaces the cortical tension control, represented by the $\lambda \kappa$ term, in the original mode. We can derive an explicit formula for $\bar{\lambda}$.
			\begin{lemma}[Formula for $\bar{\lambda}$] \label{lemma:LambdaBar}
				\begin{align} \label{eq:LambdaBar}
					\bar{\lambda}(t) = \varepsilon \frac{2 \pi}{\abs{\Gamma(t)}} - \delta b
				\end{align}
				where $b$ is as given in equation \eqref{eq:NormNeilsonReduced-b} without primes.
			\end{lemma}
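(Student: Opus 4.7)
The plan is to derive the stated formula directly from the area-conservation constraint that defines $\bar{\lambda}$. By construction, $\bar{\lambda}$ is chosen so that the area $A(t)$ of the region enclosed by $\Gamma(t)$ is preserved, i.e.\ $\frac{\rd A}{\rd t} = 0$. The first step is to express this rate of change in terms of the normal velocity: applying the transport identity \eqref{eq:Transport} to the indicator of the enclosed region (or equivalently, Reynolds' transport theorem for a domain with moving boundary), one obtains
\begin{align*}
    \frac{\rd A}{\rd t} = \int_{\Gamma(t)} V_{f} \,\rd x,
\end{align*}
since $\Gamma(t)$ moves with normal velocity $V_{f}$.

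Next, I would substitute the new model \eqref{eq:MembraneMoveNew} for $V_{f}$, yielding
\begin{align*}
    0 = \int_{\Gamma(t)} \left( -\varepsilon H(x) + \delta a(x) + \bar{\lambda}(t) \right) \rd x
    = -\varepsilon \int_{\Gamma(t)} H \,\rd x + \delta \int_{\Gamma(t)} a \,\rd x + \bar{\lambda}(t) \abs{\Gamma(t)}.
\end{align*}
The three integrals are now handled separately. For the curvature integral, since $\Gamma(t)$ is a smooth, simple, closed oriented curve in $\R^{2}$, the Gauss--Bonnet theorem (or equivalently the fact that the tangent vector has winding number one) gives $\int_{\Gamma(t)} H \,\rd x = 2\pi$, with the sign consistent with the outward-normal orientation in Figure~\ref{fig:Orientation}. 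For the activator integral, the definition of the average used in Theorem~\ref{thm:ReducedSys} directly yields $\int_{\Gamma(t)} a \,\rd x = \abs{\Gamma(t)} \, b$, where $b = \fint_{\Gamma(t)} a \,\rd x$ as in \eqref{eq:NormNeilsonReduced-b}.

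Combining these and solving the resulting linear equation for $\bar{\lambda}(t)$ gives
\begin{align*}
    \bar{\lambda}(t) = \varepsilon \frac{2\pi}{\abs{\Gamma(t)}} - \delta b,
\end{align*}
as required. The main subtlety is the sign convention for $H = \nabla_{\Gamma(t)} \cdot \nu$ relative to the outward normal: one must verify that the Gauss--Bonnet evaluation produces $+2\pi$ rather than $-2\pi$, which is why the figure showing the orientation of $\nu$ is essential. Everything else is routine manipulation.
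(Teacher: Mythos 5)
Your proposal is correct and follows essentially the same route as the paper: differentiate the enclosed area, use the normal-velocity formula $\frac{\rd}{\rd t}\abs{\Omega(t)} = \int_{\Gamma(t)} V_{f}$, substitute \eqref{eq:MembraneMoveNew}, evaluate $\int_{\Gamma(t)} H = 2\pi$ via Gauss--Bonnet, and identify $\int_{\Gamma(t)} a = \abs{\Gamma(t)}\, b$. Your version is in fact slightly more careful with the sign of the $\delta$ term (the paper's intermediate line has a sign typo that your derivation avoids).
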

			\begin{proof}
				Since we want the area to remain constant over time, observe that
				\begin{align} \label{eq:LambdaBarDeriv1}
					0 = \frac{\rd}{\rd t} \abs{\Omega(t)} = \frac{\rd}{\rd t} \int_{\Omega(t)} 1 = \int_{\Gamma(t)} V_{f} = \int_{\Gamma(t)} -\varepsilon H(x) + \delta a(x) + \bar{\lambda}.
				\end{align}
				Thus $\bar{\lambda} \abs{\Gamma(t)} = \varepsilon \int_{\Gamma(t)} H(x) + \delta \int_{\Gamma(t)} a(x)$. However, the Gauss-Bonnet theorem states that $\int_{\Gamma(t)} H(x) = 2 \pi$ when $\Gamma(t)$ is a closed curve which is not self-intersecting. Substituting this and \eqref{eq:NormNeilsonReduced-b} and rearranging the equation gives the desired result.
			\end{proof}
			Now let $\mathbf{X} \in C^{2}\left( \R \times [0,T], \R^{2} \right)$ be a parametrisation of $\Gamma(t)$. Since $\mathbf{X}_{t} = V_{f} \nu$, we get
			\begin{align*}
				\mathbf{X}_{t} = -\varepsilon H(x) \nu + \delta a(x) \nu + \bar{\lambda} \nu.
			\end{align*}
			Note that $\nu = \left( \frac{\mathbf{X}_{p}}{\abs{\mathbf{X}_{p}}} \right)^{\perp} = \frac{\mathbf{X}_{p}^{\perp}}{\abs{\mathbf{X}_{p}}}$. Furthermore, using equation 2.15 of \cite{Elli05}, we can deduce that
			\begin{align*}
				H \nu = \frac{-1}{\abs{\mathbf{X}_{p}}} \frac{\pd}{\pd p} \left( \frac{\mathbf{X}_{p}}{\abs{\mathbf{X}_{p}}} \right).
			\end{align*}
			We now have
			\begin{subequations} \label{eq:MembraneMoveStrong}
				\begin{align}
					\label{eq:MembraneMoveStrong1} \mathbf{X}_{t} \abs{\mathbf{X}_{p}}
					&= \varepsilon \frac{\pd}{\pd p} \left( \frac{\mathbf{X}_{p}}{\abs{\mathbf{X}_{p}}} \right) + (\delta a + \bar{\lambda}) \mathbf{X}_{p}^{\perp} \quad \mbox{in} \, [0,1] \, \times (0,T)
					\\
					\label{eq:MembraneMoveStrong2} \mathbf{X}(\cdot,0)
					&= \mathbf{X}_{0} \quad \mbox{in} \; [0,1].
				\end{align}
			\end{subequations}
			We also require that $\mathbf{X}$ satisfies the periodicity condition
			\begin{align} \label{eq:MembraneMovePeriod}
				\mathbf{X}(p,t) = \mathbf{X}(p + 1,t), \; p \in \R, \; t \in [0,T].
			\end{align}
			Suppose that $\mathbf{X} : \R \times [0,T] \rightarrow \R^{2}$ is a smooth solution of \eqref{eq:MembraneMoveStrong} and \eqref{eq:MembraneMovePeriod} --- in particular, $\abs{\mathbf{X}_{p}} > 0$ in $[0,1] \times [0,T]$. Taking the dot product  with a test function $\phi \in H^{1}_{per}([0,1]; \R^{2}) := \{ \phi \in H^{1}([0,1]; \R^{2}) \vert \phi(0) = \phi(1) \}$ and integrating with respect to $p$ yields the following parametric variational form for \eqref{eq:MembraneMoveStrong}.
			\renewcommand{\theprob}{($\mathbf{P}^{\mathrm{m}}_{\mathrm{wk}}$)}
			\begin{prob} \label{prob:2}
				Given $a \in H^{1}_{per}([0,1] \times [0,T]; \R)$, find $\mathbf{X} \in H^{1}_{per}([0,1] \times [0,T]; \R^{2})$ such that
				\begin{align} \label{eq:MembraneMoveWeakForm}
					\int_{0}^{1} \left[ \mathbf{X}_{t} \cdot \phi \right] \abs{\mathbf{X}_{p}} + \frac{\varepsilon \mathbf{X}_{p} \cdot \phi_{p}}{\abs{\mathbf{X}_{p}}} \, \rd p = \int_{0}^{1} \left( \delta a + \bar{\lambda} \right) \phi \cdot \mathbf{X}_{p}^{\perp} \, \rd p
				\end{align}
				subject to the area of the cell remaining constant, for all $\phi \in H^{1}_{per}([0,1]; \R^{2})$.
			\end{prob}

	\section{Model Numerics} \label{sec:Numerics}
		We now develop a finite element method to numerically solve the reduced system of reaction-diffusion equations and simulate the movement of the neutrophil using variational problems~\ref{prob:1} and~\ref{prob:2}. The full non-linear, stochastic system will then be solved to simulate the movement of the neutrophil \emph{in the absence of chemoattractant} (i.e. $R_{0} = 0$).
    	\subsection{Reaction-Diffusion PDE} \label{subsec:ReactDiffNumerics}
			\subsubsection{Finite Element Approximation} \label{subsubsec:RDFEM}
				The smooth, evolving surface $\Gamma(t)$ with $\pd \Gamma(t)= \emptyset$ is approximated by an evolving surface $\Gamma_{h}(t)$ with  $\pd \Gamma_{h}(t)= \emptyset$. $\Gamma_{h}(t)$ is a polyhedral surface whose vertices $\{\mathbf{X}_{j}(t)\}_{j = 1}^{N}$ are taken to sit on $\Gamma(t)$ so that $\Gamma_{h}(t)$ is an interpolation. Let $\mathbf{X}^{h}: \R \times [0,T] \rightarrow \mathbb{R}^{2}$ be a smooth parametrisation of $\Gamma_{h}(t)$ with $\abs{\mathbf{X}^{h}_{p}} > 0$, and periodicity condition $\mathbf{X}^{h}(p,t) = \mathbf{X}^{h}(p + 1,t), \, 0 < t \leq T, \, \forall p \in \R$. Then for $F : [0,1] \times [0,T] \rightarrow \mathbb{R}$ we have
				\begin{align*}
					\nabla_{\Gamma_{h}(t)} F(p,t) = \frac{F_{p}(p,t)}{\abs{\mathbf{X}^{h}_{p}(p,t)}} \frac{\mathbf{X}^{h}_{p}(p,t)}{\abs{\mathbf{X}^{h}_{p}(p,t)}}.
				\end{align*}
				This parametrisation will allow us to replace the surface integrals by integrals over the unit interval, thus reducing our numerical analysis to a 1D finite element method.
				Let $p_{j} = jh$, with $j = 0, \dots, N$, be a uniform grid with grid size $h = 1/N$ and define the finite element space
				\begin{align} \label{eq:FinEltSpace}
					V_{h} = \{ \phi \in C^{0}([0,1]; \R) \left\vert \phi \vert_{[p_{j - 1}, p_{j}]} \in P_{1}, j = 1, \dots, N; \phi(0) = \phi(1) \right. \}
				\end{align}    
				to be the space of piecewise linear, continuous functions. We can now define the semi-discrete problem:
				\renewcommand{\theprob}{($\mathbf{P}^{\mathrm{a}}_{\mathrm{h}}$)}
				\begin{prob} \label{prob:3}
					Find $a^{h}(\cdot,t) \in V_{h}$ such that for almost every $t \in (0,T)$,
					\begin{align} \label{eq:VarForm2}
						\frac{\rd}{\rd t} \int_{0}^{1} a^{h} \phi \abs{\mathbf{X}^{h}_{p}} \rd p + D \int_{0}^{1} \frac{a^{h}_{p} \phi_{p}}{\abs{\mathbf{X}^{h}_{p}}} \rd p = \int_{0}^{1} f(a^{h}) \phi \abs{\mathbf{X}^{h}_{p}} \rd p,
					\end{align}
					for every $\phi(\cdot,t) \in V_{h}$.
				\end{prob}
				Now, denoting the nodal basis functions by $\{\phi_{j}\}_{j = 1}^{N}$, and letting
				\begin{align} \label{eq:ahExpression}
					a^{h}(p,t) := a^{h}(\mathbf{X}^{h}(p,t),t) = \sum_{j = 1}^{N} A_{j}(t) \phi_{j}(p) \in V_{h} \subset V
				\end{align}
				where $\dim(V_{h}) = N < \infty$, we can write the finite element approximation of \eqref{eq:VarForm1} as follows.
				\begin{align*}
					\frac{\rd}{\rd t} \sum_{j = 1}^{N} A_{j} \int_{0}^{1} \phi_{j} \phi_{i} \abs{\mathbf{X}_{p}^{h}} \, \rd p
					&+ D \sum_{j = 1}^{N} A_{j} \int_{0}^{1} \frac{\phi_{j,p} \phi_{i,p}}{\abs{\mathbf{X}_{p}^{h}}} \, \rd p
					\\
					&= \int_{0}^{1} f(a^{h}) \phi_{i} \abs{\mathbf{X}_{p}^{h}} \rd p \, , \, i = 1, \dots, N.
				\end{align*}
				 We thus end up with the following system of nonlinear ODEs,
				\begin{align} \label{eq:ODESystem}
					\frac{d}{dt}\Big(\mathbf{M}(t)\mathbf{a}\Big) + D\mathbf{S}(t)\mathbf{a} = \mathbf{M}(t) \mathbf{F}(a^{h},t)
				\end{align}
				where
				\begin{itemize}
					\item $\mathbf{M}$ is the time-dependent mass matrix given by
					\begin{align*}
						\mathbf{M}_{i,j}(t) = \int_{0}^{1} \phi_{i} \phi_{j} \abs{\mathbf{X}_{p}^{h}} \, \rd p \, , \, i,j = 1,...,N.
					\end{align*}
					\item $\mathbf{S}$ is the time-dependent stiffness matrix
					\begin{align}
						\mathbf{S}_{i,j}(t) = \int_{0}^{1} \frac{\phi_{i,p} \phi_{j,p}}{|\mathbf{X}_{p}^{h}|} \, \rd p \, , \, i,j = 1,...,N.
					\end{align}
					\item $\mathbf{a}$ is the time-dependent solution vector
					\begin{align}
						\mathbf{a}(t) = (A_{1}(t),...,A_{N}(t)),
					\end{align}
					\item $\mathbf{F}(t)$ is the time-dependent right-hand side vector
					\begin{align}
						\mathbf{F}_{i}(t) = \int_{0}^{1}f(a^{h})\phi_{i} \abs{\mathbf{X}_{p}^{h}} \, \rd p \, , \, i = 1,...,N.
					\end{align}
				\end{itemize}
				We can solve \eqref{eq:ODESystem} numerically by discretising it in time via the implicit Euler method, as done in \cite{DzEl07}. Let $t_{m} = m \Delta t$. Then we get
				\begin{align}
					\label{eq:RDTimeDisc1} \frac{\mathbf{M}^{m + 1}\mathbf{a}^{m + 1} - \mathbf{M}^{m}\mathbf{a}^{m}}{\Delta t} + \mathbf{S}^{m + 1}\mathbf{a}^{m + 1}
					&= \mathbf{M}^{m} \mathbf{F}^{m}, \, m = 0, \dots, M
					\\
					\label{eq:RDTimeDisc2} \Rightarrow \left( \mathbf{M}^{m+1} + \Delta t \mathbf{S}^{m + 1} \right) a^{m + 1}
					&= \mathbf{M}^{m} \left( \Delta t \mathbf{F}^{m} + \mathbf{a}^{m} \right)
				\end{align}
				where $\mathbf{A}^{m} := \mathbf{A}(t_{m})$. Note that we take $\mathbf{F}$ explicitly in time.

			\subsubsection{Numerical tests} \label{subsubsec:RDNumTests}
				Before we attempt to solve \eqref{eq:NonVarForm} numerically, we test this method in a simple setting ; an oscillating ellipse of the form
				\begin{align}
					\Gamma(t) := \{ \mathbf{x} = (x_{1},x_{2})^{T} \in \mathbb{R}^{2} \, \vert \, \frac{x_{1}^2}{\Gamma(t)} + x_{2}^{2} = R^{2} \}.
				\end{align}
				We consider examples from \cite{EliStin10}.
				\begin{eg}
					On the stationary unit circle $S^{2}$, the function $a_{S}(x,t):=e^{-4t}x_{1}x_{2}$ is a solution to the surface heat equation $\pd_{t}a_{S} - \Delta_{S^{2}}a_{S} = 0$. Parametrising the stationary unit circle and deriving a variational formulation in terms of this parametrisation as was done in the previous section, we solve the resulting problem numerically using the P1 finite element method over the domain $\Omega = [-1,1]$, choosing the final time to be $T = 1$. Figure~\ref{fig:ModelProb1} shows the numerical solution for a spatial discretisation $h=0.1$ and timestep $\Delta t = 0.0102$. Note that for this model problem there is no restriction on our choice of timestep as the scheme is unconditionally stable. Stability and convergence analysis of this scheme can be found in \cite{Thomee06}.
					\begin{figure}[h!]
						\centering
						\includegraphics[height=90mm]{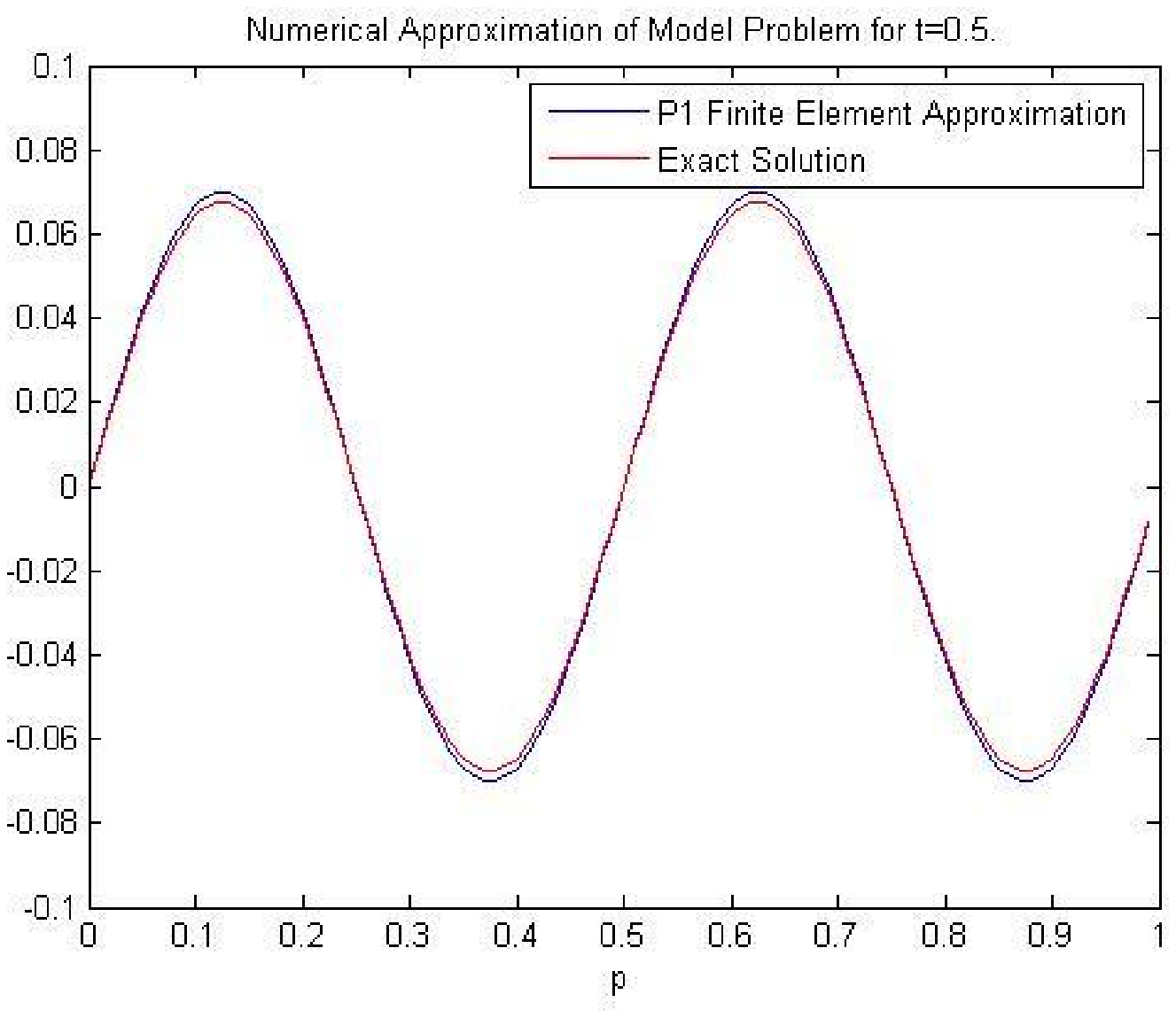}
						\caption{Finite element approximation and exact solution for Model Problem.}
						\label{fig:ModelProb1}
					\end{figure}
					Figure~\ref{fig:ModelProb2} shows the plot of the $L^{2}$ error between the exact solution and its finite element approximation at the final timestep $t = T$ for $h = 10^{-2}$ and $\Delta t = 10^{-4}$.
					\begin{figure}[h!]
						\centering
						\includegraphics[height=90mm]{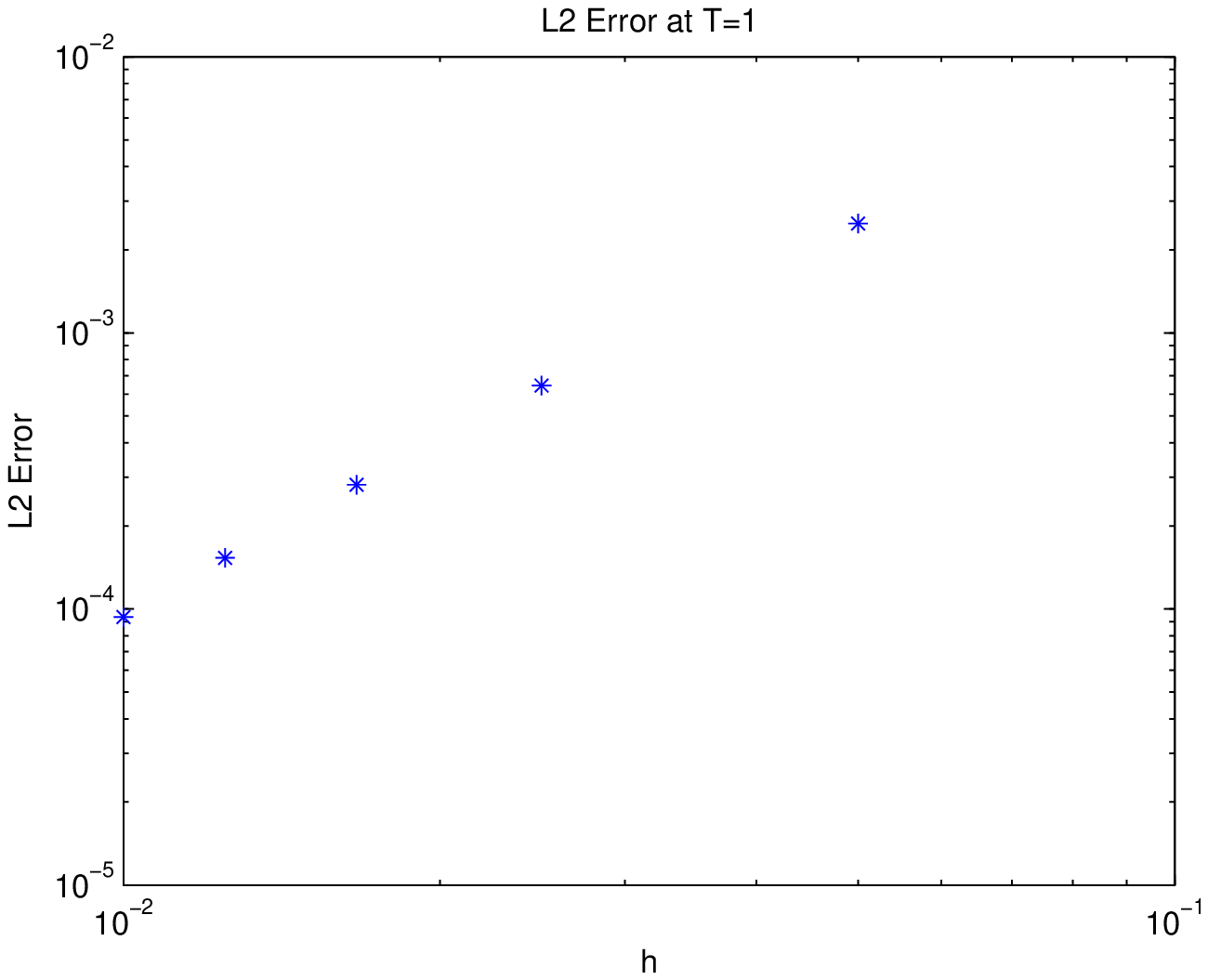}
						\caption{$L^{2}$ error at $t=T$ for Model Problem.}
						\label{fig:ModelProb2}
					\end{figure}
					From \cite{Thomee06}, the implicit Euler finite element approximation of the model problem satisfies the \emph{a priori} error estimate
					\begin{align}
						\max_{1\leq m\leq M} \vecnorm{e_{h}^{m}}_{L^{2}(\Omega)} \leq C(h^{2} + \Delta t).
					\end{align}
					By choosing $\Delta t = O(h^{2})$, as we have done by our choice of discretisation, we expect quadratic convergence in $h$ and this is confirmed by the plot in Figure~\ref{fig:ModelProb2}.
				\end{eg}
				\begin{eg}
					As another test example, we consider an expanding circle centred at the origin with radius $r(t) = 0.75 + 5t$. The velocity field is given by $v(x,t) = \frac{5x}{x}$ and, hence, is purely in the normal direction. The function $a_{E}(x,t):= \exp(\frac{4}{5r(t)})\frac{x_{1}x_{2}}{r(t)|x|^2}$ is a solution to $\pd_{t}a_{E} + v \cdot \nabla a_{E} - \Delta_{\Gamma}a_{E} = 0$.
As for the previous test example, the $L^{2}$ error at the final timestep converges quadratically in $h$.   
				\end{eg}

			\subsubsection{Results} \label{subsubsec:RDResults}
				We now test this numerical method on \eqref{eq:NormNeilson-a} by including the source term
				\begin{align} \label{eq:NumericsSourceTerm}
					f(a) = T \left( \frac{s(\frac{a^{2}}{b} + \frac{b_{a}}{A})}{(s_{c} + Ac)(1 + A^{2}(a)^{2}s_{a})} - r_{a}a \right).
				\end{align} 
				As mentioned earlier in Section~\ref{subsubsec:RDFEM}, we deal with this nonlinear term by taking it explicitly in time. The approximations of $b$ and $c$ (respectively \eqref{eq:NormNeilsonReduced-b} and \eqref{eq:NormNeilsonReduced-c}) derived in Section~\ref{sec:Model} as well as expression \eqref{eq:StochasticTerm} for the stochastic component $s(x,t)$ have been used to compute this source term. The chemical constants present in the source term are taken from Table~\ref{tab:EqnConstants}. Given that we have not yet discussed numerical methods for the curve evolution, we will restrict ourselves to solving this PDE over the stationary circle with radius $R = 1$.
				\begin{figure}[h!]
					\centering
					\includegraphics[height=90mm]{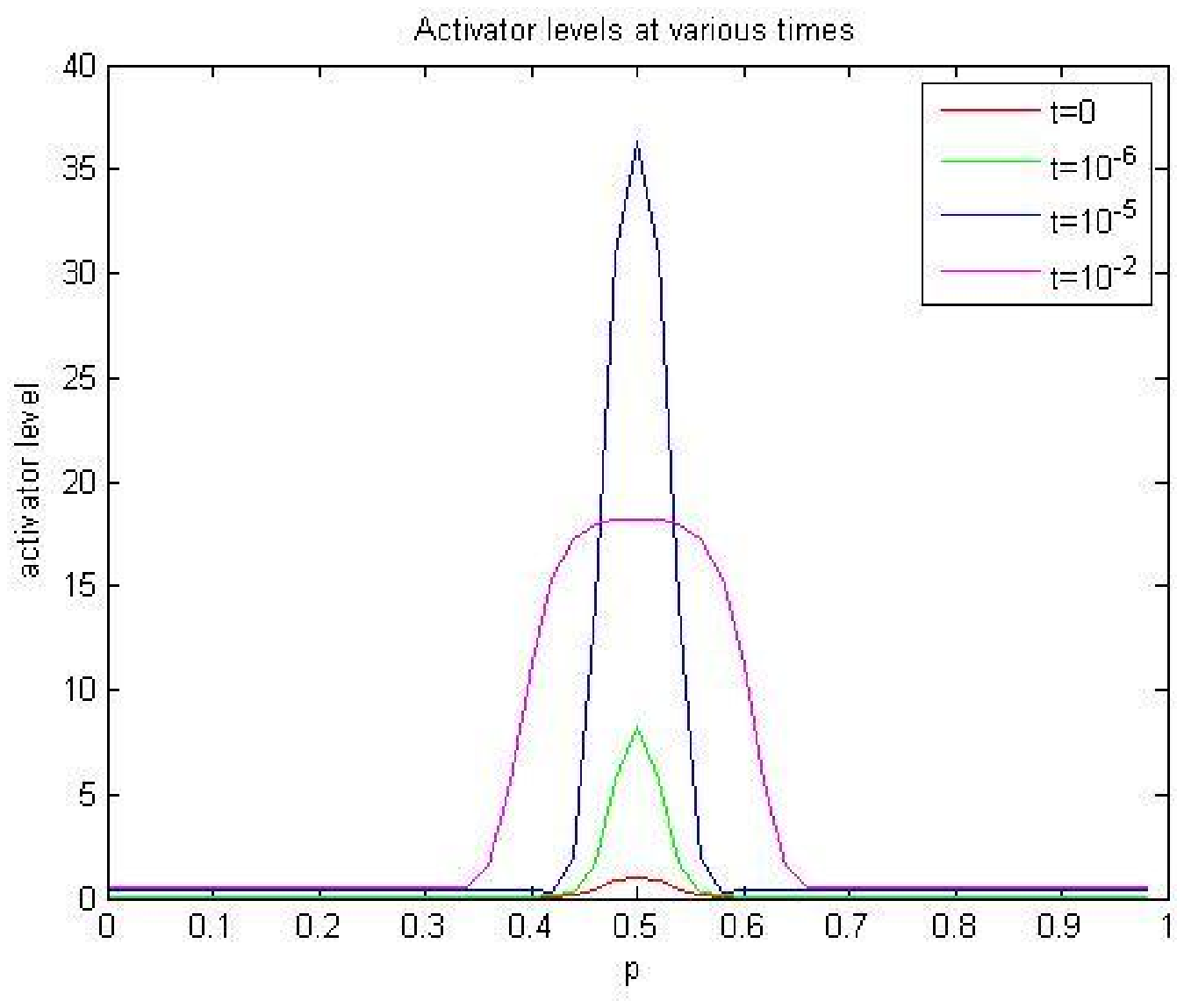}
					\caption{Local activator levels in the simulation of the reaction-diffusion equations}
					\label{fig:RDActivatorResults}
				\end{figure}

				Figure~\ref{fig:RDActivatorResults} plots the activator levels at different (rescaled) times, where we have chosen $h = 10^{-3}$ and $\Delta t = 400 h^{2}$ with final time $T = 10$. The initial condition for $a$ is chosen to be the `bump' shaped function $\exp \left( -\frac{(p - 0.5)^{2}}{0.002} \right)$. Now, given that the first term in \eqref{eq:NumericsSourceTerm} is of order $O \left( \frac{1}{a} \right)$ whilst the second term is of order $O(a)$, we expect that with our choice of initial condition the activator level would increase rapidly due to the effect of the first term. The second term has the opposite effect when $a$ is large enough, by countering the rapid growth induced by the first term. This is exactly the sort of behaviour we observe in Figure~\ref{fig:RDActivatorResults}.

		\subsection{Neutrophil Membrane Movement PDE} \label{subsec:MembraneMoveNumerics}
			\subsubsection{Finite Element Approximation} \label{subsubsec:MMFEM}
				We shall now apply a finite element method to solve \eqref{eq:MembraneMoveWeakForm} numerically. Once this is done, this will be coupled with the reaction-diffusion PDE discussed previously to simulate the entire system. Define the finite element space $\mathbf{V}_{h}$ in the same way as $V_{h}$ in \eqref{eq:FinEltSpace} but with values now in $\R^{2}$ rather than $\R$. Motivated by the approach of \cite[\S 4]{Elli05}, we derive the corresponding semi-discrete problem satisfied by $\mathbf{X}^{h}$. 
				\renewcommand{\theprob}{($\mathbf{P}^{\mathrm{m}}_{\mathrm{h}}$)}
				\begin{prob} \label{prob:4}
					Given $a^{h}(\cdot,t) \in V_{h}$, find $\mathbf{X}^{h} \in \mathbf{V}_{h}$ such that for almost every $t \in (0,T)$,
					\begin{align*}
						\int_{0}^{1} \left[ \mathbf{X}^{h}_{t} \cdot \varphi \right] \abs{\mathbf{X}^{h}_{p}} + \frac{\varepsilon \mathbf{X}^{h}_{p} \cdot \varphi_{p}}{\abs{\mathbf{X}^{h}_{p}}} \, \rd p = \int_{0}^{1} \left( \delta a^{h} + \bar{\lambda}^{h} \right) \varphi \cdot \left( \mathbf{X}^{h}_{p} \right)^{\perp} \, \rd p
					\end{align*}
					for every $\varphi(\cdot,t) \in \mathbf{V}_{h}$, subject to the area of the cell remaining constant, where $\bar{\lambda}^{h}$ is a discretised form of \eqref{eq:LambdaBar}.
				\end{prob}
				As before let $\{\phi_{j}\}_{j = 1}^{N}$  be the standard, piecewise linear, scalar nodal basis. Set 
                \begin{align*}				
				\mathbf{X}^{h}(p,t) = \sum_{j = 1}^{N} \mathbf{X}_{j}(t) \phi_{j}(p) 
				\end{align*}
				where $\mathbf{X}_{j}(t) \in \R^{2}$ and let $\varphi = \phi_{j}\mathbf{e}_{k}$, $k = 1, 2$, $j = 1, \dots, N$. Since $\sum_{j = 1}^{N} \phi_{j} = 1$,
				\begin{align*}
					\delta a^{h} + \bar{\lambda}^{h} = \sum_{j = 1}^{N} \left( \delta a^{h} + \bar{\lambda}^{h} \right) \phi_{j}\mathbf{e}_{k}, \, k = 1, 2.
				\end{align*}
				Hence, for $i = 1, \dots, N$,
				\begin{align*}
				\int_{0}^{1} \left( \left( \delta a^{h} + \bar{\lambda}^{h} \right) \phi_{i} \left( \mathbf{X}^{h}_{p} \right)^{\perp}\right) \cdot \mathbf{e}_{k} \, \rd p
				= \sum_{j = 1}^{N} \left( \delta a^{h} + \bar{\lambda}^{h} \right) \int_{0}^{1} \left( \phi_{i} \phi_{j} \left( \mathbf{X}^{h}_{p} \right)^{\perp}\right) \cdot \mathbf{e}_{k} \, \rd p, \, k = 1, 2.
				\end{align*}
				We then get
				\begin{align*}
					\sum_{j = 1}^{N} \int_{0}^{1} \left( \frac{\rd}{\rd t} \left[ \mathbf{X}_{j} \right] \phi_{j} \phi_{i} \abs{\mathbf{X}^{h}_{p}}\right) \cdot \mathbf{e}_{k} \, \rd p
					&+ \sum_{j = 1}^{N} \mathbf{X}_{j} \int_{0}^{1} \left( \frac{\varepsilon \phi_{j,p} \phi_{i,p}}{\abs{\mathbf{X}^{h}_{p}}} \right) \cdot \mathbf{e}_{k} \, \rd p
					\\
					&= \sum_{j = 1}^{N} \left( \delta a^{h} + \bar{\lambda}^{h} \right) \int_{0}^{1} \left( \phi_{i} \phi_{j} \left( \mathbf{X}^{h}_{p} \right)^{\perp}\right) \cdot \mathbf{e}_{k} \, \rd p,\ k = 1, 2.
				\end{align*}
				As before let $t_{m} = m \Delta t$, $m=0,\dots,M$. The following time discretisation is suggested to derive a numerical scheme for \eqref{eq:MembraneMoveWeakForm}: 
				\begin{align*}
					\sum_{j = 1}^{N} \left( \int_{0}^{1} \left( \frac{\left[ \mathbf{X}_{j}^{m + 1} - \mathbf{X}_{j}^{m} \right]}{\Delta t} \phi_{i} \phi_{j} \abs{\mathbf{X}^{h,m}_{p}} \right) \cdot \mathbf{e}_{k} \, \rd p + \int_{0}^{1} \left( \frac{\varepsilon \mathbf{X}_{j}^{m + 1}}{\abs{\mathbf{X}^{h,m}_{p}}} \left( \phi_{i} \right)_{p} \left( \phi_{j} \right)_{p} \right) \cdot \mathbf{e}_{k} \, \rd p \right)
					\\
					= \sum_{j = 1}^{N} \left( \delta a^{m} + \bar{\lambda}^{m} \right) \int_{0}^{1} \left( \phi_{i} \phi_{j} \left( \mathbf{X}_{p}^{h, m} \right)^{\perp} \right) \cdot \mathbf{e}_{k} \, \rd p, \, k = 1, 2.
				\end{align*}
				Note that we take the Lagrangian $\bar{\lambda}^{h}$ explicitly in time. Now set $\mathbf{X}_{j}(t_{m}) = \left( x_{j}(t_{m}), y_{j}(t_{m}) \right)$, and define, for $m = 1, \dots, M$, matrices $\widetilde{\mathbf{M}}_{\mathbf{x}}$ and $\widetilde{\mathbf{M}}_{\mathbf{y}}$ by
				\begin{align*}
					\left( \widetilde{\mathbf{M}}_{\mathbf{x}} \right)_{i,j}(t_{m})
					&= \int_{0}^{1} \phi_{i} \phi_{j} \left( \mathbf{X}_{p}^{h, m} \right)^{\perp} \cdot \mathbf{e}_{1} \, \rd p,
					\\
					\left( \widetilde{\mathbf{M}}_{\mathbf{y}} \right)_{i,j}(t_{m})
					&= \int_{0}^{1} \phi_{i} \phi_{j} \left( \mathbf{X}_{p}^{h, m} \right)^{\perp} \cdot \mathbf{e}_{2} \, \rd p,
				\end{align*}
				where $\mathbf{x}^{m} = (x_{1}(t_{m}), \dots, x_{N}(t_{m}))^{\mathrm{T}}$ and $\mathbf{y}^{m} = (y_{1}(t_{m}), \dots, y_{N}(t_{m}))^{\mathrm{T}}$. We thus obtain two decoupled systems of equations --- one for each component of $\mathbf{X}$.
				\begin{align}
					\label{eq:DiscreteCurveFEM1} \mathbf{M}^{m} \mathbf{x}^{m + 1} + \varepsilon \Delta t \mathbf{S}^{m} \mathbf{x}^{m + 1} = \mathbf{M}^{m} \mathbf{x}^{m} + \Delta t \widetilde{\mathbf{M}}_{x}^{m} \left( \delta a^{m} + \bar{\lambda}^{m} \mathbf{1} \right)
					\\
					\label{eq:DiscreteCurveFEM2} \mathbf{M}^{m} \mathbf{y}^{m + 1} + \varepsilon \Delta t \mathbf{S}^{m} \mathbf{y}^{m + 1} = \mathbf{M}^{m} \mathbf{y}^{m} + \Delta t \widetilde{\mathbf{M}}_{y}^{m} \left( \delta a^{m} + \bar{\lambda}^{m} \mathbf{1} \right)
				\end{align}
				where we denote $\mathbf{A}^{m}$ to mean $\mathbf{A}(t_{m})$ and $\mathbf{1} = (1, \dots, 1)^{\mathrm{T}}$.

			\subsubsection{Numerical tests} \label{subsubsec:MMNumTests}
				As was done for the reaction-diffusion PDE, we test this numerical method in a simpler setting before attempting to solve the original problem. 
\begin{eg} Consider a ball in $\mathbb{R}^2$ of radius $R(t)$ in a supercooled liquid of temperature $U < 0$ satisfying the following equation for its surface,
				\begin{align*}
					V = -\varepsilon H - U
				\end{align*}
				which is a \emph{Gibbs-Thomson} law (surface tension $-\varepsilon H$ with $\varepsilon>0$, and balancing temperature U) modified by kinetic undercooling. Then it can be shown that a ball with initial radius $R_{I}$ shrinks or grows depending on whether depending on whether its initial radius is respectively bigger or smaller than the critical radius $R_{c} = -\frac{\varepsilon}{U}$. Deriving a numerical method for the Gibbs-Thomson law is done in a very similar way to our original curve evolution model.
				\begin{figure}[h!]
					\centering
					\includegraphics[height=90mm]{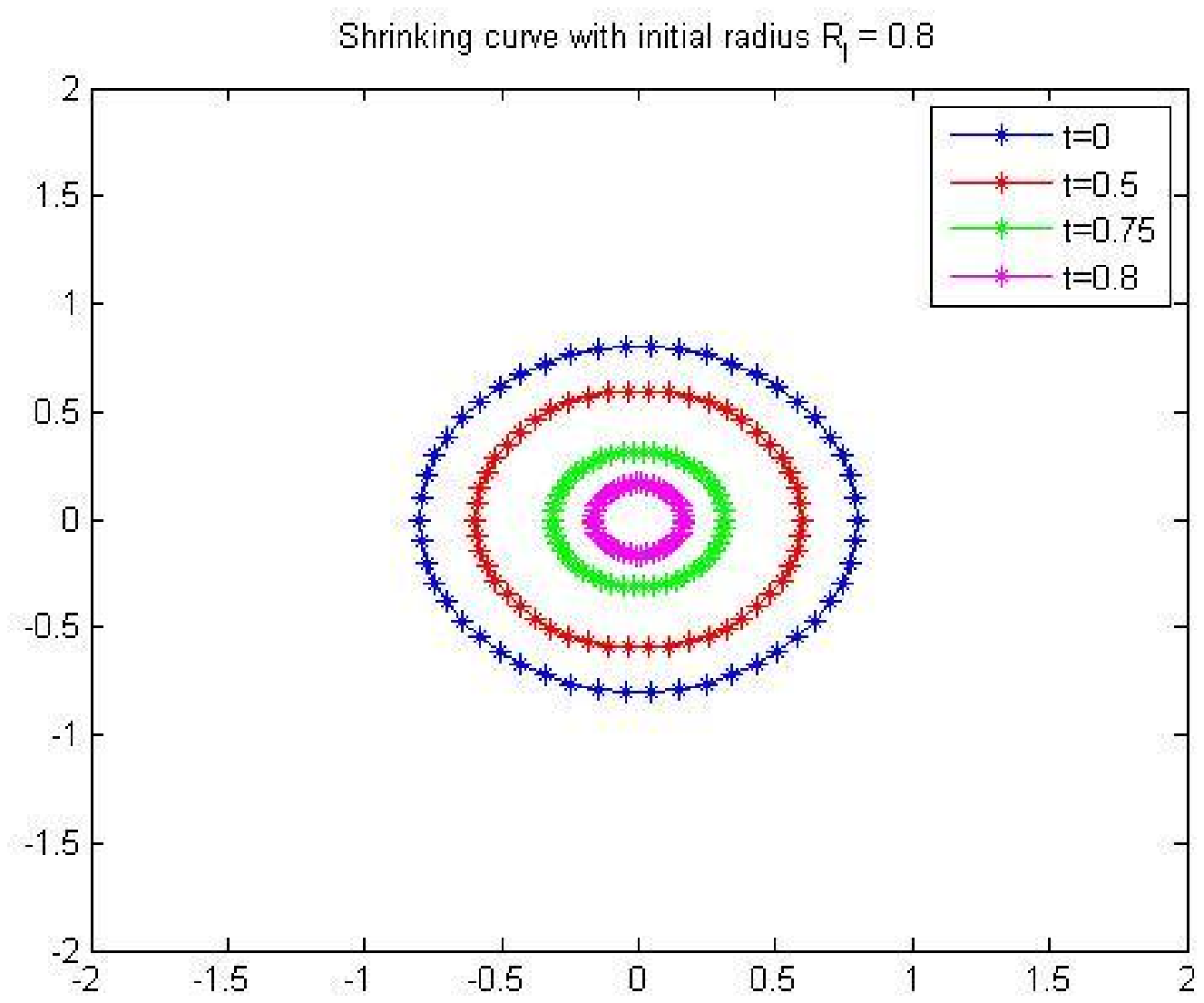}
					\caption{Curve evolution under Gibbs-Thomson law for $R_{I} = 0.8$.}
					\label{fig:curveShrink}
				\end{figure}
				\begin{figure}[h!]
					\centering
					\includegraphics[height=90mm]{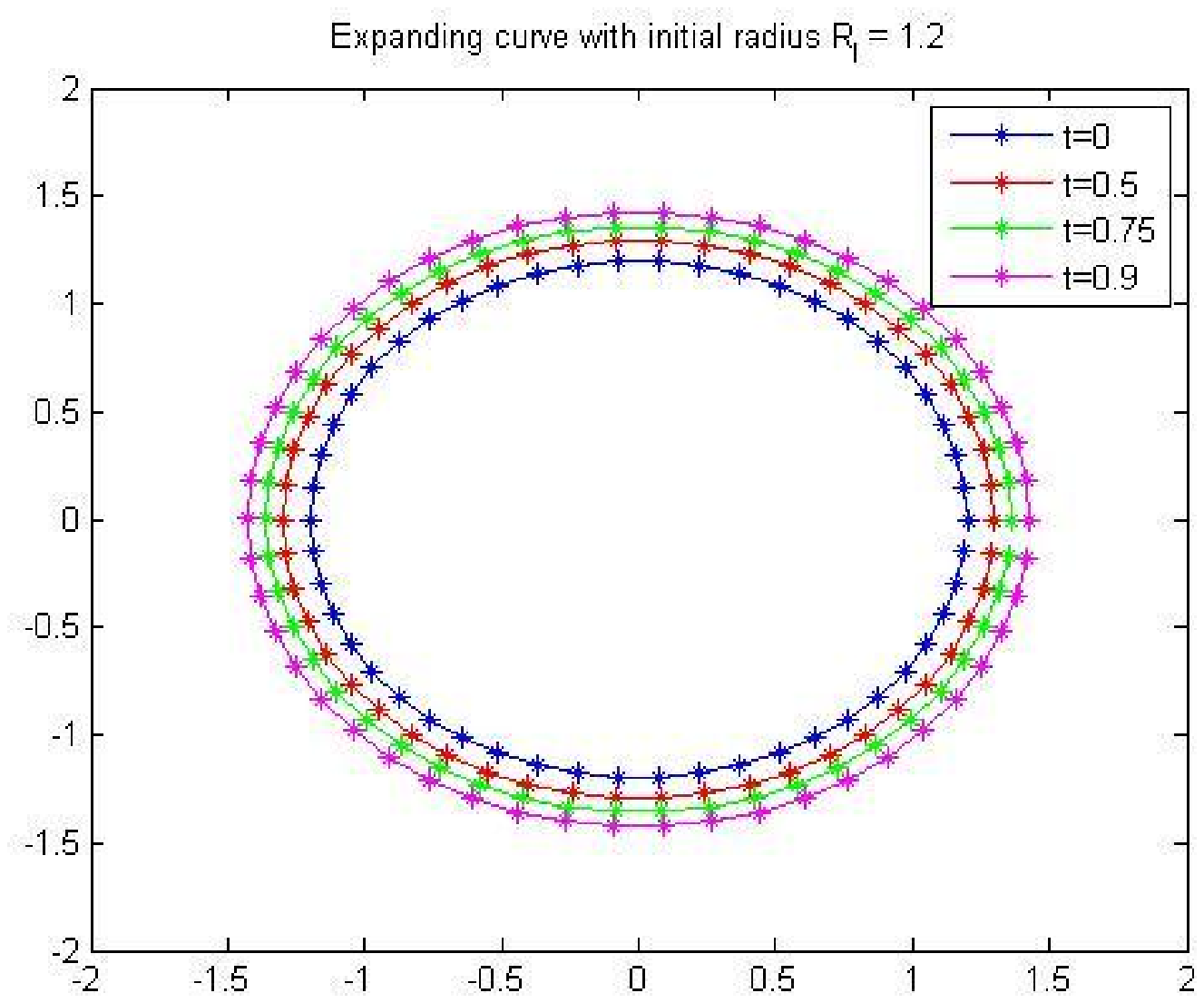}
					\caption{Curve evolution under Gibbs-Thomson law for $R_{I} = 1.2$.}
					\label{fig:curveExpand}
				\end{figure}

				Figure~\ref{fig:curveShrink} and Figure~\ref{fig:curveExpand} show the curve evolution under the Gibbs-Thomson law for $\epsilon = 1$ and $U=-1$ (and so $R_{c} = 1$) with initial radii $R_{I} = 0.8$ and $R_{I} = 1.2$ respectively. As expected, the curve shrinks for the former and expands for the latter.
				\end{eg}

			\subsubsection{Results} \label{subsubsec:MMResults}
				We now test this numerical method on \eqref{eq:MembraneMoveNew} with $a = 0$. The full system, including the coupling of the curve evolution with the reaction-diffusion PDE, will be dealt with in the next section. 
				\begin{figure}[h!]
					\centering
					\includegraphics[height=90mm]{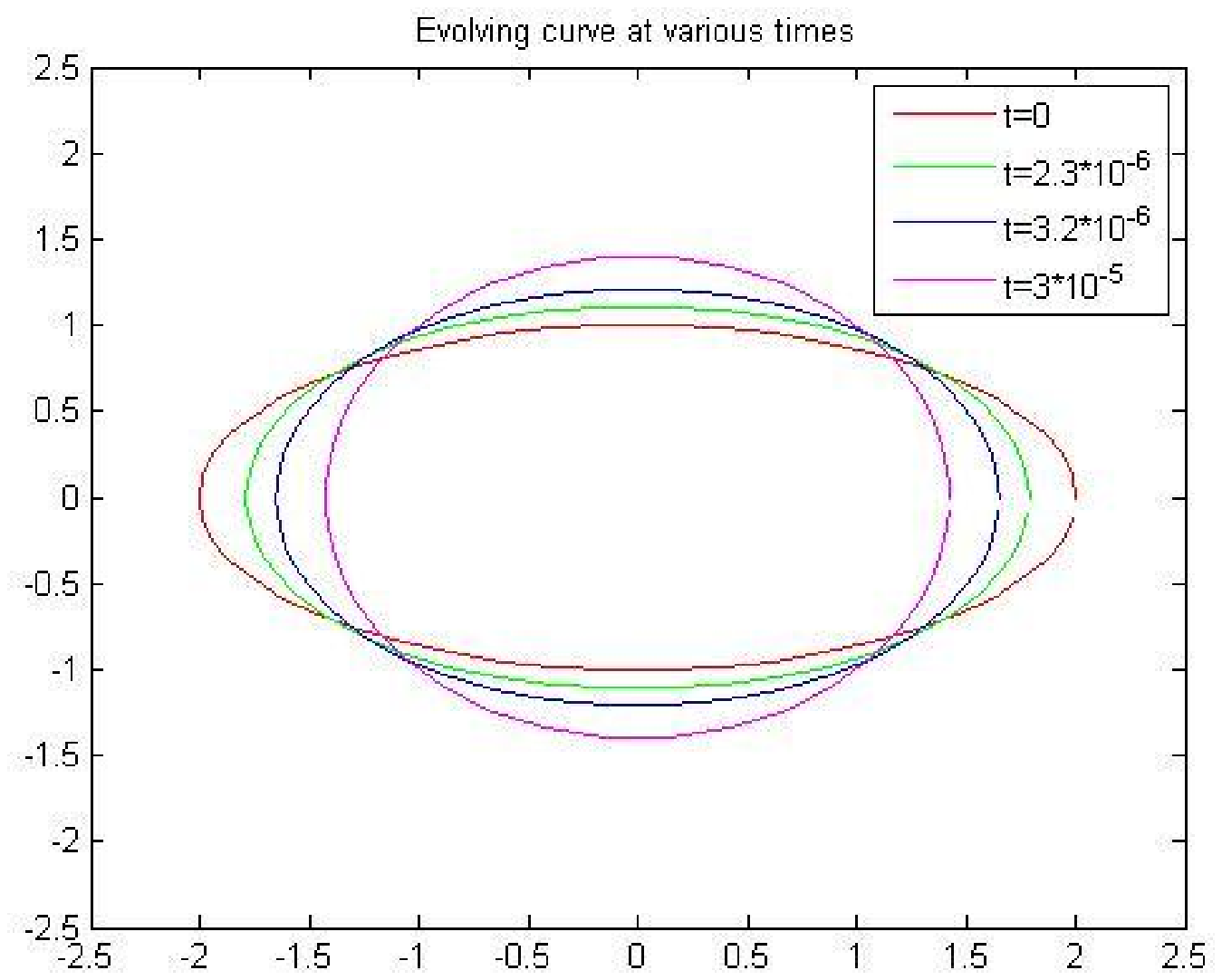}
					\caption{Curve evolution under \eqref{eq:MembraneMoveNew} given $a = 0$.}
					\label{fig:ellipse2Circle}
				\end{figure}
				Figure~\ref{fig:ellipse2Circle} plots the curve evolution at different (rescaled) times for an initial curve given by the ellipse
				\begin{align*}
					\Gamma^{0} := \{ \textbf{x} = (x_{1},x_{2})^{T} \in \mathbb{R}^2\ | \frac{x_{1}^2}{4} + x_{2}^2 = 1 \}
				\end{align*}
				where have chosen $h = 0.02$, $\Delta t = 100h^{2}$ and $\varepsilon = 1$ with final time $T = 10$. The initial ellipse should theoretically evolve into a circle with the same area. This provides a natural way to test the accuracy of the numerical method. We can calculate the area numerically as follows: denote by $\abs{\Omega_{h}^{m}}$ the area of the interior of the interpolated curve at timestep $m$, then we have
				\begin{align*}
					\abs{\Omega_{h}^{0}} = \abs{\Omega_{h}^{m}} = \int_{\Omega_{h}^{m}}1 = \int_{\Gamma_{h}^{m}} \frac{1}{2} \mathbf{X}^{h,m} \cdot \nu^{h,m}, \, m = 0, \dots, M,
				\end{align*}    
				where $\nu^{h,m} = \frac{(\mathbf{X}^{h,m})^{\perp}}{\abs{(\mathbf{X}^{h,m})^{\perp}}}$ is the corresponding outward normal. The area of the initial curve is given explicitly by $\abs{\Omega^{0}} = 2\pi \approx 6.2832$. The numerically computed areas are given as follows:   
				\begin{align*}
					\abs{\Omega_{h}^{0}} = 6.2832, \, \abs{\Omega_{h}^{8}} = 6.2999, \, \abs{\Omega_{h}^{16}} = 6.3117, \, \abs{\Omega_{h}^{75}} = 6.2941,
				\end{align*}
				where the superscript values are the values of $m$ corresponding to the times shown in Figure~\ref{fig:ellipse2Circle}. The variation in the computed numerical areas is due to the scheme we chose to consider, which took the area constraining Lagrangian term $\bar{\lambda}^{h}$ explicitly in time.

		\subsection{Full Reaction-Diffusion System} \label{subsec:FullResultsNumerics}
			Having described numerical methods for both the reaction-diffusion PDE and the curve evolution, we can now couple the two to simulate the entire system: the autocatalytic attractant $a_{h}$ is computed using \eqref{eq:RDTimeDisc2} and then taken explicitly in time in \eqref{eq:DiscreteCurveFEM1} and \eqref{eq:DiscreteCurveFEM2} to evolve $\Gamma_{h}(t)$.

			Figure~\ref{fig:NoRemesh} illustrates the progression of the neutrophil under the full system at various times (with corresponding activator levels on the right), where we have chosen $h = 5 \times 10^{-3}$, $\Delta t = O(h^{2})$, $\delta = 312.5$ and $\varepsilon = 10^{-6}$ with final time $T = 10$. The initial curve is chosen to be the unit circle and the initial condition for the activator is given by $a_{0} = 20 \exp \left( \frac{-(p - 0.5)^{2}}{0.0002} \right)$ (which acts as an initial impulse on the left-hand side of the neutrophil). The model constants are as in Table~\ref{tab:EqnConstants}.
			\begin{figure}[h!]
				\centering
				\includegraphics[height=110mm]{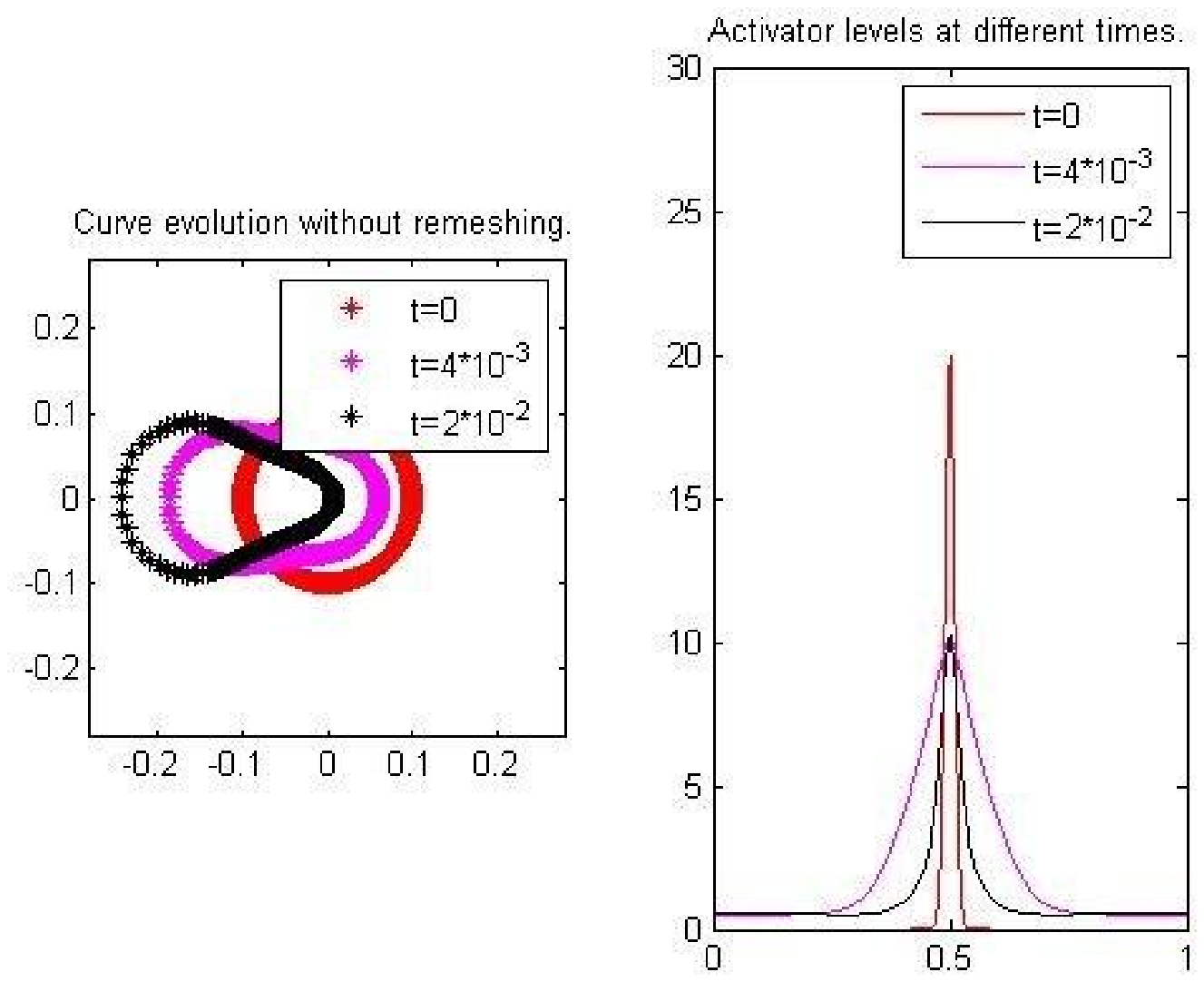}
				\caption{Membrane progression and local activator concentration levels simulated by the full system.}
				\label{fig:NoRemesh}
			\end{figure}

			The result is the formation of a protrusion in the direction of the initial impulse, driving the neutrophil to move in the same direction. However, as the figure shows, the advection of the neutrophil results in a progressive clustering of the points which will lead to an ill-conditioned problem. Our solver will thus have to include a remeshing step. 

			\subsubsection{Remeshing}
				The procedure for mapping $\mathbf{X}_{i}$ to the remeshed point $\mathbf{X}_{i}^{'}$ is given as follows: 
				\begin{align*}
					\mathbf{X}_{i}^{'} = \frac{\mathbf{X}_{i - 1}+\mathbf{X}_{i + 1}}{2} + ((\mathbf{X}_{i}-\mathbf{X}_{i - 1})\cdot \mathbf{n}_{i})\mathbf{n}_{i},\, i = 1, \dots, N,
				\end{align*}
				where $\mathbf{n}_{i}$ is the outward normal of the segment $[\mathbf{X}_{i - 1},\mathbf{X}_{i}]$. Figure~\ref{fig:remesh} illustrates this procedure. It is important to note that this remeshing step is area-preserving, which we require to ensure that the cell area remains constant.
				\begin{figure}[h!]
					\centering
					\includegraphics[height=90mm]{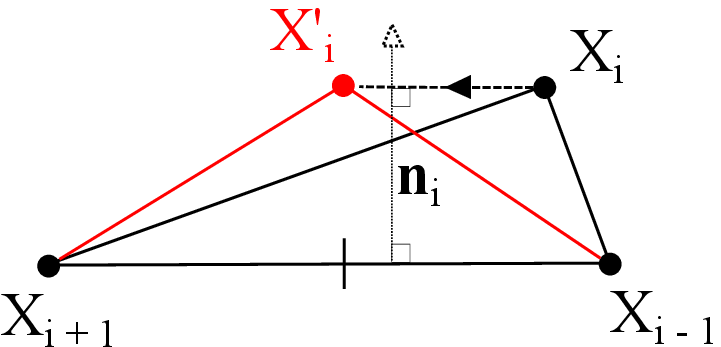}
					\caption{Illustration of the remeshing procedure.}
					\label{fig:remesh}
				\end{figure}

				This process alone creates an artificial, tangential transport component $v_{\tau}^{h}$ that has to be counter-balanced in the activator PDE with an advection term in the opposite direction, given by the matrix
				\begin{align*}
					\mathbf{B}_{i,j} = \int_{0}^{1} \Delta t (v')^{h}(p) \cdot \frac{\mathbf{X}_{p}^{'}}{\abs{\mathbf{X}_{p}^{'}}} \, \chi_{i,p} \, \chi_{j} \, \rd p, \; i,j = 1, \dots, N
				\end{align*}
				with $(v')^{h} = v^{h} + v_{\tau}^{h}$, where $v^{h}$ is the normal velocity of $\Gamma_{h}$. We are thus required to solve the new system
				\begin{align*}
					(\mathbf{M}^{m + 1} + \mathbf{B}^{m + 1} + \Delta t \mathbf{S}^{m + 1}) a^{m + 1} = M^{m}(\Delta t \mathbf{F}^{m} + \mathbf{a}^{m}).
				\end{align*}
				This allows us to adapt the activator values accordingly when the remeshing is performed. Figure~\ref{fig:WithRemesh} shows the effects of the remeshing on the progression of the neutrophil under the full system at various times (with corresponding activator levels on the right), where we have chosen the same discretisation/parameter values and initial conditions as before.
				\begin{figure}[h!]
					\centering
					\includegraphics[height=110mm]{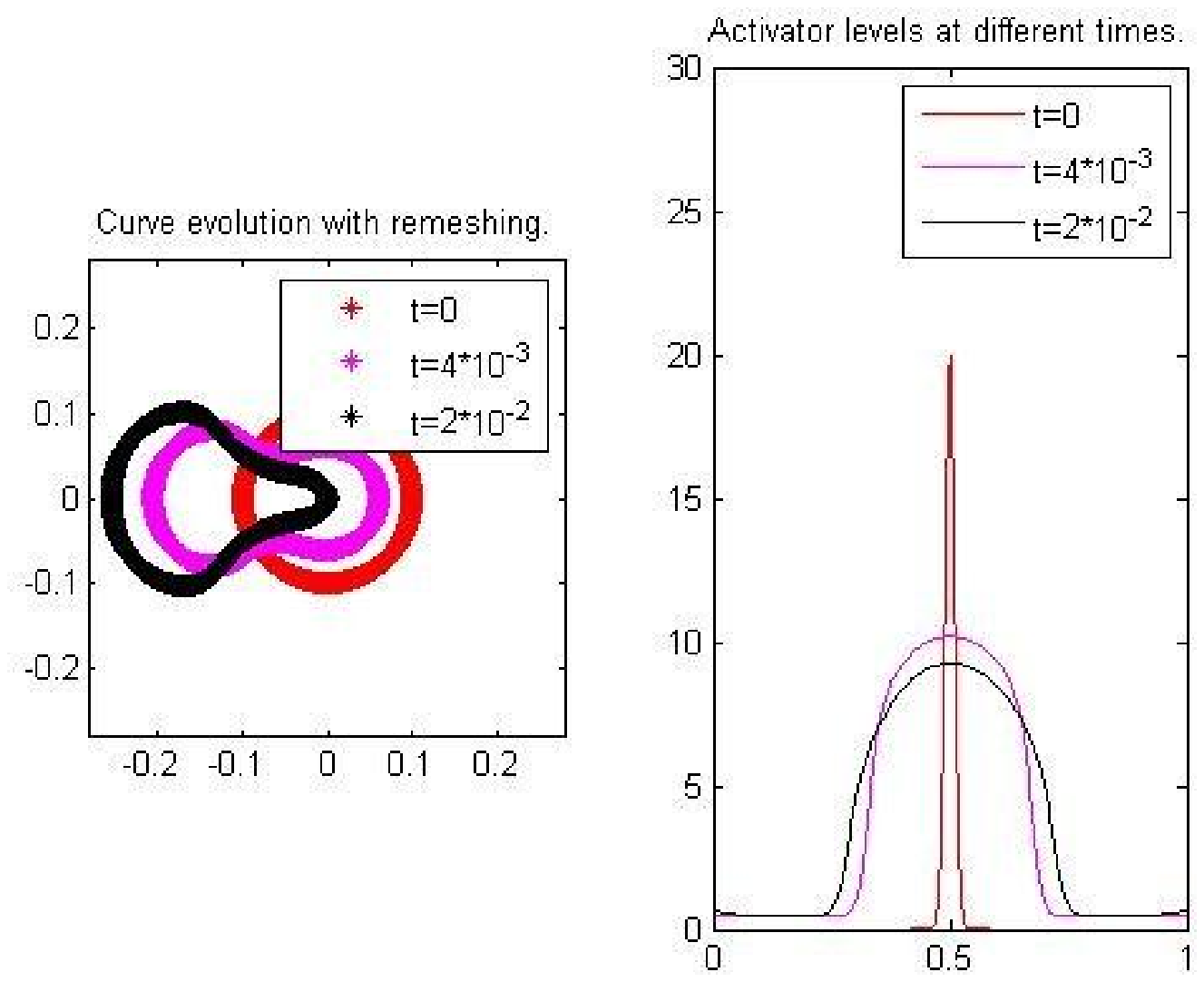}
					\caption{Membrane progression and local activator concentration levels simulated by the full system with remeshing.}
					\label{fig:WithRemesh}
				\end{figure}

			\subsubsection{Remarks}
				The results based on the original model of \cite{Neil10} suggest that a ``parent" pseudopod would split to give rise to two new ``child" pseudopods, a process that can be observed in the activator profile. Our results do not appear to show any such splitting: the one pseudopod that was created by the initial impulse does not result in any further pseudopods being formed.

				The reason for this may in fact be due to our reduction of the original model; in particular, neglecting the diffusion term in \eqref{eq:Neilson-c}. The Diffusion coefficient $D_{c}$ is greater than $D_{a}$, indicating that the local inhibitor may have an effect over a slightly larger part of the pseudopod than the local activator. We argue that this is because the activator's influence is over the tip of the pseudopod, whereas the local inhibitor affects both the pseudopod tip and the parts of the membrane connecting the pseudopod to the rest of the neutrophil. This suggests that if we want to observe pseudopod splitting then we should adapt our model to use \eqref{eq:NormNeilson-c} rather than \eqref{eq:NormNeilsonReduced-c}.

			\subsubsection{Extension}
				Motivated by the remarks above, the original local inhibitor equation \eqref{eq:NormNeilson-c} is incorporated back into our model. Figure~\ref{fig:NewModel} shows the resulting simulation of the neutrophil at various times (with corresponding activator and local inhibitor levels on the right), where we have chosen $h = 4 \times 10^{-2}, \Delta t = O(h^{2})$, $\delta = 312.5$ and $\varepsilon = 10^{-6}$ with final time $T = 10$. The initial curve is chosen to be the unit circle and the initial condition for the activator is given by $a_{0} = 20 \exp \left( \frac{-(p - 0.5)^{2}}{0.0002} \right)$. The pseudopod splitting discussed in the previous section is now clearly visible, confirming our intuition that the local inhibitor $c$ is crucial for this phenomena to arise.
				\begin{figure}[h!]
					\centering
					\subfigure[$t = 0$]{\includegraphics[width=3.3in]{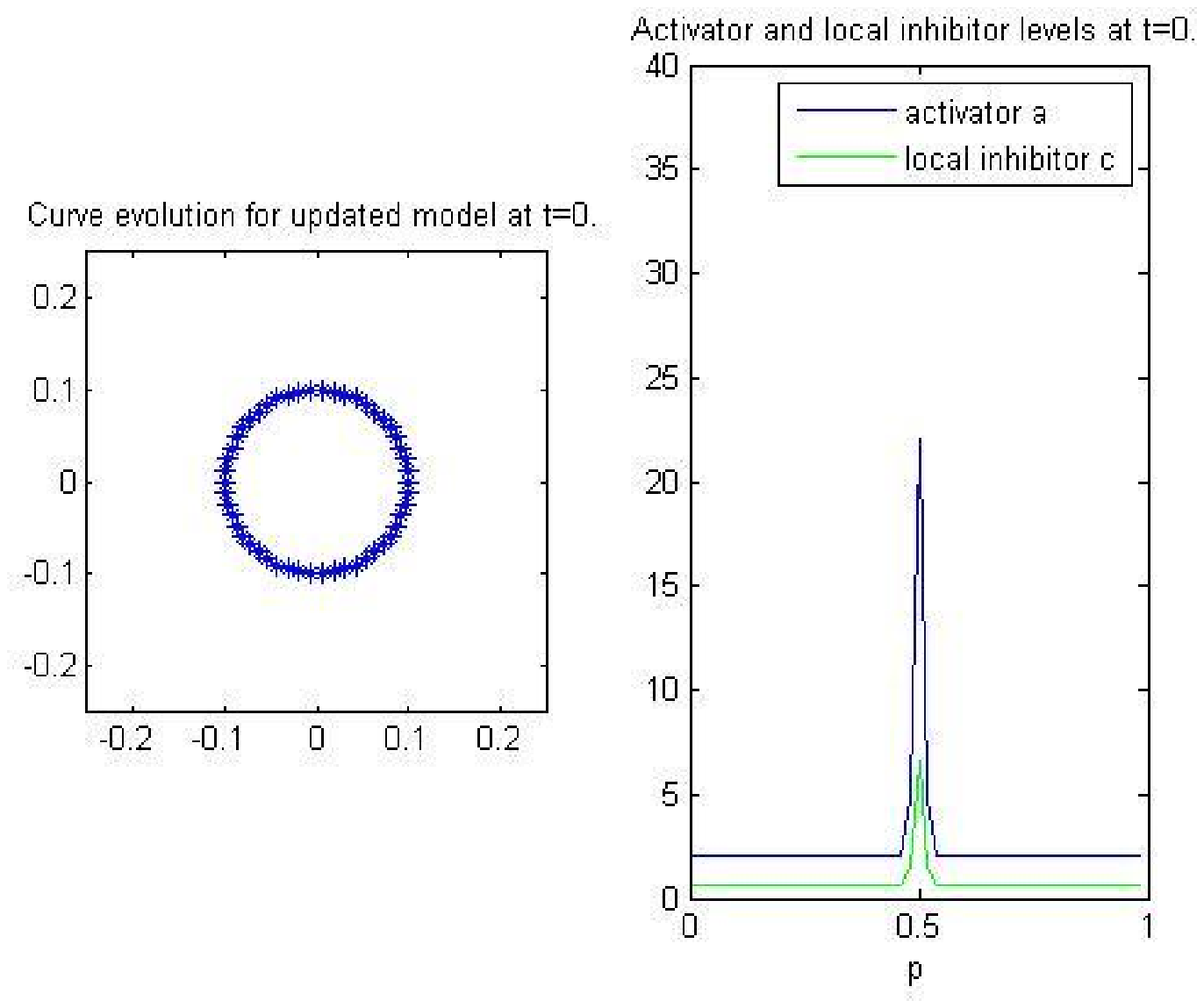}}
					\subfigure[$t = 3 \times 10^{-3}$]{\includegraphics[width=3.3in]{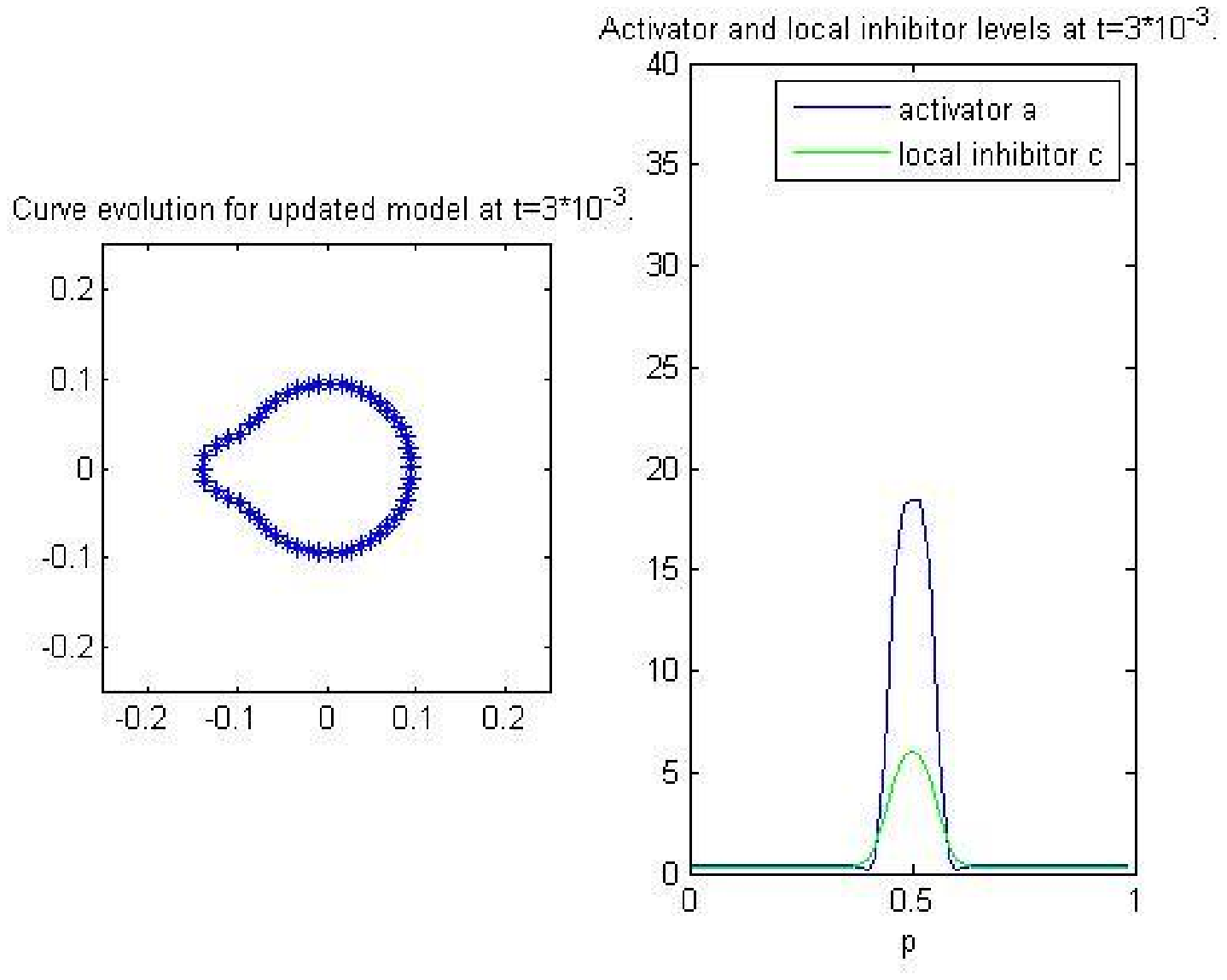}}\\
					\subfigure[$t = 5 \times 10^{-3}$]{\includegraphics[width=3.3in]{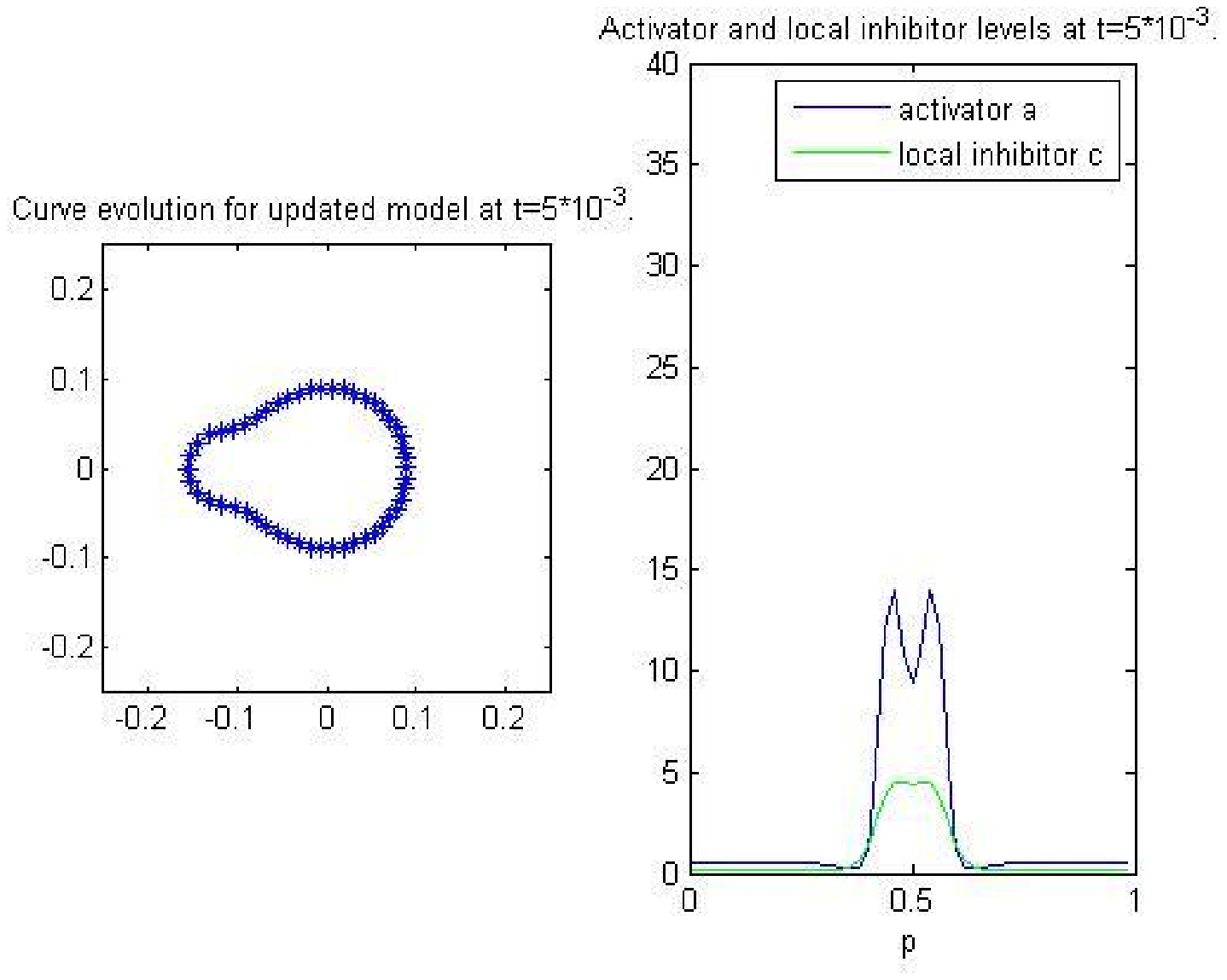}}
					\subfigure[$t = 1.2 \times 10^{-2}$]{\includegraphics[width=3.3in]{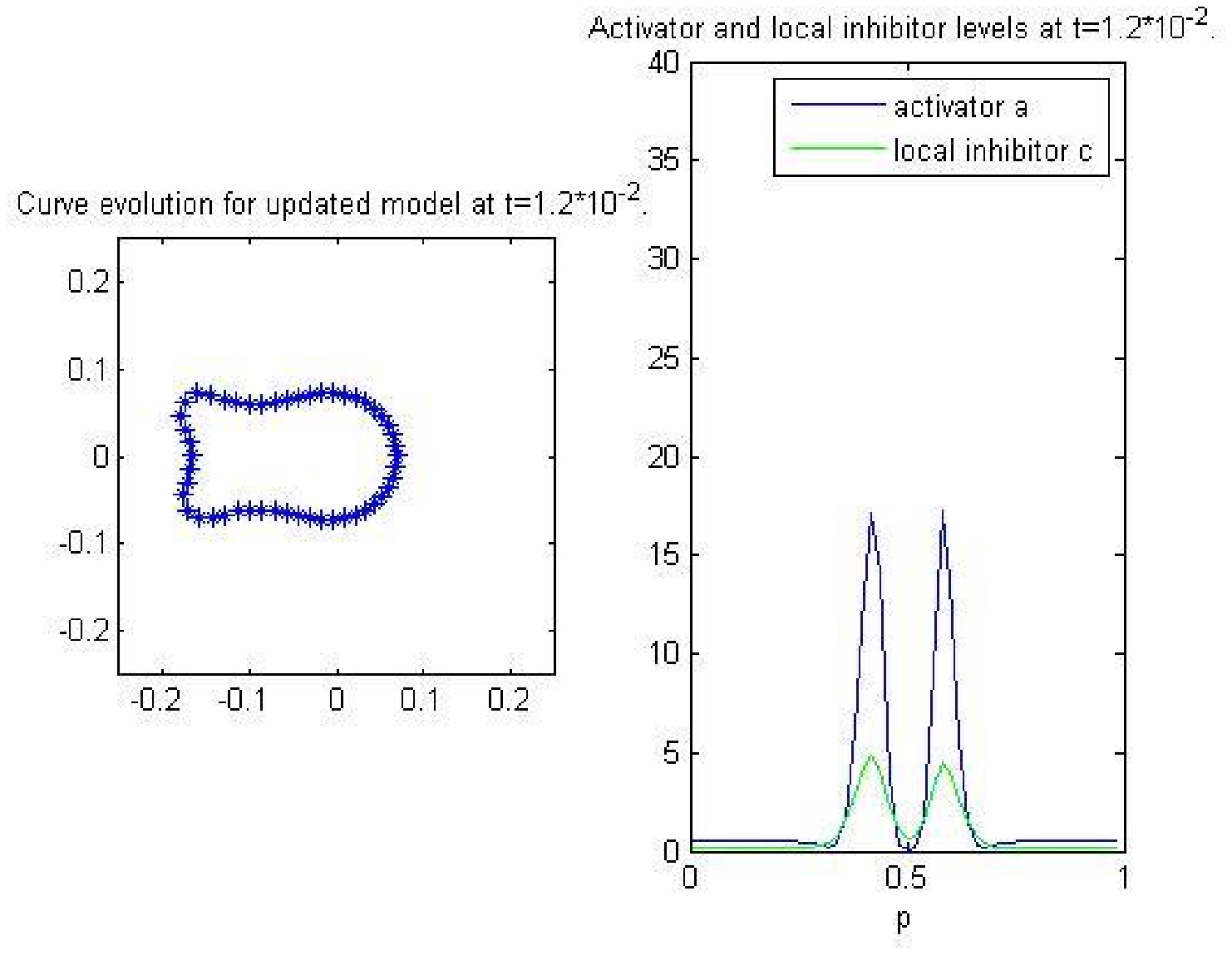}}\\
					\caption{Membrane progression and local activator and inhibitor concentration levels simulated by the full system including the original equation for local inhibitor $c$.}
					\label{fig:NewModel}
				\end{figure}

	\newpage

	\section{Probabalistic Approach to Bacterium and Neutrophil Movement} \label{sec:Prob}
		We now address the effects of cell movement and chemotaxis on the probability of a neutrophil catching and neutralising a pathogenic bacterium within the bloodstream. We initially model this game of `cat-and-mouse' on an infinite plane before considering the movement of the neutrophil with respect to the chemotaxical effects of the bacterium and then restricting the problem to within a capillary with negligible net blood flow. Finally, we consider the same capillary problem, but with an additional obstacle (perhaps a neighbouring blood cell) which either engulfs or deflects the bacterium.

		It seems reasonable to model the bacterium's movement as a Brownian Motion since we assume it is suspended in a fluid (blood plasma). Thus, pendent in this dynamic liquid, it is consistently undergoing collisions with the molecules of that liquid and if the number of collisions is large enough then the Central Limit Theorem gives rise to a Brownian Motion. We give this Brownian Motion a variance parameter, $\sigma$, which will depend on the dynamics of the fluid and the mass and dynamics of the bacterium itself.

		However, when modelling the movement of the neutrophil, the aforementioned chemotaxical effects allow us to be far more deterministic in our approach. For our purposes we first assume that the neutrophil is centred at the origin, and any of its movement is imparted onto the movement of the bacterium --- i.e. we observe the process from the neutrophil's perspective. For example, in two-dimensions, if we suppose that the neutrophil undergoes movement with constant velocity and moves with speed and direction $(1,1)^{\mathrm{T}}$ (for every unit of time) then, from its perspective, this is equivalent to an inclusion of a drift term in the movement of the bacterium equal to $(-1,-1)^{\mathrm{T}}$. Thus, in combination with the motion of the lone bacterium given above, the stochastic differential equation (SDE) that represents the path of the bacterium in our new frame is now given by
		\begin{align*}
			\rd X_{t} = \left( \begin{array}{c} -1 \\ -1  \end{array} \right) \rd t + \left( \begin{array}{cc} \sigma & 0 \\ 0 & \sigma \end{array} \right) \rd B_{t},
		\end{align*}
		or, more generally,
		\begin{align*}
			\rd X_{t} = \mathbf{b} \rd t + \bm{\sigma} \rd B_{t}.
		\end{align*}
		Owing to the fact that chemotaxis is a process based on the detection of a concentration of particles, we suggest that it is a diffusive process; thus its effects decay exponentially as one moves away from the source. Therefore, in two dimensions, we suggest the following model equation for the movement of the neutrophil with respect to the position of the bacterium, $(x_{1},x_{2})^{\mathrm{T}}$:
		\begin{align*}
			\left( \begin{array}{c} \dot{x_{1}} \\ \dot{x_{2}} \end{array} \right) = \left( \begin{array}{c} \theta x_{1} e^{-D} \\ \theta x_{2} e^{-D} \end{array} \right)
		\end{align*}
		where $\theta$ is some scale factor dependent on the cell's detection of chemoattractants and its subsequent movement in that direction. This suggests that the SDE describing the path of the bacterium from the perspective of the neutrophil is given by
		\begin{align*}
			\rd X_{t} = \left( \begin{array}{c} -kx_{1} e^{-D} \\ -kx_{2} e^{-D} \end{array} \right) \rd t + \left( \begin{array}{cc} \sigma & 0 \\ 0 & \sigma \end{array} \right) \rd B_{t}.
		\end{align*}
		In Sections~\ref{subsec:ModelMembraneMove} and~\ref{subsec:MembraneMoveNumerics}, we considered how the neutrophil membrane varies over time in response to the bacterium's presence with a view to coupling the equations derived here with the reaction-diffusion and membrane movement PDEs. For this section, however, we work with the unlikely, but hopefully not too inaccurate, assumption that the neutrophil membrane maintains a perfectly circular shape and is unaffected by the movement of the neutrophil or the dynamics of the surrounding fluid.

		We now turn our attention to the various domains that may be considered. Initially, we consider a bacterium trying to escape a neutrophil on an infinite, two-dimensional plane. Using $R$, the radius of the neutrophil, as a convenient scale factor, we assume that the bacterium succeeds in its escape if it reaches a distance of $kR$ from the neutrophil, for some $k > 1$. This provides us with a domain in the shape of an annulus,
		\begin{align*}
			\Omega = \{ x \in \R^{2} \vert R \leq x \leq kR \}.
		\end{align*}
		From here, we can then extend this to a bacterium within a capillary. Finally, we shall attempt to add obstacles such as other blood cells within the capillary.

		As mentioned previously, the path of the bacterium can be given by an SDE with a drift term $b(X_{t},t)$ (examples of which were given earlier) and a constant variance, $\sigma$. Using the annulus as our domain with $kR$ as the escape radius, we propose a method for computing the probability that a bacterium is engulfed by the neutrophil given that the bacterium starts at a point $\beta \in \Omega^{\circ}$.

		Suppose we also have a function $f(x,t) \in C^{2} \left( [0, \infty) \times \R \right)$, which describes the probability that the bacterium will escape given that it is in position $x$ at time $t$. Consequently, using Ito's Lemma \cite{Oks03},
		\begin{equation} \label{eq:ProbEscIto}
			\begin{aligned}
				\rd f(X_{t},t)
				&= \pd_{t} f(X_{t},t) \rd t + \nabla f(X_{t}, t) \cdot \left( b(X_{t}, t) \rd t + \left( \begin{array}{c} \sigma \\ \sigma \end{array} \right) \rd W_{t} \right) + \frac{1}{2} \sigma ^{2} \Delta f \rd t
				\\
				&= Af(X_{t},t) \rd t + \nabla f(X_{t},t) \cdot \left( \begin{array}{c} \sigma \\ \sigma \end{array} \right) \rd W_{t} 
			\end{aligned}
		\end{equation}
		where $Af(X_{t},t)$ is the generator associated with our SDE \cite{Oks03}. Integrating \eqref{eq:ProbEscIto} from 0 to $T$ and taking expectations gives
		\begin{align} \label{eq:ProbEscItoIntExp}
			\Exp \left[ f(X_{t},t) \right] - f(X_{0},0) = \Exp \left[ \int_{0}^{T} Af(X_{s},s) \rd s \right] + \Exp \left[ \int_{0}^{T} \nabla f(X_{t},t) \cdot \left( \begin{array}{c} \sigma \\ \sigma \end{array} \right)\rd W_{t} \right].
		\end{align}
		By the martingale property of the Ito Integral,
		\begin{align*}
			\Exp \left[ f(X_{t},t) \right] = f(X_{0},0) + \Exp \left[ \int_{0}^{T} Af(X_{s},s) \rd s \right].
		\end{align*}
		Equivalently, by the Optional Stopping Theorem, for stopping time $\tau = \inf \{ t \geq 0 \vert X_{t} \notin \Omega \}$ (the time that the bacterium exits our domain),
		\begin{align*}
			\Exp \left[ f(X_{\tau},\tau) \right] = f(\beta,0) + \Exp \left[ \int_{0}^{\tau} Af(X_{s},s) \rd s \right]
		\end{align*}
		where $\beta$ is the starting point of the bacterium. However, we also note that
		\begin{align*}
			\Exp \left[ f(X_{\tau},\tau) \right] = \Prob(X_{\tau} = R) f(R,\tau) + \Prob(X_{\tau} = kR) f(kR, \tau).
		\end{align*}
		Thus,
		\begin{align*}
			\Prob(X_{\tau} = R) f(R,\tau) + \Prob(X_{\tau} = kR) f(kR,\tau) = f(\beta,0) + \Exp \left[ \int_{0}^{\tau} Af(X_{s},s) \rd s \right]
		\end{align*}
		for every $f(x,t) \in C^{2} \left( [0, \infty) \times \R \right)$. So if we can solve the partial differential equation (PDE) $Af = 0$ with boundary conditions $f(X) = 1$, when $\abs{X} = kR$, and $f(X) = 0$, when $\abs{X} = R$, this provides us with
		\begin{align} \label{eq:XTauandfb}
			\Prob(X_{\tau} = kR) = f(\beta,0)
		\end{align}
		since $f(kR,\tau) = \Exp \left[ \int_{0}^{\tau} Af(X_{s},s) \rd s \right] = 0$.

		However, when solving $Af = 0$ for a parabolic generator, we need a third, time-related boundary condition. To this point, for a time dependent drift term $b(X_{t},t)$ we have the following problem:
		\begin{equation} \label{eq:DriftTermPDE}
			\begin{aligned}
				&\pd_{t} f + b(x,t) \cdot \nabla f + \frac{1}{2} \sigma^{2} \Delta f = 0
				\\
				&f = 0 \qquad \mbox{in} \quad \pd B_{R}(0) \times (0,\infty)
				\\
				&f = 1 \qquad \mbox{in} \quad \pd B_{kR}(0) \times (0,\infty)
			\end{aligned}
		\end{equation}
		For the third boundary condition, we make the assumption that the time dependence of the drift term diminishes over time --- i.e. the neutrophil's strategy for chasing the bacterium still depends on the position of the bacterium, but becomes less volatile over time. Equivalently,
		\begin{align*}
b(X_{t},t) \rightarrow d(X_{t}) \quad \mbox{as} \quad t \rightarrow \infty.
		\end{align*}
		for some function $d$ that depends only on $X_{t}$. Furthermore, we suppose that there exists a time $T < \infty$ where
		\begin{align*}
			\forall t > T, \quad b(X_{t},t) = d(X_{t}),
		\end{align*}
		Thus our third boundary condition for our parabolic problem is the solution, $u(x)$, to the elliptic problem
		\begin{equation} \label{eq:ParbolicBCEllipticPDE}
			\begin{aligned}
				&d(x) \cdot \nabla u + \frac{1}{2} \sigma^2 \Delta u = 0
				\\
				&u = 0 \qquad \mbox{in} \quad \pd B_{R}(0) \times (0,\infty)
				\\
				&u = 1 \qquad \mbox{in} \quad \pd B_{kR}(0) \times (0,\infty)
			\end{aligned}
		\end{equation}
		Finally, letting $s = -t$ to obtain a forward heat equation, we derive the parabolic PDE
		\begin{equation} \label{eq:BacMovePDE}
			\begin{aligned}
				&\pd_{s}f - b(x,s) \cdot \nabla f - \frac{1}{2} \sigma^2 \Delta f = 0
				\\
				&f = 0 \qquad \mbox{in} \quad \pd B_{R}(0) \times (-T,0)
				\\
				&f = 1 \qquad \mbox{in} \quad \pd B_{kR}(0) \times (-T,0)
				\\
				&f = u(x) \qquad \mbox{in} \quad \Omega \times \{ -T \}
			\end{aligned}
		\end{equation}
		The completion of this formulation of the problem involves showing that the values of solution to this PDE, $f(x,t)$, take values between 0 and 1, as required of a probability function, since our solution $f$ represents the probability of the bacterium escaping.
		\begin{thm} \label{thm:Boundedf}
			If $f(x,t)$ satisfies \eqref{eq:BacMovePDE} with the given boundary conditions on the domain $\Omega$, which consists of an annulus with inner radius $R$ and outer radius $kR$, then $\forall x \in \Omega, t \in [-T,0]$,
			\begin{align*}
				0 \leq f(x,t) \leq 1
			\end{align*}
		\end{thm}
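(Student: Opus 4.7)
The plan is to prove the bounds $0\le f\le 1$ by the parabolic weak maximum principle, using the fact that $f$'s boundary and initial data both lie in $[0,1]$. The only non-trivial input is showing that the initial datum $u(x)$ itself lies in $[0,1]$, which will be handled first by the elliptic maximum principle. Throughout I assume enough smoothness on $b$, $d$ and $\sigma$ (and on $\partial\Omega$, which is anyway a union of two smooth circles) so that the standard maximum principles apply.

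First I would establish $0\le u(x)\le 1$ on $\Omega$. Rewrite the elliptic equation \eqref{eq:ParbolicBCEllipticPDE} as $Mu := \tfrac12\sigma^{2}\Delta u + d(x)\cdot\nabla u = 0$. Since $\sigma>0$, the operator $M$ is uniformly elliptic on the compact annulus $\overline\Omega$, and it has no zeroth-order term. The weak elliptic maximum principle then forces $u$ to attain its maximum and minimum on $\partial\Omega=\partial B_R(0)\cup\partial B_{kR}(0)$. On these two components $u$ equals $0$ and $1$ respectively, giving $0\le u\le 1$ throughout $\Omega$.

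Next I would apply the weak parabolic maximum principle to \eqref{eq:BacMovePDE}. Write the equation as $\partial_s f = L f$ with $Lf := b(x,s)\cdot\nabla f + \tfrac12\sigma^{2}\Delta f$; then $L$ is uniformly elliptic with no zeroth-order coefficient, and the cylinder $Q := \Omega\times(-T,0)$ has parabolic boundary
\begin{align*}
\partial_{p}Q = \bigl(\partial\Omega\times[-T,0]\bigr)\cup\bigl(\Omega\times\{-T\}\bigr).
\end{align*}
The maximum principle then asserts that $\sup_{\overline Q} f = \sup_{\partial_p Q} f$ and $\inf_{\overline Q} f = \inf_{\partial_p Q} f$. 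On the lateral boundary $f$ equals $0$ or $1$, and on the bottom $f=u$, which by the previous step lies in $[0,1]$. Hence $0\le f\le 1$ on all of $\overline Q$.

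The substantive step is the verification that $0\le u\le 1$, since the parabolic argument is then a one-line invocation of the boundary data. The one point to be careful about is the hypothesis on the coefficients: the drift $b(x,s)$ must be bounded (or at least locally bounded) on $\overline Q$ so that the standard Hopf/Friedman formulation of the maximum principle applies; for the chemotactic drifts considered in this section, which are smooth on the bounded annulus, this is immediate. If one prefers a comparison-based phrasing, the same conclusion follows by noting that the constants $0$ and $1$ are classical sub- and super-solutions of \eqref{eq:BacMovePDE} that dominate (resp.\ are dominated by) $f$ on $\partial_p Q$, so that the maximum principle applied to $f-0$ and $1-f$ gives the two-sided bound at once.
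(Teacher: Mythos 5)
Your proof is correct and takes essentially the same route as the paper: the weak parabolic maximum principle applied to $\partial_s f - b\cdot\nabla f - \tfrac12\sigma^{2}\Delta f = 0$, with the boundary data pinning the extrema at $0$ and $1$. You are in fact more careful than the paper, whose proof takes the extrema only over the lateral boundary $\partial\Omega\times[-T,0]$ and never checks that the initial datum $u$ at $s=-T$ (which is also part of the parabolic boundary) lies in $[0,1]$; your preliminary elliptic maximum-principle argument for \eqref{eq:ParbolicBCEllipticPDE} supplies exactly that missing verification.
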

		\begin{proof}
			The partial differential equation in question is equivalent to 
			\begin{align*}
				\pd_{t} f + Lf = 0 \qquad \mbox{where} \quad Lf = b(\cdot,t) \cdot \nabla f - \frac{1}{2} \sigma^2 \Delta f
			\end{align*}
			Now, since $\pd_{t} f + Lf \leq 0$, by the Weak Maximum Principle \cite{Evans10},
			\begin{align*}
				\max_{\Omega \times [-T,0]} f = \max_{\pd \Omega \times [-T,0]} f = 1 \qquad \mbox{and} \qquad \min_{\Omega \times [-T,0]} f = \min_{\pd \Omega \times [-T,0]} f = 0.
			\end{align*}
			Thus, $\forall x \in \Omega, t \in [-T,0]$, $0 \leq f(x,t) \leq 1$.
		\end{proof}
		With this general framework in place we attempt to find a solution for several special cases.

		\subsection{No Neutrophil Movement} \label{subsec:NoNeuMove}
				The first and simplest case that we consider is the case on the two-dimensional annulus when there are no chemotaxical effects and therefore negligible neutrophil movement.
				\begin{thm} \label{thm:ProbEngulf2d}
					For a stationary neutrophil and a bacterium that is modelled by a Brownian Motion with variance 1 beginning at a point $\beta$ such that $R < \abs{\beta} < kR$, where $R$ and $kR$ are the cell radius and escape radius respectively, and that is almost surely engulfed if it makes contact with the neutrophil,
					\begin{align*}
						\Prob(\mbox{Bacterium engulfed}) = 1 - \frac{1}{\ln k} \ln \left( \frac{\beta}{R} \right).
					\end{align*}
				\end{thm}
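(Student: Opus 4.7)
The plan is to specialise the general Itô/PDE framework developed in the paragraphs leading up to \eqref{eq:XTauandfb} to the drift-free setting. Since the neutrophil is stationary there is no chemotactic drift, so the bacterium's SDE is just $\rd X_t = \sigma\,\rd B_t$ with $\sigma = 1$, and the generator reduces to $Af = \tfrac{1}{2}\Delta f$. Because the coefficients no longer depend on time, the third, time-related boundary condition needed for \eqref{eq:BacMovePDE} coincides with the elliptic steady state $u(x)$ solving \eqref{eq:ParbolicBCEllipticPDE}, and the parabolic problem collapses to the autonomous Dirichlet problem
\begin{align*}
	\tfrac{1}{2}\Delta f = 0 \; \text{in } \Omega, \qquad f = 0 \; \text{on } \pd B_R(0), \qquad f = 1 \; \text{on } \pd B_{kR}(0).
\end{align*}
By \eqref{eq:XTauandfb}, the escape probability starting from $\beta$ is then simply $f(\beta)$, and Theorem~\ref{thm:Boundedf} guarantees that $f$ takes values in $[0,1]$ as a probability should.

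Next I would exploit the rotational symmetry of both the annular domain $\Omega = \{R \leq \abs{x} \leq kR\}$ and the boundary data to look for a radial solution $f(x) = g(\abs{x})$. Writing the planar Laplacian in polar coordinates gives $g''(r) + \frac{1}{r}g'(r) = 0$ on $(R,kR)$, equivalently $\bigl(r g'(r)\bigr)' = 0$. Integrating twice yields the general solution $g(r) = C_{1}\ln r + C_{2}$; imposing $g(R) = 0$ and $g(kR) = 1$ fixes $C_{1} = 1/\ln k$ and $C_{2} = -\ln R/\ln k$, so
\begin{align*}
	f(\beta) = g(\abs{\beta}) = \frac{1}{\ln k}\ln\!\left(\frac{\abs{\beta}}{R}\right).
\end{align*}
Uniqueness of the Dirichlet solution on the bounded annulus (via the weak maximum principle, exactly as in Theorem~\ref{thm:Boundedf}) confirms that this is the unique radial solution and hence the escape probability.

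The result then follows because engulfment and escape are complementary events at the stopping time $\tau$: $\Prob(\text{engulfed}) = 1 - \Prob(X_\tau \in \pd B_{kR}) = 1 - f(\beta)$, which gives the claimed formula. The only step that requires any genuine care is the reduction from the parabolic problem \eqref{eq:BacMovePDE} to the elliptic one; this is painless here because the drift and diffusion are both time-independent, so the constant-in-$s$ function $u(x)$ already satisfies \eqref{eq:BacMovePDE} with its own value as initial data and is the unique bounded solution. Everything after that is a standard radial Laplace computation.
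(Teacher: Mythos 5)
Your proposal is correct and follows essentially the same route as the paper: reduce to the drift-free generator $\tfrac{1}{2}\Delta f = 0$ on the annulus with Dirichlet data $0$ on $\pd B_{R}(0)$ and $1$ on $\pd B_{kR}(0)$, solve the radial ODE $\bigl(r g'(r)\bigr)' = 0$ to get $f(r) = \tfrac{1}{\ln k}\ln(r/R)$, and take the complement for the engulfment probability. The extra remarks on uniqueness and the collapse of the parabolic problem to the elliptic one are sensible additions but do not change the argument.
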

				\begin{proof}
					In this case the bacterium follows the path 
					\begin{align*}
						\rd Z_{t} = \rd B_{t}.
					\end{align*}
					So the drift term is zero and there is no time dependence. By \eqref{eq:XTauandfb}, we observe that the solution to this problem is given by the solution $f(x,t)$ to \eqref{eq:DriftTermPDE} with no time dependence and the drift term $b(x,t)$ equal to zero. Therefore, the probability of escape is given by the solution to the PDE:
					\begin{equation*}
						\begin{aligned}
							&\frac{1}{2} \Delta f = 0
							\\
							f = 0& \quad \mbox{in} \quad \pd B_{R}(0)
							\\
							f = 1& \quad \mbox{in} \quad \pd B_{kR}(0)
						\end{aligned}
					\end{equation*}
					Since we are on the annulus it will be easier to change to radial coordinates and we note that we choose a radially symmetric solution.
					\begin{align*}
						\frac{1}{r} \frac{\pd}{\pd r} \left( r \frac{\pd f}{\pd r} \right) = 0
					\end{align*}
					Solving this ODE gives
					\begin{align*}
						f(r) = \frac{1}{\ln k} \ln \left( \frac{r}{R} \right)
					\end{align*}
					This gives the probability that the bacterium will escape given that it starts at $r$. Hence the probability that the bacteria is engulfed is given by $1 - f(\beta)$.
				\end{proof}
				\begin{rem}
					We also note that since Brownian Motion is recurrent (in the sense of \cite[\S 4]{ChowTeich97}), to a ball with area greater than zero in two dimensions, as we increase the escape radius $kR$ we also increase the probability that the bacterium is caught. In fact, as $k \rightarrow \infty$, $\Prob(\mbox{Bacterium engulfed}) \rightarrow 1$. With infinite time and no obstructions the neutrophil will always engulf the bacterium.
				\end{rem}
				Similarly, in three dimensions:
				\begin{thm} \label{thm:ProbEngulf3d}
					For a stationary neutrophil and a bacterium that is modelled by a Brownian Motion in three dimensions beginning at a point $\beta$ such that $R < \abs{\beta} < kR$ where $R$ and $kR$ are the cell radius and escape radius respectively, and that is almost surely engulfed if it makes contact with the neutrophil,
					\begin{align*}
						\Prob(\mbox{Bacterium engulfed}) = 1 - \left( 1 - \frac{R}{\beta} \right) \left( \frac{1}{\left( 1 - \frac{1}{k} \right)} \right).
					\end{align*}
				\end{thm}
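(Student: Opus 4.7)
The plan is to mirror the proof of Theorem~\ref{thm:ProbEngulf2d} almost verbatim, but in three dimensions. As before, the bacterium's path is pure Brownian motion $\rd Z_{t} = \rd B_{t}$, so the drift vanishes and the generator reduces to $\tfrac{1}{2}\Delta$ with no time dependence. By the framework developed around \eqref{eq:XTauandfb} and \eqref{eq:DriftTermPDE}, the probability of escape starting from $\beta$ equals the value $f(\beta)$ of the solution to the elliptic boundary value problem
\begin{align*}
\tfrac{1}{2}\Delta f = 0 \quad \text{on } \Omega,\qquad f|_{\pd B_{R}(0)} = 0,\qquad f|_{\pd B_{kR}(0)} = 1,
\end{align*}
where now $\Omega \subset \R^{3}$ is the spherical shell $\{R \leq |x| \leq kR\}$. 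The engulfment probability is then $1 - f(\beta)$.

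Next I would exploit the spherical symmetry of both the PDE and the boundary data to reduce to a radial ODE. Writing the Laplacian in spherical coordinates and discarding the angular derivatives, the equation becomes
\begin{align*}
\frac{1}{r^{2}}\frac{\rd}{\rd r}\!\left( r^{2}\frac{\rd f}{\rd r}\right) = 0,
\end{align*}
whose general solution is $f(r) = C_{1} + C_{2}/r$. Imposing $f(R) = 0$ and $f(kR) = 1$ fixes the two constants; routine algebra gives
\begin{align*}
f(r) = \frac{1 - R/r}{1 - 1/k},
\end{align*}
and uniqueness of the radial solution (together with uniqueness of the Dirichlet problem for Laplace's equation on the shell) ensures this is the $f(\beta)$ we need. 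Subtracting from $1$ yields the stated formula.

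There is essentially no serious obstacle: the only subtlety is justifying the reduction to a radial ODE, which is standard since the Dirichlet problem on a shell with radially symmetric data has a unique solution and any rotation of a solution is again a solution. Everything else follows from the same Ito/optional-stopping argument already used in dimension two, so the proof requires no new probabilistic ingredients beyond confirming that Brownian motion in $\R^{3}$ almost surely exits the bounded annular region $\Omega$ in finite time, which guarantees the stopping time $\tau$ is a.s.\ finite and the optional stopping step is valid.
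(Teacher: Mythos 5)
Your proposal follows exactly the same route as the paper's proof: reduce to the Dirichlet problem $\tfrac{1}{2}\Delta f = 0$ on the spherical shell, pass to the radial ODE $\frac{1}{r^{2}}\frac{\rd}{\rd r}\bigl(r^{2}\frac{\rd f}{\rd r}\bigr) = 0$, solve with the boundary conditions $f(R)=0$, $f(kR)=1$ to get $f(r) = \bigl(1 - \tfrac{R}{r}\bigr)\bigl(1-\tfrac{1}{k}\bigr)^{-1}$, and report $1 - f(\beta)$. The only difference is that you explicitly flag the justification of the radial reduction and the a.s.\ finiteness of the exit time, which the paper leaves implicit; the argument is correct.
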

				\begin{proof}
					In the same way as the previous theorem
					\begin{align*}
						\Delta f = 0 \quad \Rightarrow \quad
						\frac{1}{r^{2}} \frac{\pd}{\pd r} \left( r^{2} \frac{\pd f}{\pd r} \right) = 0.
					\end{align*}
					Solving this ODE gives
					\begin{align*}
						f(r) = \left( 1 - \frac{R}{r} \right) \left( \frac{1}{\left( 1 - \frac{1}{k} \right)} \right).
					\end{align*}
					Again, the probability of being engulfed is $1 - f(\beta)$.
				\end{proof}

		\subsection{Movement of the Neutrophil}
			Now that we have obtained a rough outline of the probability function, we endeavour to find solutions in the more general case when $b \neq 0$ (movement of the neutrophil). Analytic solutions to the equations \eqref{eq:DriftTermPDE} are not obvious and none of the immediate methods for solving PDEs, such as the separation of variables technique, produce a solution. Therefore, we move to calculating these solutions computationally using linear finite element methods and linearly approximated domains on the software package DUNE. In Figures~\ref{fig:b55} to~\ref{fig:Obstacle}, we model the probability of a bacterium escaping a neutrophil, in a variety of situations, as a function of the starting position of the bacterium relative to the neutrophil (centred at the origin of each figure). i.e. a value close to one at a point in these figures implies that a bacterium, which starts at that point, is very likely to escape the neutrophil. \emph{Throughout these figures, the left image displays the PDE solution of the probability function that models the probability of a bacterium escaping and the right image displays the 'iso-probability lines' joining possible starting points of the bacterium that give an equal probability of the bacterium escaping}.
			
			Figures~\ref{fig:b55}, \ref{fig:b1010} and~\ref{fig:Drift} demonstrate how the choice of drift term affects the bacterium's chances of escape on an annulus domain and the PDE representing this, together with its associated computation, is discussed in detail in the captions below these figures. The unlikeliness of this particular environment is ameliorated by considering the more realistic setting of a capillary in Figure~\ref{fig:DriftCapillary}. This case is considered by solving equation \eqref{eq:DriftTermPDE} on a rectangular domain with Dirichlet boundary conditions, which are described, together with the domain approximations, more thoroughly in the figure caption. However, in figure~\ref{fig:DriftCapillary}, we have assumed that a bacterium can pass through capillary walls. We attempt to rectify this by using reflective boundaries to represent an impassable capillary wall in Figure~\ref{fig:DriftCapillaryReflect} (this too is discussed further in the associated caption). Finally, in Figure~\ref{fig:Obstacle}, we introduce an obstacle, which we suppose represents a passing blood cell. For this scenario, we assume that the bacterium can either enter or feed off the cell causing irreparable damage in both cases. Thus, we assume that if the bacterium reaches the cell the neutrophil has failed in its attempts to neutralise it. Again, for further discussion on how this model was formulated and computed the relevant caption should be consulted.
			\begin{figure}[h!]
				\centering
				\subfigure[]{\includegraphics[width=3.1in]{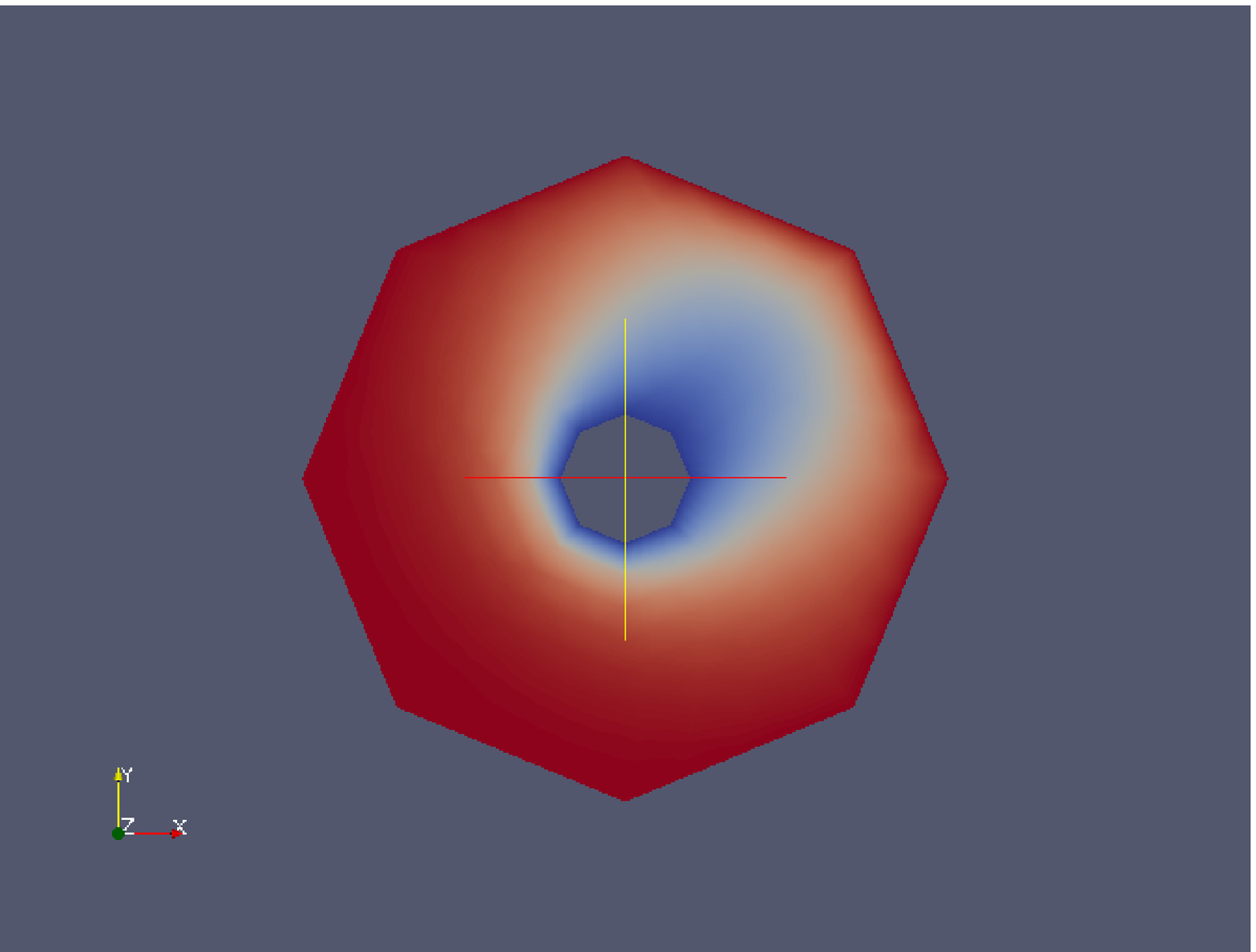}}
				\subfigure[]{\includegraphics[width=3.1in]{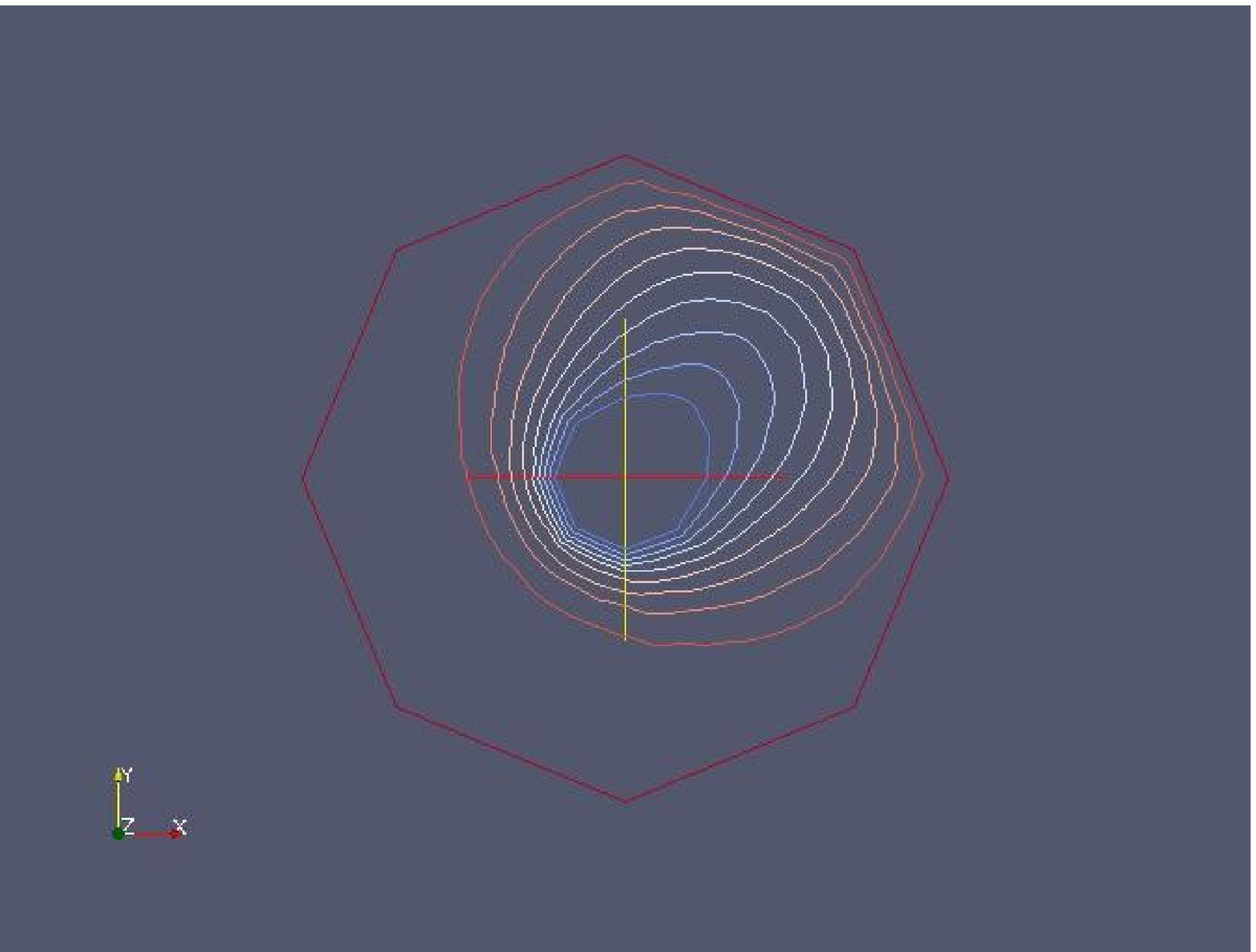}}\\
				\caption{We begin with the case when $b = (5, 5)^\mathrm{T}$ and $\sigma = 1 $. This is an example of a neutrophil moving fairly quickly in the $(1,1)^\mathrm{T}$, direction unaffected by chemotaxis. In an infinite domain it could be argued that this strategy (moving in a constant direction) would be the optimum strategy that could be employed by the neutrophil since this strategy would cover the most area in a limited amount of time. However, from the figures, on this restricted domain, it appears obvious that a bacterium is very likely to escape unless it starts in the direct path of the neutrophil or extremely close it. Despite the strong initial argument for this strategy it seems to be largely ineffective on more restricted domains. The figure itself shows a solution to a steady state case of a PDE derived previously (equation \eqref{eq:BacMovePDE}) to provide the probability of a bacterium escaping on an annulus. Thus, it is given by: $(5,5)^\mathrm{T} \cdot f(x) + \frac{1}{2} \Delta f(x) = 0$ in $\Omega$; $f = 0$ in $\pd B_{0.2}(0)$ and $f = 1$ in $\pd B_{1}(0)$, where $\Omega = \{ x \in \R^{2} \vert R \leq x \leq kR \}$.}
				\label{fig:b55}
			\end{figure}
			\begin{figure}[h!]
				\centering
				\subfigure[]{\includegraphics[width=3.1in]{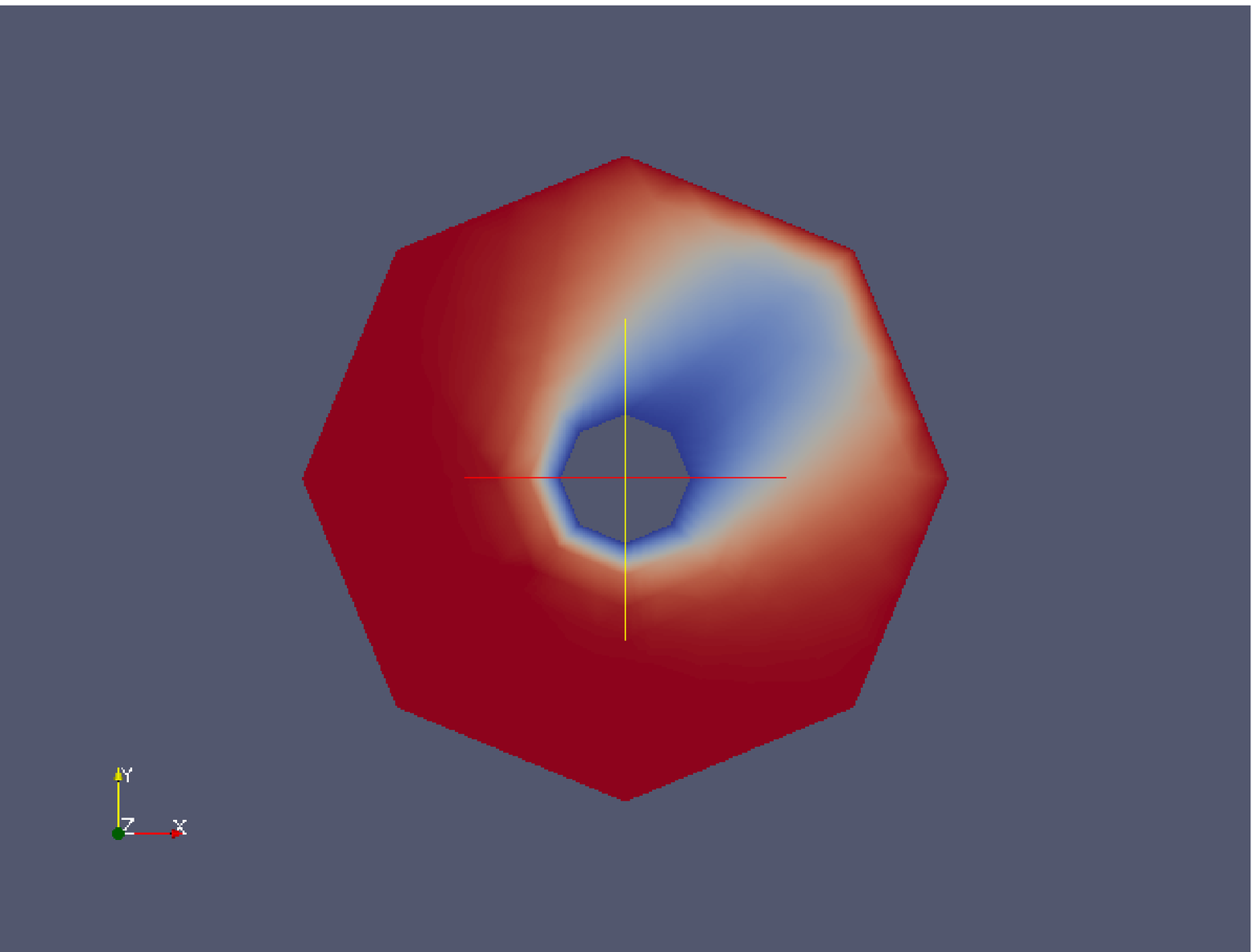}}
				\subfigure[]{\includegraphics[width=3.1in]{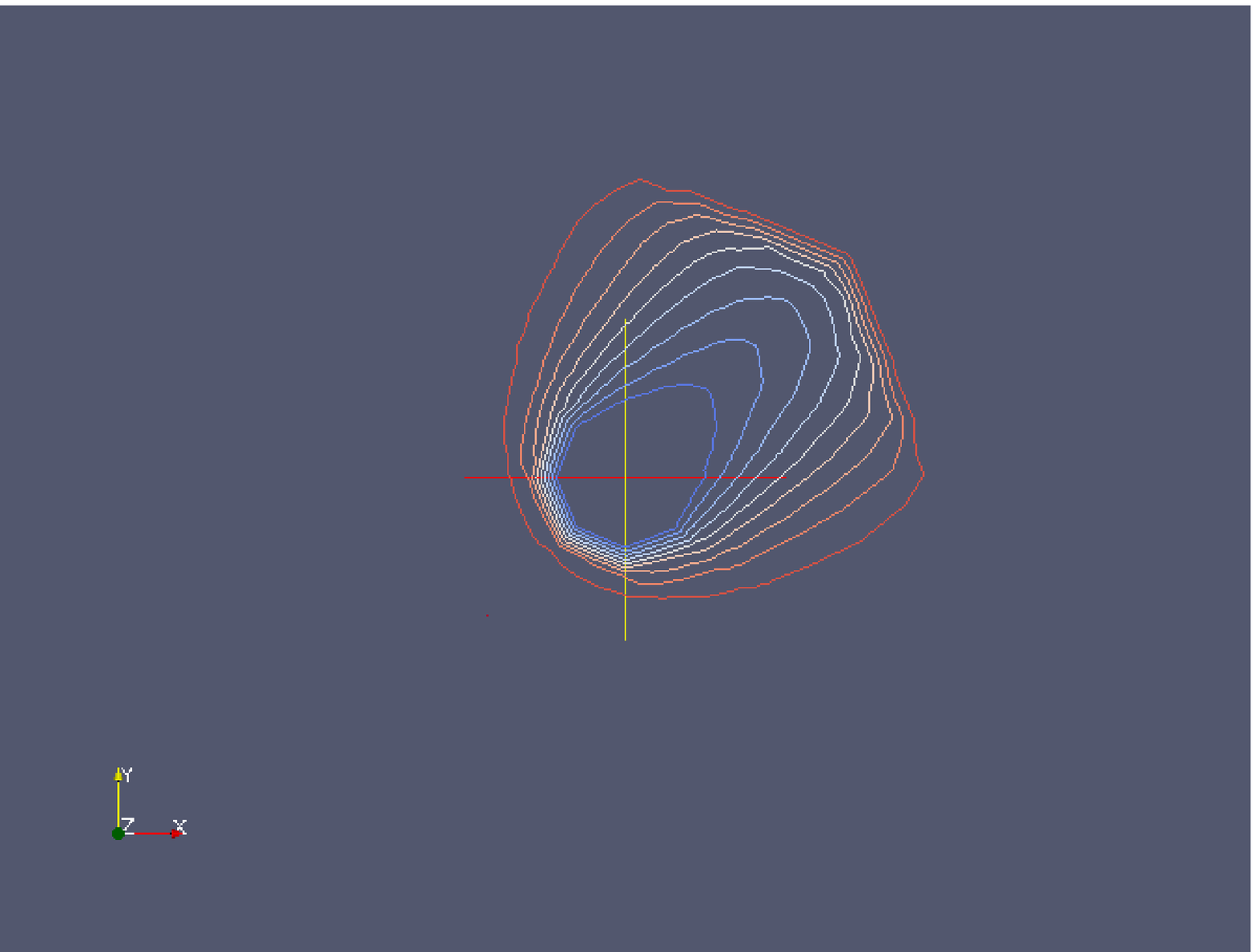}}\\
				\caption{For further comparison, We also observe the case when $b = (10, 10)^{\mathrm{T}}$ and $\sigma = 1 $. This is an example of a neutrophil moving very quickly in the $(1,1)^{\mathrm{T}}$ direction unaffected by chemotaxis. By the original argument for a neutrophil's strategy on an infinite domain this strategy should prove even more beneficial. However, on this restricted domain it only serves to strengthen the arguments against this strategy. The figure shows the solution to the same PDE as before, derived in Section\ref{sec:Prob} (equation \eqref{eq:BacMovePDE}), but with the noted change to the diffusion term: $(10,10)^{\mathrm{T}} \cdot f(x) + \frac{1}{2} \Delta f(x) = 0$ in $\Omega$; $f = 0$ in $\pd B_{0.2}(0)$ and $f = 1$ in $\pd B_{1}(0)$, where $\Omega = \{ x \in \R^{2} \vert R \leq x \leq kR \}$.}
				\label{fig:b1010}
			\end{figure}
			\begin{figure}[h!]
				\centering
				\subfigure[]{\includegraphics[width=3.1in]{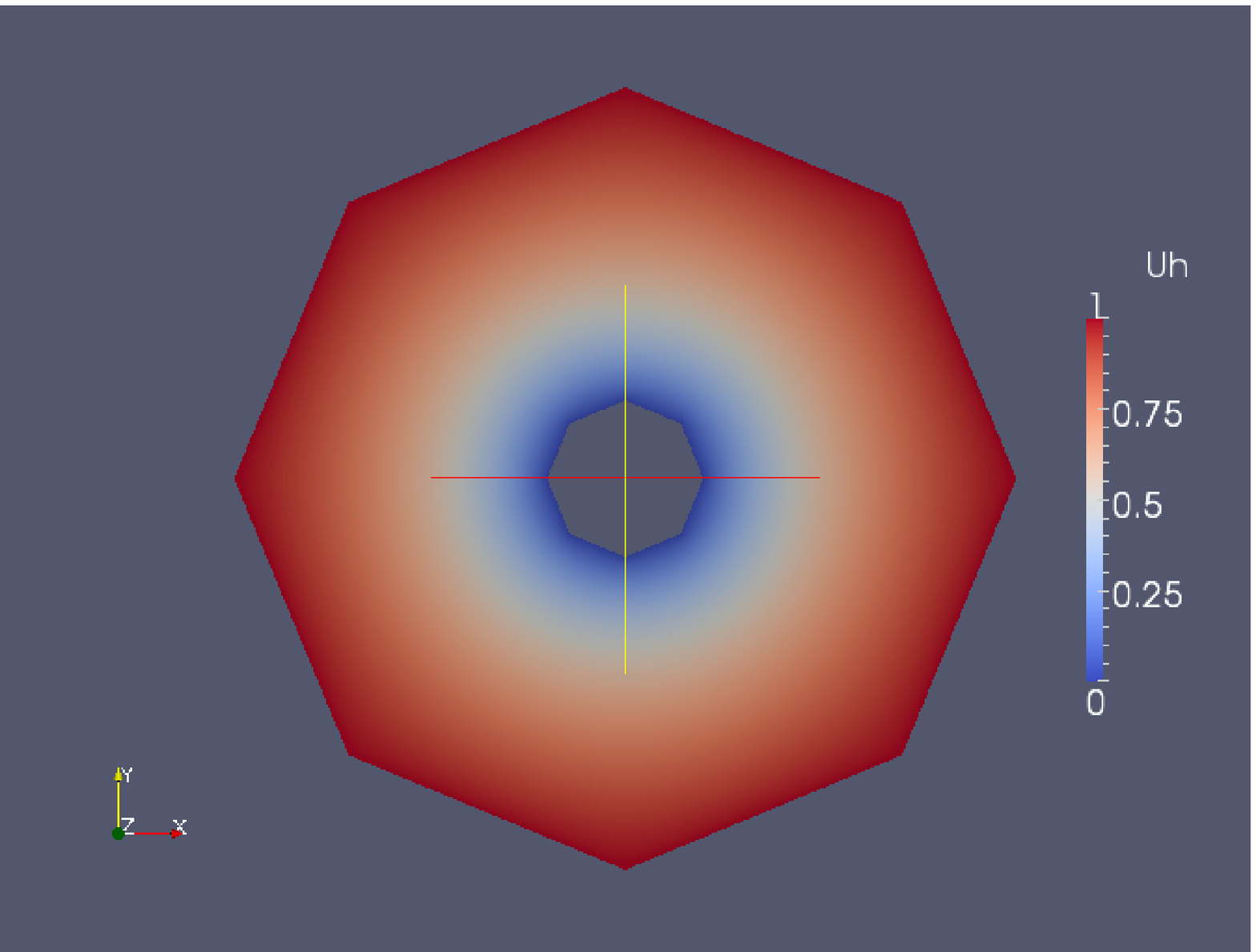}}
				\subfigure[]{\includegraphics[width=3.1in]{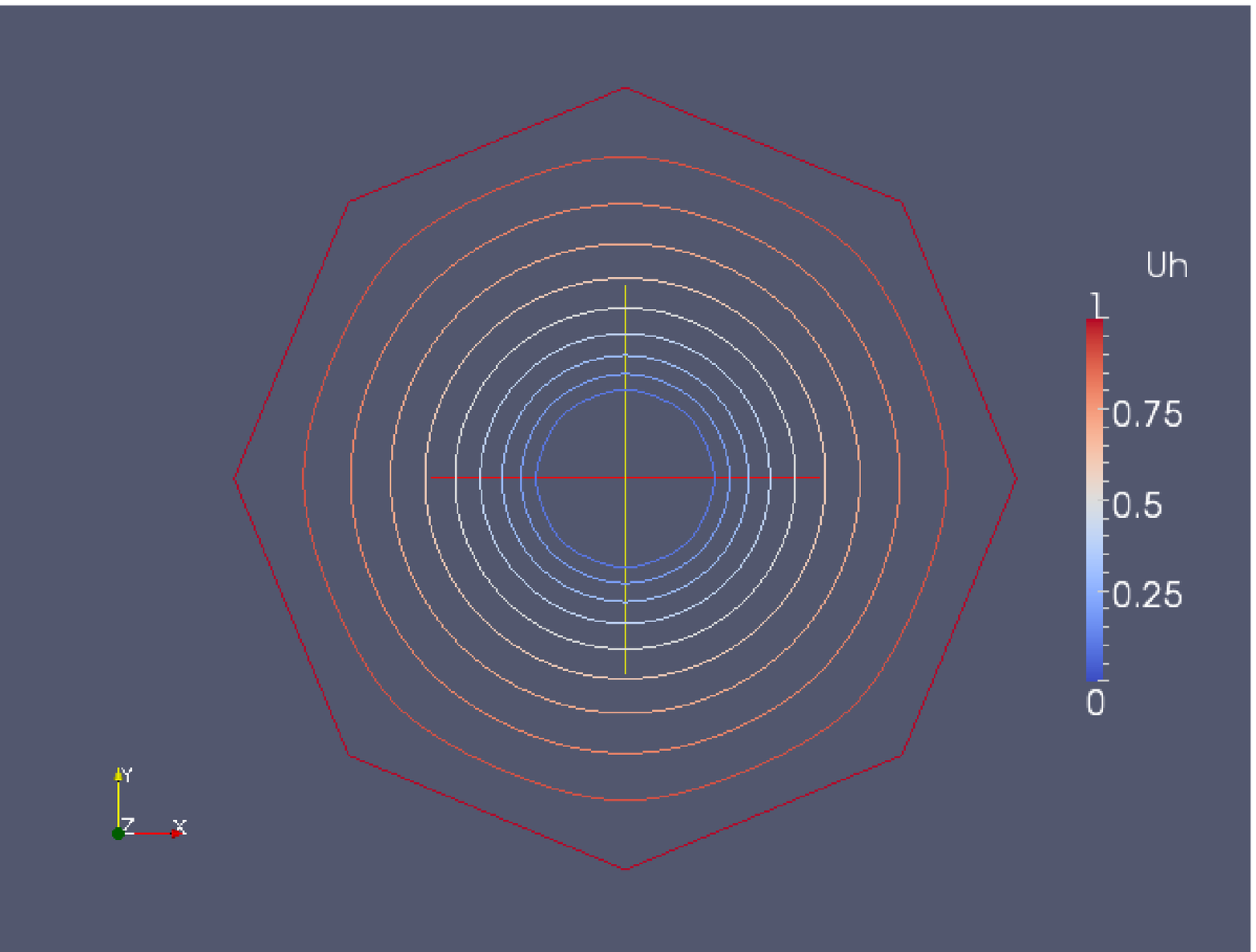}}\\
				\caption{We now consider the chemotaxical effects with the aforementioned suggestion of a drift term that induces these effects. We use the drift function suggested in Section~\ref{sec:Prob} with $\theta = 0.5$: $b(x) = (0.5x_1e^{-(|x| - R)}, 0.5x_2e^{-(|x|-R)})^T$. It should be noted that this function has no empirical or theoretical background and was only chosen because we expect this function to confer several of the properties we expect from chemotaxis: importantly, a greater effect as the bacterium gets closer to the cell; this effect should decay exponentially because it is based on the diffusion of particles. The most prominent thing we note is that chemotaxis is a process based upon concentrations of amino acids in a fluid and the diffusions of these concentrations so the effects of this process decay exponentially with distance. Due to this, in our model at least, the bacterium significantly increases its chances of escaping by starting only a relatively small distance further from the neutrophil. In addition to this we also note that despite the very limited movement we have put on the cell, with the $\theta = 0.5$, term this strategy seems at least as good as the previous strategies in this environment and requires far less movement and therefore energy on the part of the neutrophil. The PDE governing the probability distribution here is given by: $(0.5 x_{1} e^{-(\abs{x} - R)}, 0.5 x_{2} e^{-(\abs{x} - R)})^{\mathrm{T}} \cdot f(x) + \frac{1}{2} \Delta f(x) = 0$ in $\Omega$, $f = 0$ in $\pd B_{0.2}(0)$ and $f = 1$ in $\pd B_{1}(0)$, where $\Omega = \{ x \in \R^{2} \vert R \leq x \leq kR \}$.}
				\label{fig:Drift}
			\end{figure}
			\begin{figure}[h!]
				\centering
				\subfigure[]{\includegraphics[width=3.1in]{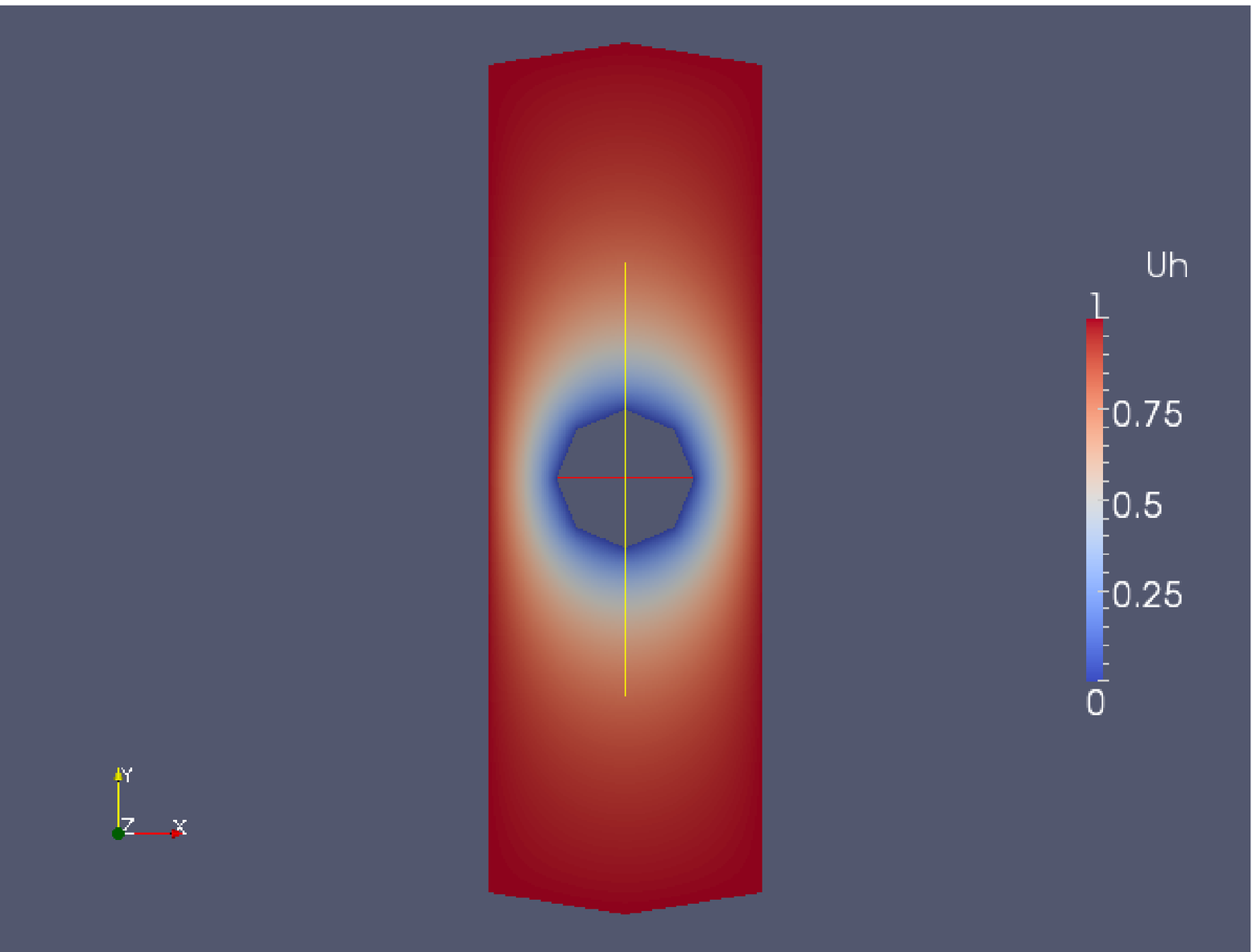}}
				\subfigure[]{\includegraphics[width=3.1in]{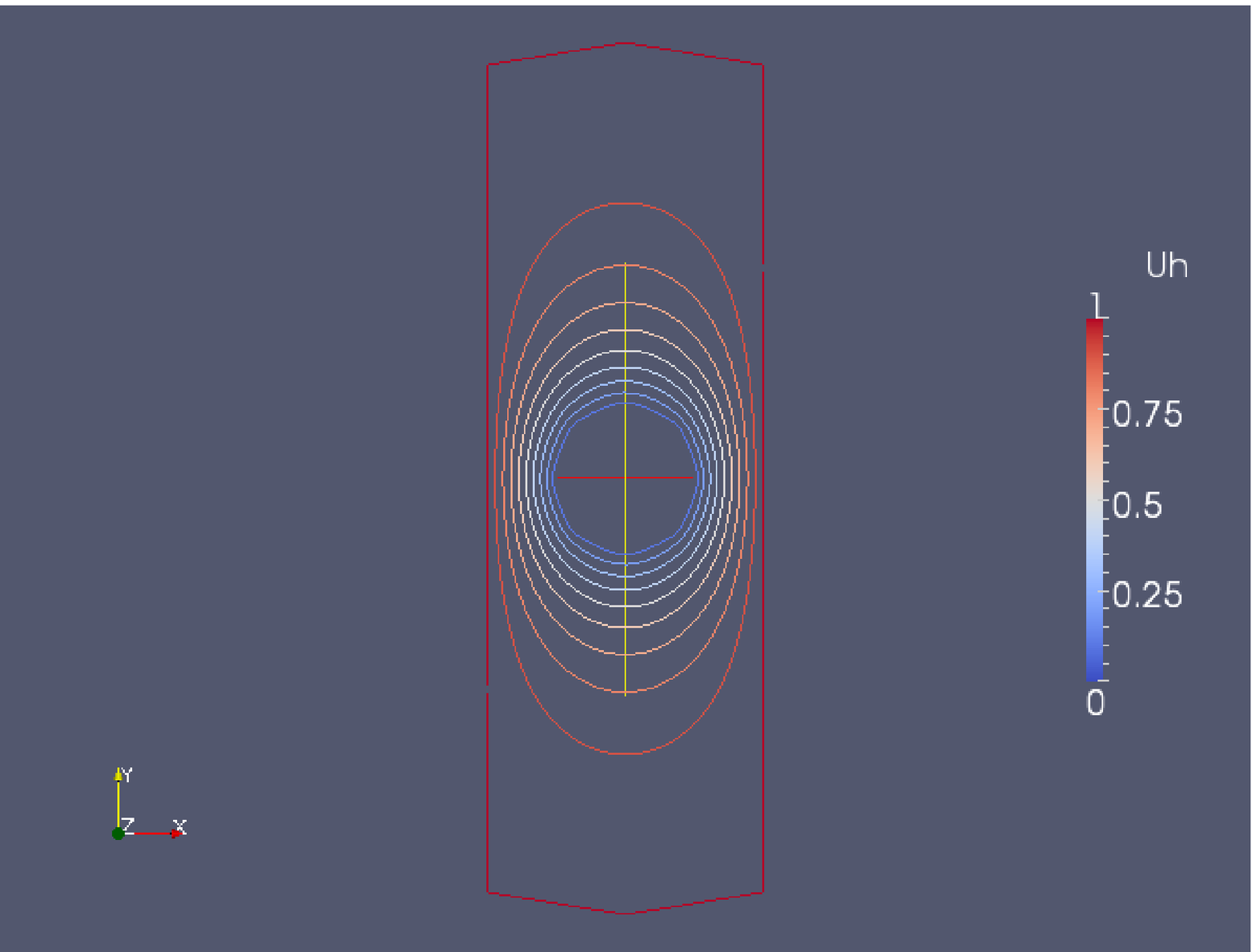}}\\
				\caption{We again consider the chemotaxical effects with the same drift term as Figure~\ref{fig:Drift}. However, here we consider the domain of a capillary, which we assume, for the moment, to be perfectly rectangular. Furthermore, we suppose that if the bacterium hits either capillary wall, or as before, it reaches an arbitrary distance away from the neutrophil, then it has escaped. This escaping can be viewed as a bacterium passing through the capillary wall. The results are similar to before, but in this case the relationship between probability of escape and distance is more pronounced on this longer domain. As we would expect, the bacterium also stands a greater chance of escape if it moves horizontally away from the neutrophil than vertically through the same distance. The equations for this situation are as given for Figure~\ref{fig:Drift}, but instead, $f = 1$ in $\Gamma = \{ x \in \R^{2} \vert \abs{x} =  \sqrt{10}, -1 \leq x_{1} \leq 1 \} \cup \{ x \in \R^{2} \vert x_{1} = -1 \} \cup \{ x \in \R^{2} | x_{1} = 1 \}$ and $\Omega = \{ x \in \R^{2} \vert R \leq x \leq kR \}$.}
				\label{fig:DriftCapillary}
			\end{figure}
			\begin{figure}[h!]
				\centering
				\subfigure[]{\includegraphics[width=3.1in]{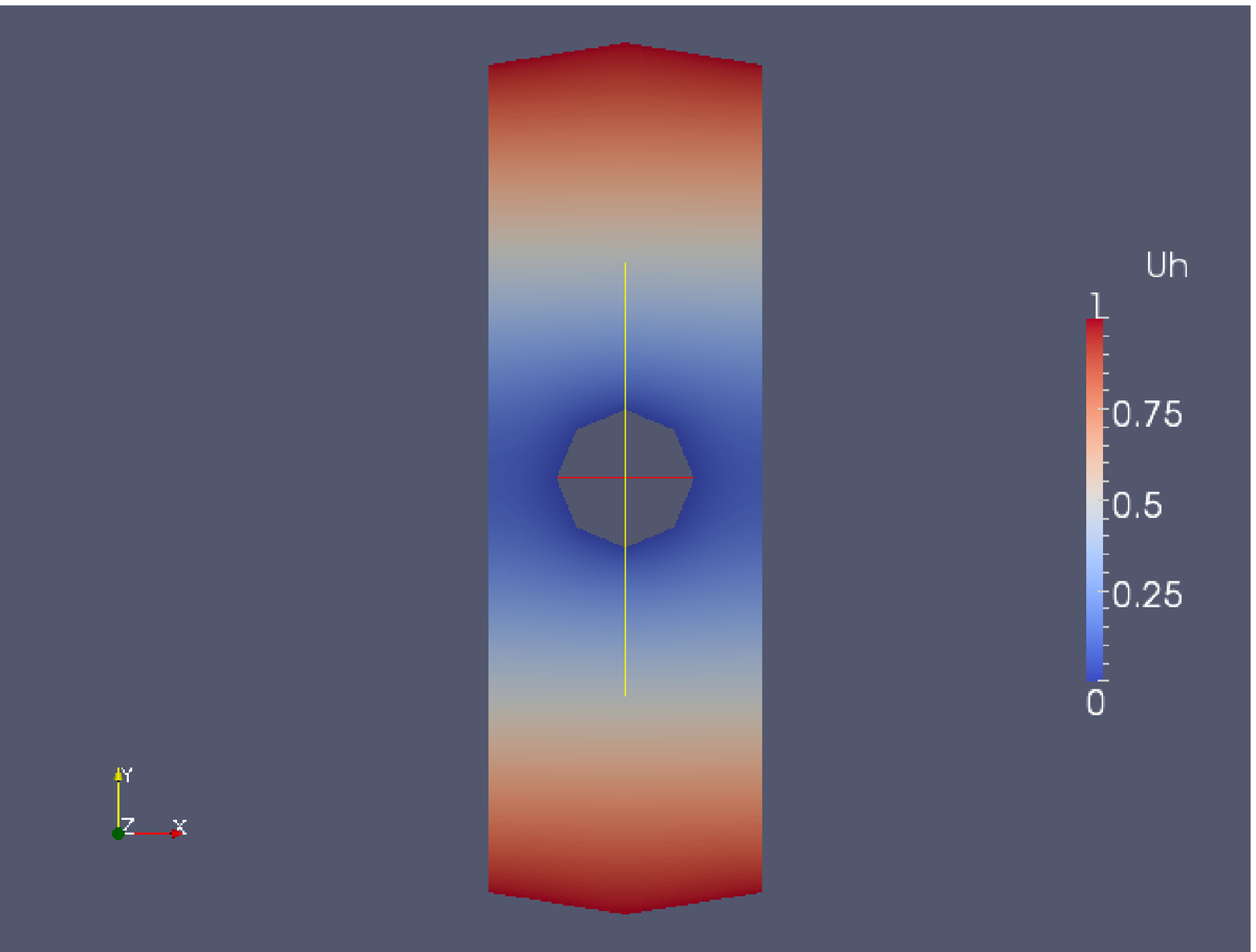}}
				\subfigure[]{\includegraphics[width=3.1in]{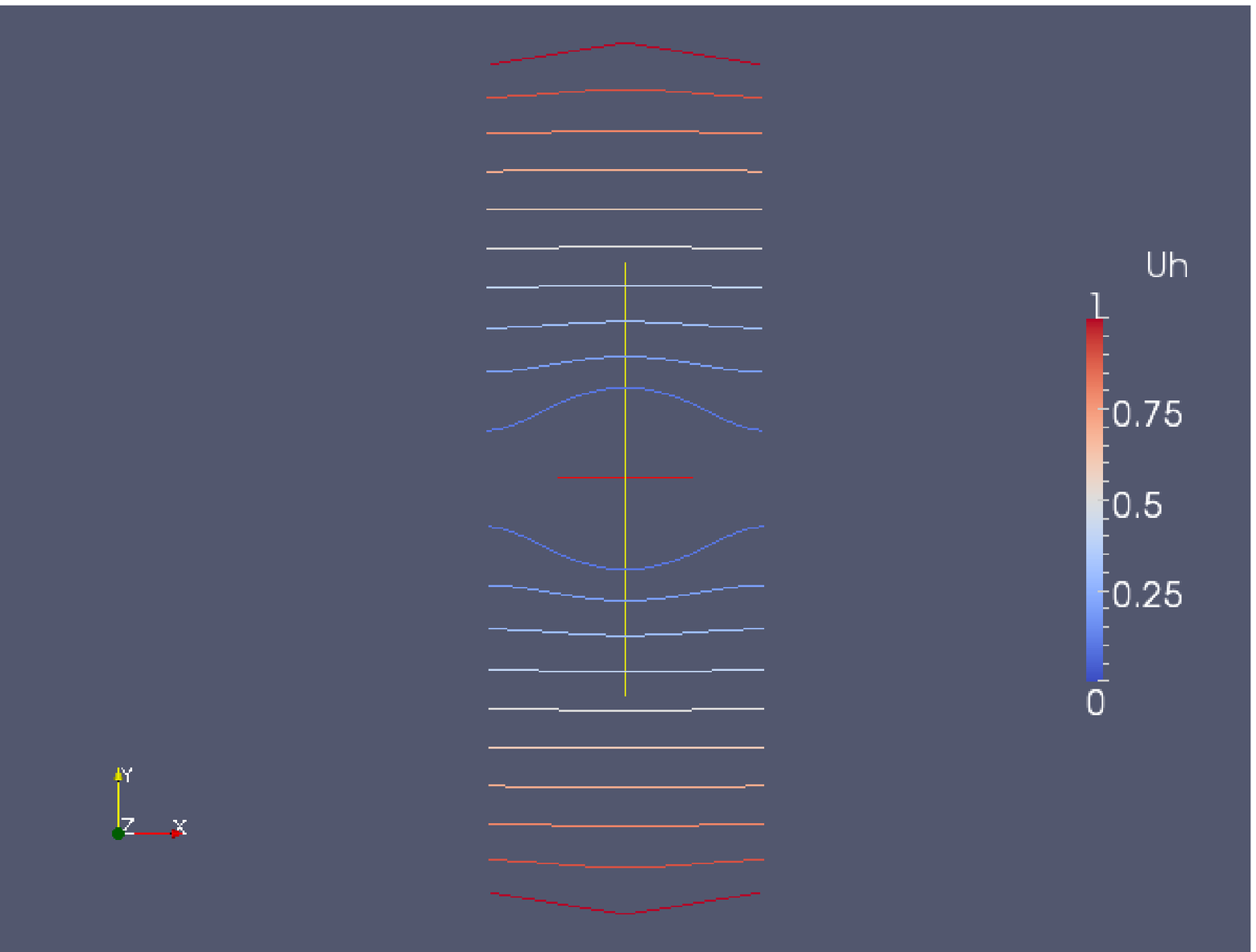}}\\
				\caption{We now consider the same situation as Figure~\ref{fig:DriftCapillary}, but in this case we assume that the capillary wall is a barrier that the bacterium cannot pass through. We model this as a reflective boundary. This is done through the use of Neumann boundary conditions on the reflective boundaries as in \cite{Br76} in addition to the already present Dirichlet conditions. In real-life we would expect the boundary to have attributes of both this figure and the previous figure. However, of the two we assume this to be the more likely and thus we continue with this approximation whilst keeping in mind the results of the previous figure as a distinct possibility that cannot be excluded in any extensive future modelling. In this figure, we also add the assumption that the bacterium will take up a larger proportion of the capillary and therefore, we limit its horizontal movement even further, but increase its size to compensate. The results are similar to before, but in this case horizontal movement of the bacterium away from the neutrophil is rewarded less than vertical movement in contradiction to the previous figure. The equations governing the escape probabilities in this case are: $\left( 0, 0.5 x_{2} e^{-(\abs{x} - R)} \right)^{\mathrm{T}} \cdot f(x) + \frac{1}{2} \Delta f(x) = 0$ in $\Omega$; $f = 0$ in $\pd B_{0.5}(0)$; $f = 1$ in $\Gamma_{1}$ and $\nabla \cdot f = 0$ in $\Gamma_{2}$, where $\Gamma_{1} = \{ x \in \R^{2} \vert \abs{x} =  \sqrt{10}, -1 \leq x_{1} \leq 1 \}$; $\Gamma_{2} = \{ x \in \R^{2} \vert x_{1} = -1 \} \cup \{ x \in \R^{2} | x_{1} = 1 \}$ and $\Omega = \{ x \in \R^{2} \vert R \leq x \leq kR \}$.}
				\label{fig:DriftCapillaryReflect}
			\end{figure}
			\begin{figure}[h!]
				\centering
				\subfigure[]{\includegraphics[width=3.1in]{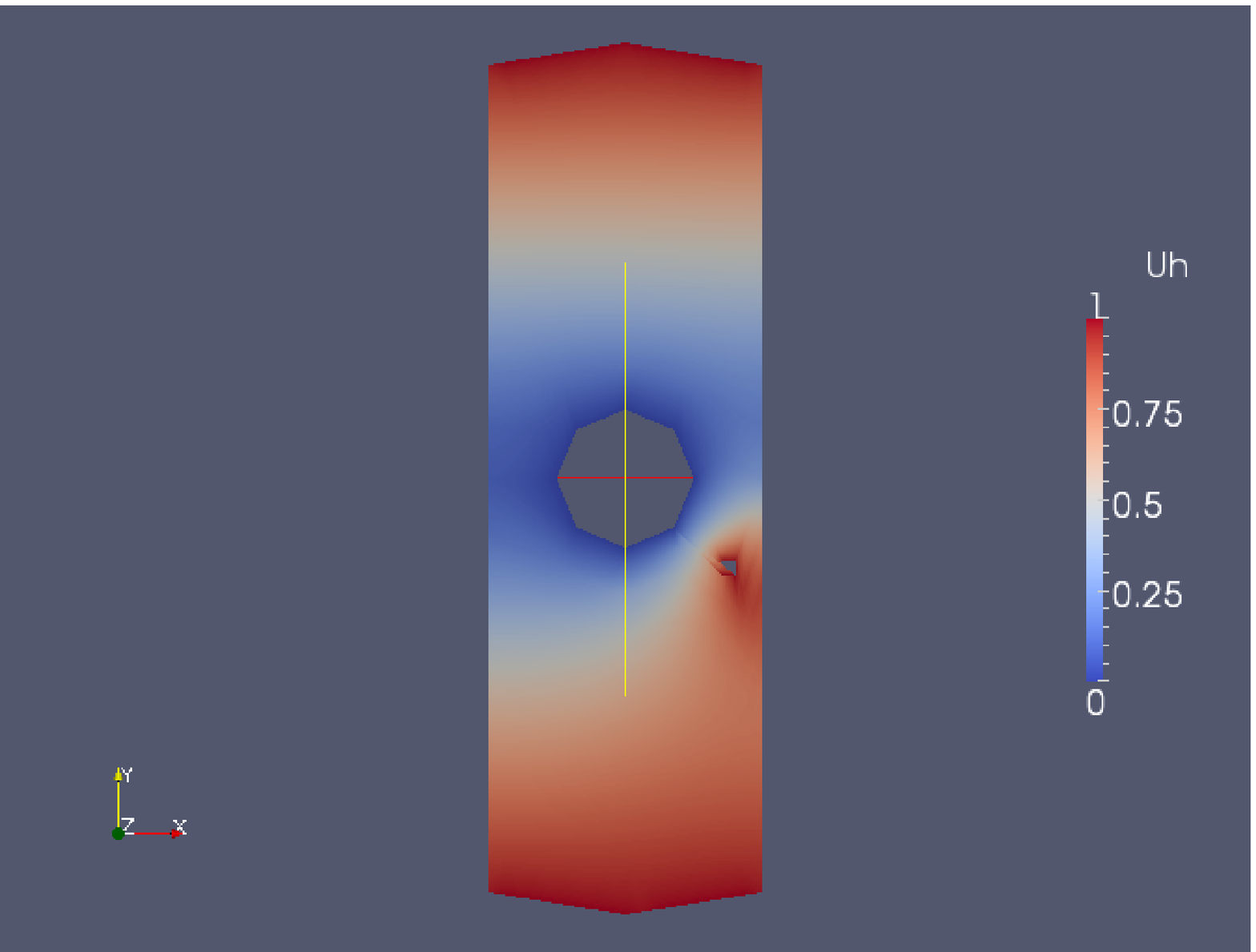}}
				\subfigure[]{\includegraphics[width=3.1in]{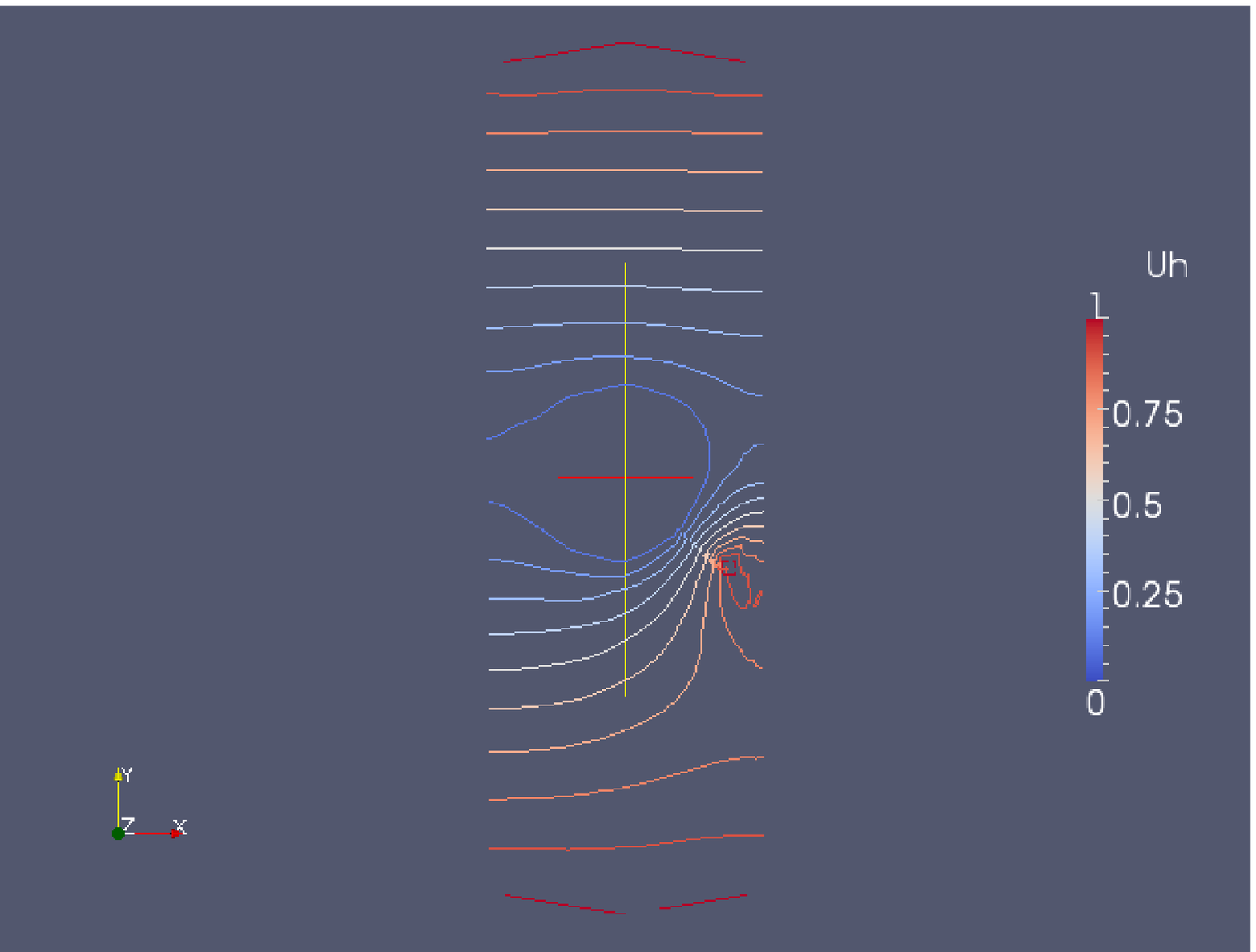}}\\
				\caption{This time we remove the drift term and, as before, we attempt to model neutrophil and bacterium movements in the domain of a capillary with impassable boundary walls; modelled as reflective boundaries. However, on this occasion we introduce an obstacle, which we view as equivalent to a passing red blood cell. We assume that the bacterium can enter or feed off the cell causing irreparable damage and thus we assume that if the bacterium reaches the cell then the neutrophil has failed in its attempts to neutralise it. It is perhaps surprising that the obstacle has such a great effect on the neutrophil's performance. In fact, it severely limits the neutrophil's ability to capture bacteria in the vicinity of the obstacle and, with several obstacles in place, one would expect the neutrophil's effectiveness to be drastically reduced. In this case we change both the domain and the drift function from the equation used in previous figures, as well as those derived in Section~\ref{sec:Prob}. We assume that the capillary is perfectly rectangular, as in previous figures, and we represent the obstacle as a square removed from the domain. Thus, the equations characterising this situation are: $\frac{1}{2} \Delta f(x) = 0$ in $\Omega$; $f = 0$ in $\pd B_{0.5}(0)$; $f = 1$ in $\Gamma_{1}$; $f = 1$ in $\Gamma_{obstacle}$ and $\nabla \cdot f = 0$ in $\Gamma_{2}$ where $\Gamma_{1} = \{ x \in \R^{2} \vert \abs{x} =  \sqrt{10}, -1 \leq x_{1} \leq 1 \}$; $\Gamma_{2} = \{ x \in \R^{2} \vert x_{1} = -1 \} \cup \{ x \in \R^{2} | x_{1} = 1 \}$; $\Gamma_{obstacle} = \mbox{the perimeter of the square spanned by the points} \; (0.8,0.8); (0.9,0.8); (0.8,0.7) \; \mbox{and} \; (0.9,0.7)$ and $\Omega = \{ x \in \R^{2} \vert R \leq x \leq kR \}$}
				\label{fig:Obstacle}
			\end{figure}

	\section{Empirical Model for Neutrophil Motion}  \label{sec:ModelComparison}
		\subsection{Experimental Data}
			Empirical data from \cite{Li2008} suggest that in the absence of the chemoattractant signal from a bacterium, neutrophils move in a roughly zig-zag manner. It is suggested that this could be a strategy for improving the efficiency with which the cell searches an area of space for bacteria. We now compare the PDE models developed earlier to the experimental data using a model based on its statistical analysis, and postulate some ways of determining how efficient the search strategy is compared to, say, a simple random walk.
			The data suggest that the cell moves in an essentially straight line for a random time with an Exponential(0.67) distribution, measured in seconds.  After each such time period, the cell makes a turn with magnitude determined by another Exponential(0.67) distribution (in radians). The history of left/right turns can be modelled as a discrete Markov chain in which the probability of turning the opposite direction to previously (that is, a left turn followed by a right turn or \textit{vice versa}) is approximately 2.1 times the probability of turning in the same direction as before: the analysis of 4822 turns from 12 observed trajectories reveals that opposite-type turning pairs outnumbered same-type pairs 3623 to 1559. The cell is assumed to move at a constant speed of $7.46\times10^{-6} \mbox{ms}^{-1}$ \cite{Li2008}.

		\subsection{Comparison with PDE Model}
			We should expect our reaction-diffusion model for chemical activators on the membrane to produce results similar to those of this second, less complicated model, backed up by experimental data.  It is suggested by Li \emph{et al.} that the cell's movement is controlled by the formation of pseudopods, which cause the cell to move in the direction they are extended. The zig-zag path indicates that the production of pseudopods should therefore occur in an alternating left-right sequence and this is behaviour we can examine in the results of the PDE model.

			The PDE model shows that the formation of a pseudopod $A$ is followed by a period in which the pseudopod splits in two. The stochastic forcing term in the reaction-diffusion equation \eqref{eq:Neilson-a} causes one of these peaks to eventually dominate and grow into a new pseudopod.  This is most likely to be the peak closest to $A$'s parent. New pseudopods therefore form more often between the two most recent extensions, and this accounts for the observed alternating turn behaviour. We postulate that this final aspect of the pseudopods' behaviour may be predicted by the PDE model: see, for example, the extensions section.

		\subsection{Analysis of the Search Strategy}
			The cell's aim should be to quickly locate the chemoattractant signal of a bacterium, so we expect it to have evolved a search strategy for doing this in an efficient way.  A reasonable requirement is for the cell to explore the space rapidly without covering the same region too often or getting stuck in one place.  For comparison we may consider how the strategy improves upon a typical random walk without directional bias.
				\begin{figure}[h!]
					\centering
					\includegraphics[height=110mm]{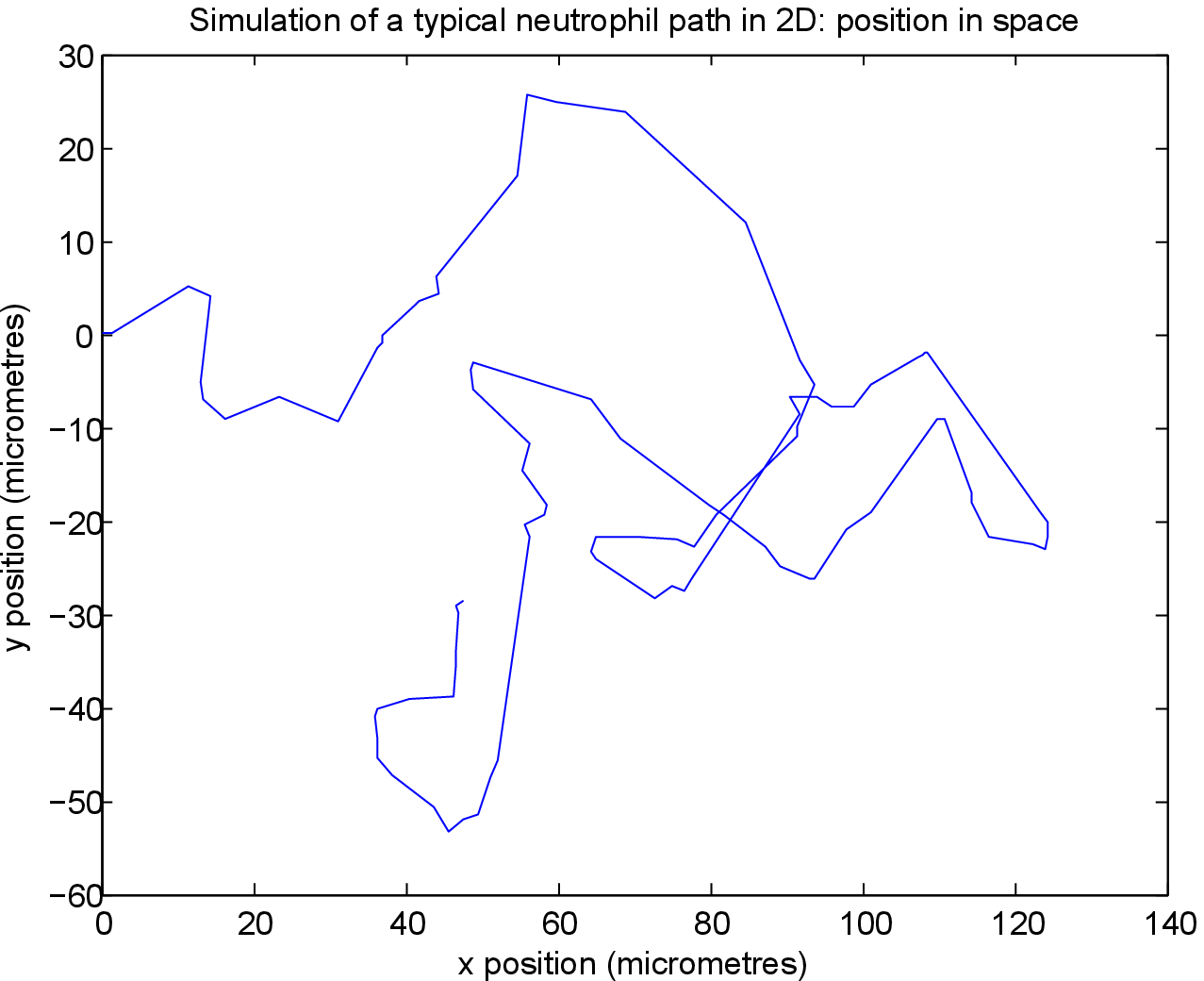}
					\caption{Position plots of a simulated cell movement, starting at the origin.}
					\label{fig:cellposition}
				\end{figure}
				\begin{figure}[h!]
					\centering
					\includegraphics[height=110mm]{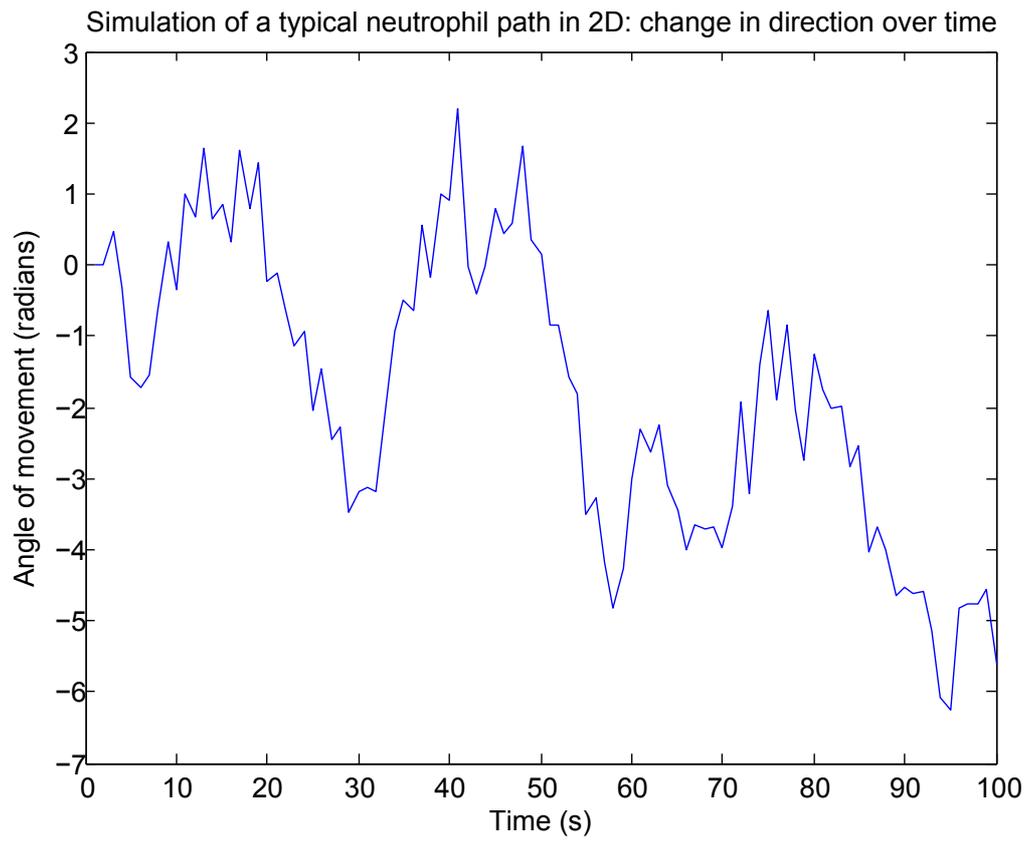}
					\caption{Direction of movement over time for the same simulation.}
					\label{fig:cellangle}
				\end{figure}
				\begin{figure}[h!]
					\centering
					\includegraphics[height=110mm]{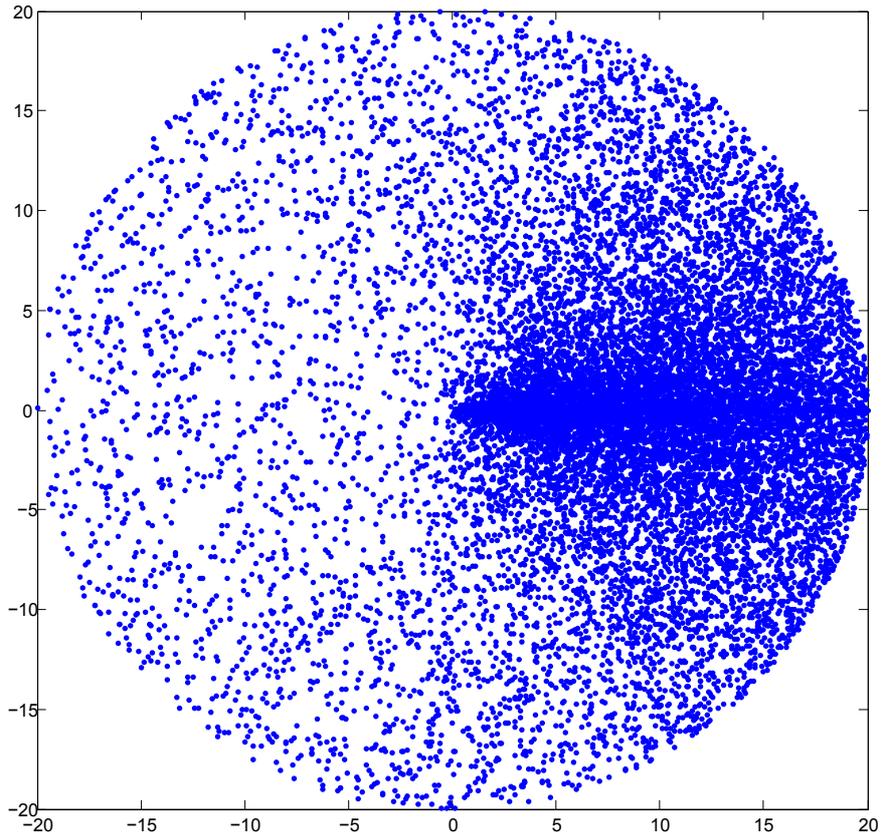}
					\caption{Plots of escape position for 200 uniformly spaced choices of circle radius from 0.1 - 20, 100 independent simulations per radius.}
					\label{fig:escapepoints}
				\end{figure}
				\begin{figure}[h!]
					\centering
					\subfigure[$r = 20$]{\includegraphics[width=3.3in]{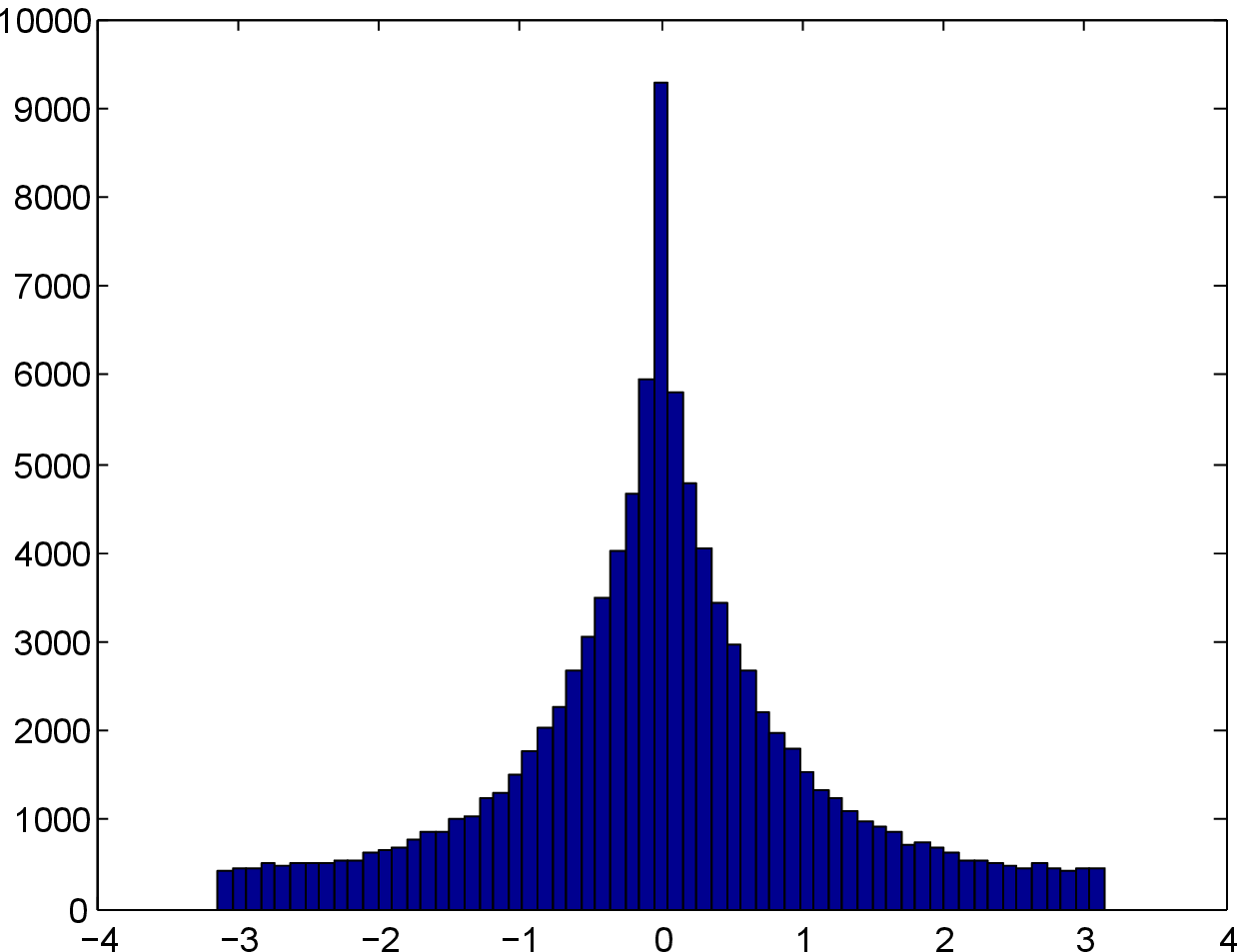}}
					\subfigure[$r = 50$]{\includegraphics[width=3.3in]{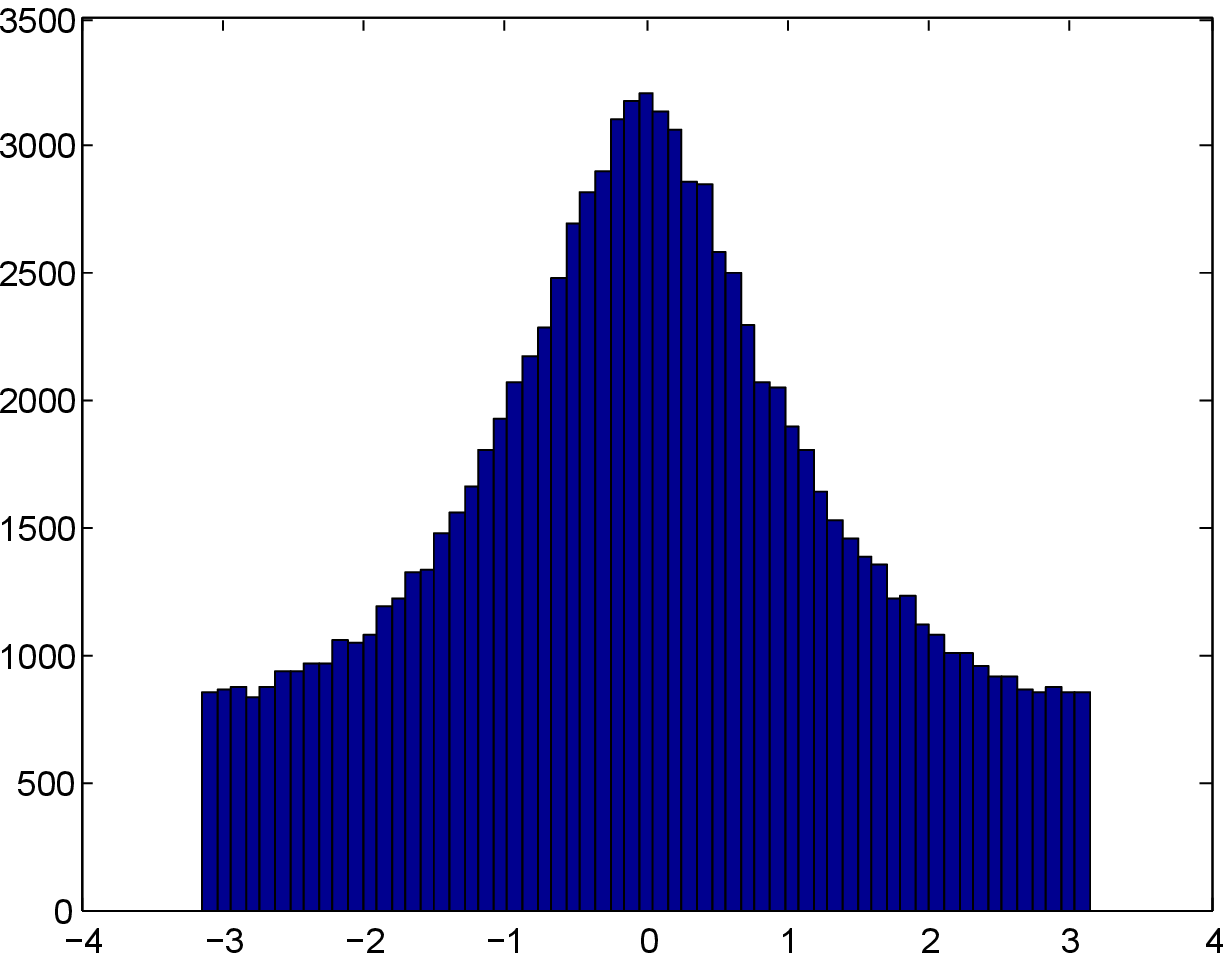}}\\
					\subfigure[$r = 100$]{\includegraphics[width=3.3in]{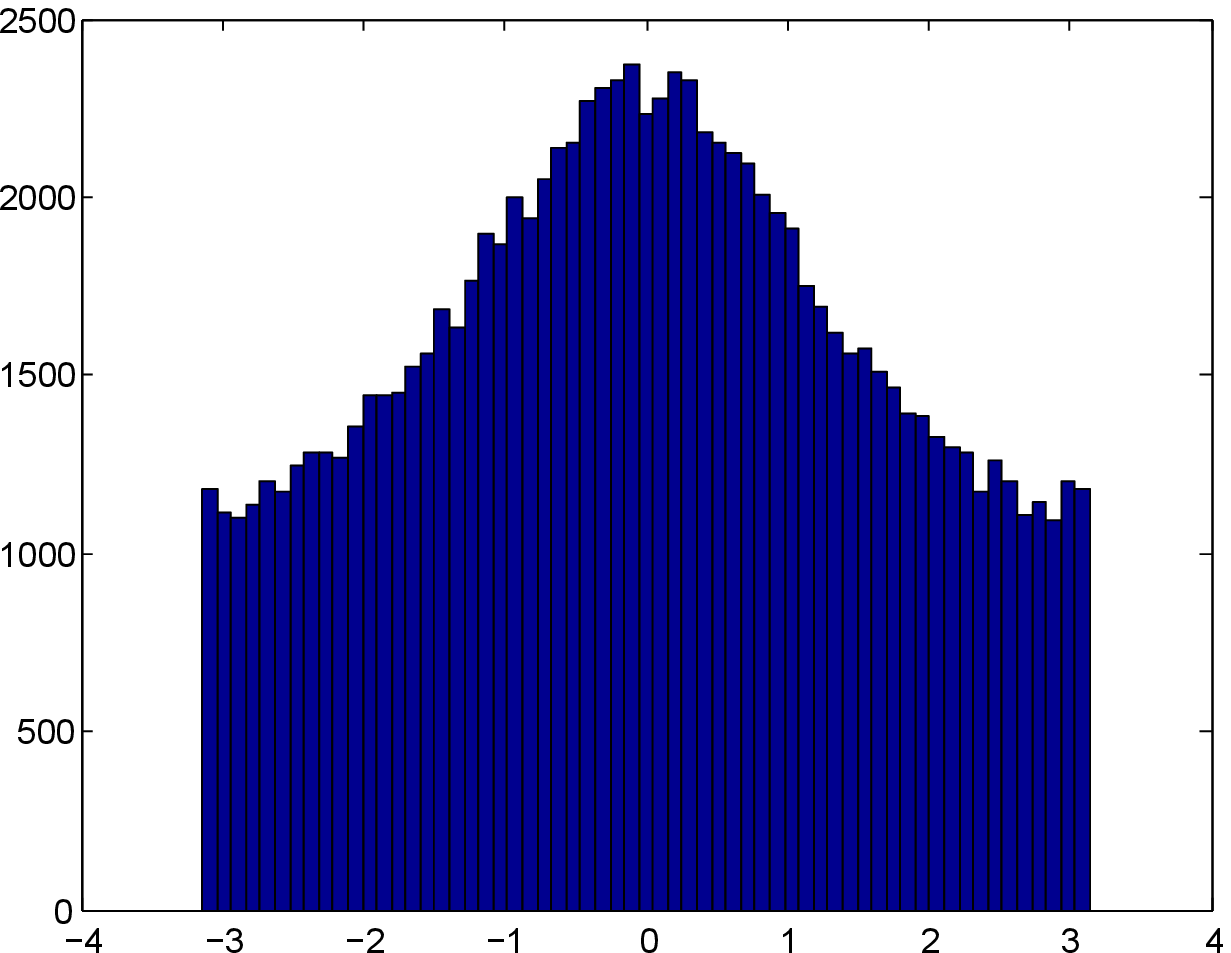}}
					\subfigure[$r = 200$]{\includegraphics[width=3.3in]{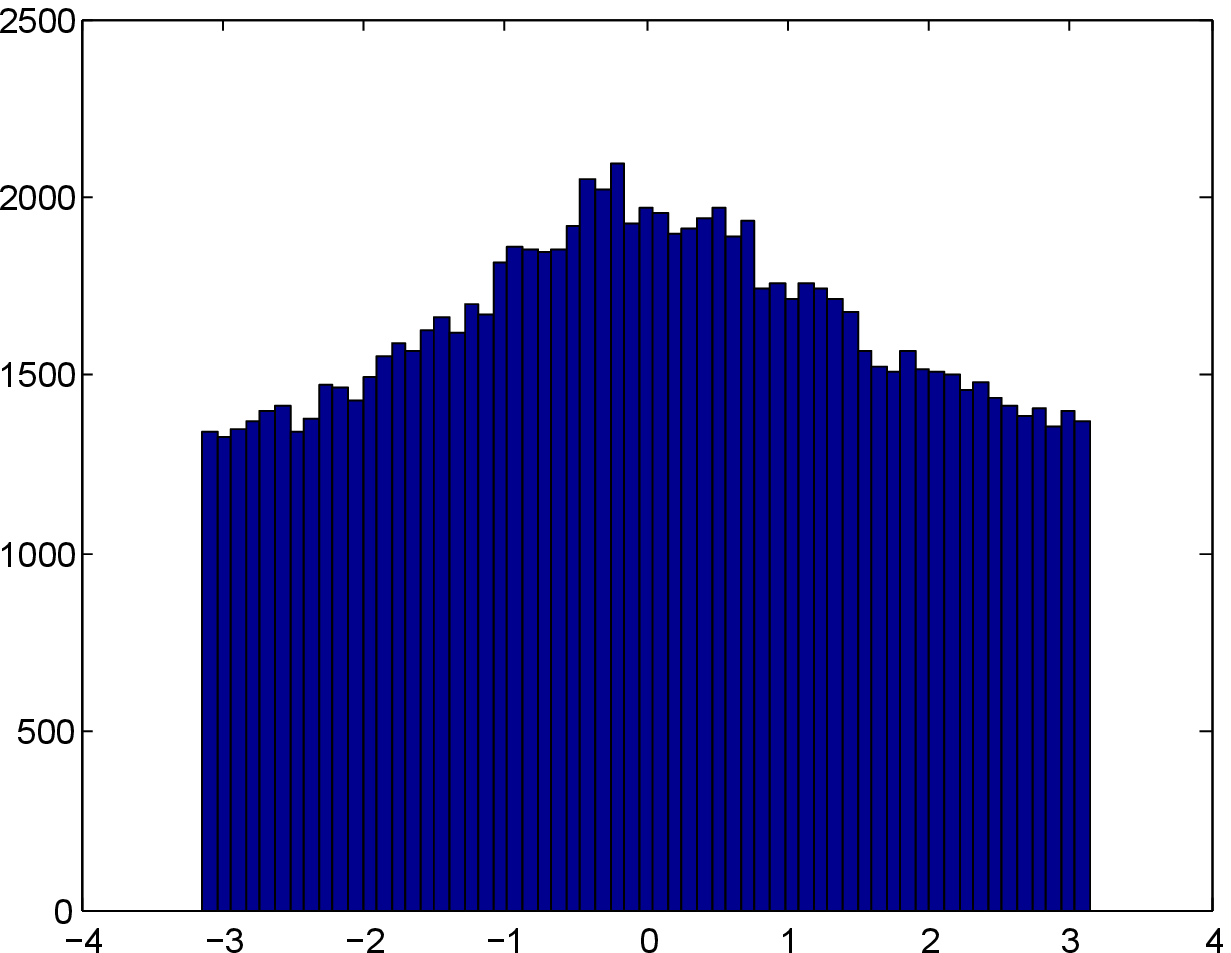}}\\
					\caption{Histograms comparing the distribution of exit angles for radii r = 20, 50, 100 and 200, based on 100,000 simulations each.}
					\label{fig:angles}
				\end{figure}
				\begin{figure}[h!]
					\centering
					\includegraphics[scale=0.75]{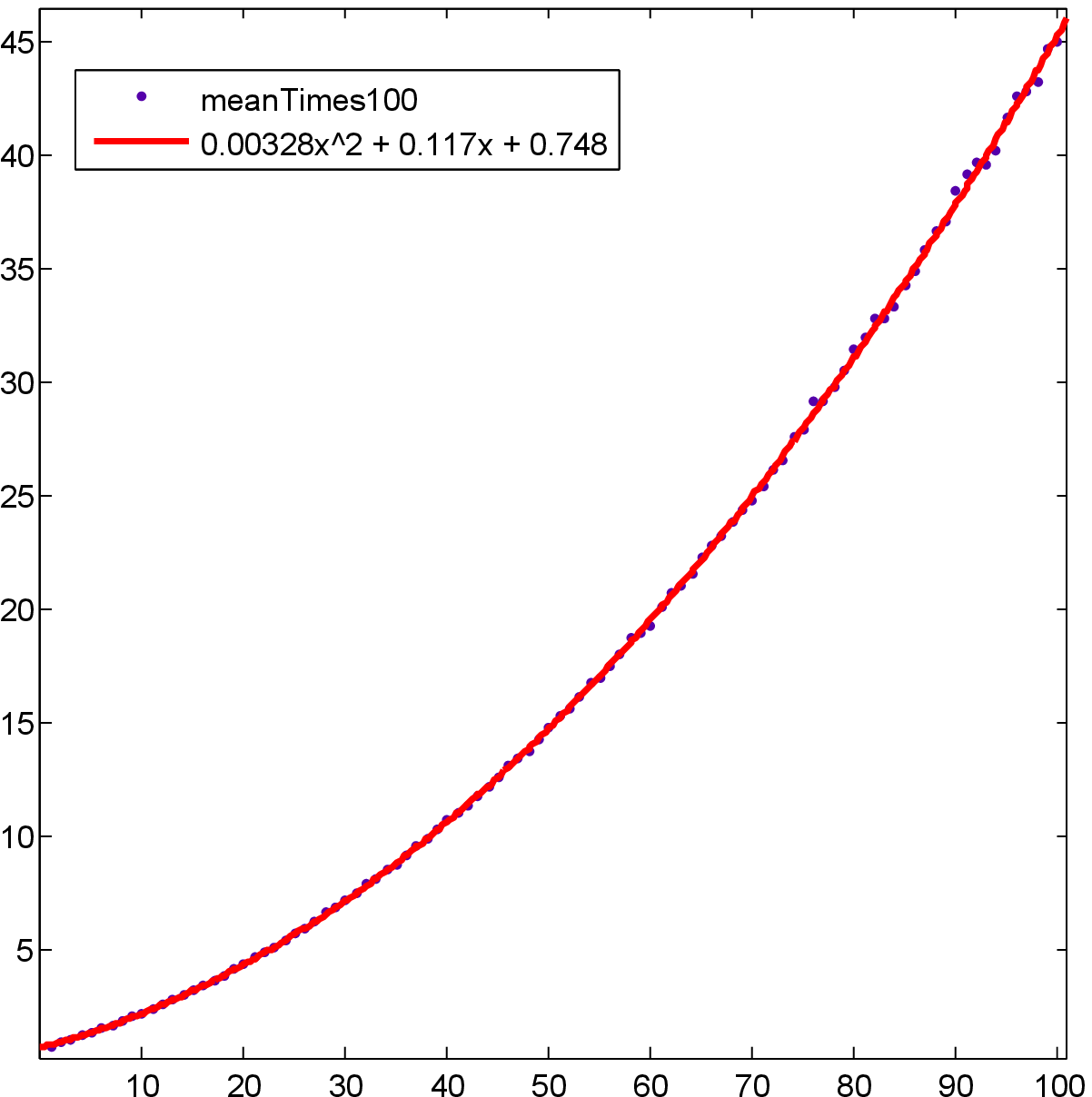}
					\caption{Mean escape times plotted against radius of the circle, based on 10,000 simulations.}
					\label{fig:escapetimes}
				\end{figure}

			An example cell path is given by Figure~\ref{fig:cellposition}; the turn history for the same path is illustrated in Figure~\ref{fig:cellangle}. The type of motion modelled here is a \textit{persistent} random walk, since the direction of movement is dominated by a slowly-changing parameter $\theta$ around which the frequent turns impose a highly variable $\phi$ term. We can observe this tendency of the cell to keep moving in roughly the same direction by conditioning its motion to begin in the positive $x$ direction and observing the points at which it exits a circle of a given radius $r$ (centred at the origin). Figure~\ref{fig:escapepoints} plots the exit points for 200 choices of radius at regular intervals from 0.1 to 20. The simulation for each radius was run 100 times and all exit points correspond to independent paths. The high concentration of points surrounding the positive $x$-axis clearly indicates the tendency of the cell to keep moving in the direction it started on average; however we have also seen how the high frequency of turns within this overall trajectory causes the zig-zag motion mentioned earlier. The result is a `sweeping' procedure across the plane that may help the cell to explore a large area efficiently and without going back on itself too often.

			Histograms for the cases $r = 20, 50, 100, 200$, as shown in Figure~\ref{fig:angles} (100,000 simulations each) further press the point, and we note for these higher radii that the influence of the starting direction on the exit point diminishes as the radius increases. This shows that while the cell trajectory has a high level of local persistence, globally it does not confine itself to a particular area of the plane. We conclude that it is unlikely for the cell to continue moving in the same average direction for an excessive amount of time; this is an important consideration in the real-world application of the model where the space is not infinite and therefore not all areas are equally worth exploring. We see that even though the cell in our model is not confined to a finite space, its tendency is not to continue moving in the same direction for an excessive time period.

			Figure~\ref{fig:escapetimes} shows the mean exit times for each integer radius 10-200. The curve is well-fitted by a quadratic $y = 0.00328x^2 + 0.117x + 0.748$. Long-term behaviour is therefore that movement from the origin occurs on the timescale $r^2$, like a random walk, whereas short-term behaviour is on the much faster timescale $r$. This further backs the conclusion that the cell explores local regions quickly, but dwells for a longer time in larger areas.

	\section{Conclusion \& Further Work} \label{sec:Conclusion}
		In this report, we have taken the model of \cite{Neil10} and reduced it to a single reaction-diffusion PDE based on simulation data provided in the paper. In addition we have also proposed a new model for the movement of the neutrophil membrane using a mean curvature approach. Numerical methods for the reaction-diffusion PDE and the neutrophil movement have been developed and implemented to simulate the system. These simulations do not support the observations made in \cite{Neil10}, namely the pseudopod splitting, suggesting that our reduced model was oversimplified. On reintroduction of the local inhibitor PDE, resulting simulations show this pseudopod splitting behaviour. Therefore, our model gives the same results as in \cite{Neil10}, with the additional benefit of having a simpler model for the neutrophil movement, which does not involve solving a non-linear ODE.

		In addition to the PDE approach, we also modelled the neutrophil in its pursuit of bacteria through the use of SDEs. On an infinite plane it was suggested that movement along a constant vector would be the most efficient strategy for the neutrophil, but on the restricted environments commonly found in practice our model has found this particular strategy to be ineffective. Moving on from this conclusion, we then found models that included chemotaxical effects to be far more fruitful for the neutrophil in these situations, as one would suspect. However, the model showed a neutrophil solely based on a response to chemoattractants to be lacking over longer distances; suggesting that a compromise between these two strategies would be an optimum evolutionary choice for a neutrophil that may find itself in a variety of surroundings and would need to perform accordingly. Results, at least partially, supported by our other approaches in the relevant sections. Finally, our model portrayed a very different picture when obstacles were added. Though our model is a very rough approximation, this perhaps suggests that the neutrophil will be at a slight disadvantage in densely populated areas of cells.
		
		Finally, our simulations based on the experimental data indicated that the cell's search strategy is to zig-zag frequently back and forth over an area while travelling in one dominant direction for longer periods of time. The high persistence of the cell's motion means that it explores its local area much more quickly than a standard random walk while dwelling for longer in larger areas, as we saw from the analysis of exit times and exit points for circles of varying radii. The PDE model reflects some of the observations regarding pseudopod behaviour.

		One way to extend the work done in this project is to consider an alternative model for the neutrophil membrane movement. The natural extension would be to consider a Willmore flow model since this minimises an energy functional related to the curvature of the membrane. We can also incorporate the bacterium via the full expression of the stochastic term $(R_{0} \neq 0)$, and the SDE model developed in Section~\ref{sec:Prob} can be used as a benchmark for comparison. In particular, it would be interesting to consider the effect (caused by morphing of the membrane) of chemoattractants ($\theta$) on the neutrophil movement. Finally, we could compare simulated paths produced by both the PDE model and the empirical model using statistical comparison methods. Properties such as the expected exit times and exit position probabilities could be used for these comparisons. The PDE model could also be extended to better reflect the observed positioning of new pseudopods.

	\newpage

	\addcontentsline{toc}{section}{References}
	\bibliography{BiomembranesReport}

	\begin{table}
		\centering
		\begin{tabular}{|c|l|c|}
			\hline
			Quantity & Description & Value
			\\
			\hline
			$r_{a}$ & Decay rate of activator & $2 \times 10^{-2}$
			\\
			$r_{b}$ & Decay rate of global inhibitor & $3 \times 10^{-2}$
			\\
			$r_{c}$ & Decay rate of local inhibitor & $1.3 \times 10^{-2}$
			\\
			$D_{a}$ & Diffusion Coefficient of activator & $4 \times 10^{-7}$
			\\
			$D_{b}$ & Diffusion Coefficient of global inhibitor & $4$
			\\
			$D_{c}$ & Diffusion Coefficient of local inhibitor & $2.8 \times 10^{-6}$
			\\
			$b_{a}$ & Basal activator production rate & $1 \times 10^{-1}$
			\\
			$b_{c}$ & Local inhibitor production rate & $5 \times 10^{-3}$
			\\
			$s_{a}$ & Saturation Coefficient & $5 \times 10^{-4}$
			\\
			$s_{c}$ & Michaelis-Menten constant & $2 \times 10^{-1}$
			\\
			\hline
		\end{tabular}
		\caption{Table of Reaction-Diffusion equation constants}
		\label{tab:EqnConstants}
	\end{table}

\end{document}